\documentclass[final,3p,12pt,sort&compress]{elsarticle}
\makeatletter
\def\ps@pprintTitle{%
 \let\@oddhead\@empty
 \let\@evenhead\@empty
 \def\@oddfoot{\centerline{\thepage}}%
 \let\@evenfoot\@oddfoot}
\makeatother
\setcounter{tocdepth}{1}
%\documentclass[english,aip,jmp,preprint, onecolumn, 12pt, sort&compress]{revtex4-1}
%\draft
%\selectlanguage{English}
%\makeatletter
%\let\l@ENGLISH\l@english
%\makeatother
%\makeatletter
%\def\ps@pprintTitle{%
% \let\@oddhead\@empty
% \let\@evenhead\@empty
% \def\@oddfoot{\centerline{\thepage}}%
% \let\@evenfoot\@oddfoot}
%\makeatother
%\usepackage{graphicx}   % need for figures
%\usepackage{verbatim}   % useful for program listings
%\usepackage{color}      % use if color is used in text
%\usepackage{subfigure}  % use for side-by-side figures
%\raggedbottom           % don't add extra vertical space

%\pagestyle{empty}       % use if page numbers not wanted

%\usepackage{fullpage}
%\usepackage[margin=2cm]{geometry}
%\usepackage{indentfirst}
%\usepackage{verbatim}
\usepackage{comment}
\usepackage{amssymb,amsmath, amsthm}
\usepackage{slashed}

\usepackage{epigraph}
\epigraphsize{\small\itshape}
\setlength\epigraphwidth{0.9\textwidth}

\usepackage[bookmarks]{hyperref}
%\setcounter{tocdepth}{1}
%\numberwithin{equation}{section}
\theoremstyle{plain}% default
\newtheorem{thm}{Theorem}[section]
\newtheorem{lem}[thm]{Lemma}
\newtheorem{prop}[thm]{Proposition}
\newtheorem{rmk}[thm]{Note}
\newtheorem*{rmk*}{Note}
\newtheorem*{cor}{Corollary}
\theoremstyle{definition}
\newtheorem{defn}[thm]{Definition}
\newtheorem*{defn*}{Definition}

\theoremstyle{remark}

\makeatletter
\renewcommand{\@epitext}[1]{
\itshape \begin{minipage}{\epigraphwidth}\begin{\textflush} #1
\end{\textflush}\end{minipage}\vspace{1ex}}
\makeatother

\begin{document}

\title{On the real representations of the Poincare group}
\author{Leonardo Pedro}
\address{Centro de Fisica Teorica de Particulas, CFTP, 
Departamento de Fisica, Instituto Superior Tecnico, Universidade
  Tecnica de Lisboa}
%\ead{leonardo@cftp.ist.utl.pt}
\date{\today}

\begin{abstract}
The formulation of quantum mechanics 
with a complex Hilbert space is equivalent to a formulation with a real Hilbert space and 
particular density matrix and observables. We study the real representations of the
Poincare group, motivated by the fact that the localization of complex unitary representations of the Poincare
group is incompatible with causality, Poincare covariance and
energy positivity.

We review the map from the complex to the real irreducible
representations---finite-dimensional or unitary---of a Lie group on a
Hilbert space. Then we show that all the finite-dimensional real
representations of the identity component of the Lorentz group are also
representations of the parity, in contrast with many complex representations.

We show that any localizable unitary representation of the Poincare group, 
compatible with Poincare covariance, verifies: 
1) it is a direct sum of irreducible representations which are massive or massless with discrete helicity.
2) it respects causality;
3) if it is complex it contains necessarily both positive and negative energy subrepresentations
4) it is an irreducible representation of the Poincare group (including parity) if and only if it is: a)real and b)massive with spin 1/2 or 
massless with helicity 1/2. Finally, the energy positivity problem is discussed in a many-particles context. 
\end{abstract}

\maketitle
%\tableofcontents
%\pagebreak

\begin{epigraphs}
\qitem{A state \emph{[of a spin-0 elementary system]} which is localized at the origin in 
one coordinate system, is not localized in a moving coordinate system, 
even if the origins coincide at t=0.
Hence our \emph{[position]} operators have no simple
covariant meaning under relativistic transformations.[...]

For higher but finite \emph{[spin of a massless representation]} s, beginning with s=1
(i.e. Maxwell's equations) we found that no localized states in the
above sense exist. This is an unsatisfactory, if not unexpected,
feature of our work.}
{---\textup{E.P.Wigner \& T.D.Newton (1949)\cite{newton}}}
\qitem{The concepts
of mathematics are not chosen for their conceptual simplicity---even
sequences of pairs of numbers \emph{[i.e. the real numbers]} are far from being
the simplest concepts---but for their amenability to clever
manipulations and to striking, brilliant arguments. Let us not forget
that the Hilbert space of quantum mechanics is the complex Hilbert
space, with a Hermitean scalar product. Surely
to the unpreoccupied mind, complex numbers are far from natural or simple
and they cannot be suggested by physical observations. Furthermore, the
use of complex numbers is in this case not a calculational trick of applied
mathematics but comes close to being a necessity in the formulation of
the laws of quantum mechanics.}
{---\textup{E.P.Wigner (1959)\cite{wignerquote}}}
\end{epigraphs}

\section{Introduction}

\subsection{Motivation}
Henri Poincar{\'e} defined the Poincare
group as the set of transformations
that leave invariant the Maxwell equations for the classical
electromagnetic field. The classical electromagnetic field transforms as a real 
representation of the Poincare group.

The complex representations of the Poincare group were systematically
studied\cite{wigner,mackey,ohnuki,poincare,weinberg,knapp} 
and used in the definition of quantum fields\cite{feynmanrules}. 
These studies were very important in the evolution of the
role of symmetry in the Quantum Theory\cite{symmetry}.

The formulation of quantum mechanics 
with a complex Hilbert space is equivalent to a formulation with a real Hilbert space and 
particular density matrix and observables\cite{realQM}. 
Quantum Theory on real Hilbert spaces
was investigated before\cite{realqft,realqftII,realqftIII,quantumstatistics,hestenes_old}, the main conclusion was that 
the formulation of non-relativistic Quantum Mechanics with a real Hilbert space is 
necessarily equivalent to the complex formulation. 
We could not find in the literature a systematic study on the real
representations of the Poincare group, as it seems to be 
common assumptions that if non-relativistic Quantum Mechanics is necessarily 
complex then the relativistic version must also be---it is hard to accept this specially 
because a relativistic Quantum Theory for a single-particle is inconsistent,
as relativistic causality requires the existence of anti-particles\cite{weinberg}---or 
that the energy positivity implies complex Poincare representations---it is a long shot, 
as it happens for the relativistic causality, only in a many-particles description 
the energy positivity is well defined.

The reasons motivating this study are:
%In this paper we study the real representations of the Poincare group
%and its relation with the complex representations.

1) The real representations of the Poincare group play an
important role in the classical electromagnetism and general
relativity\cite{classicalfields,gravitypoincare,gravitypoincare2} and in Quantum Theory---
e.g. the Higgs boson, Majorana fermion or quantum electromagnetic fields transform as real
representations under the action of the Poincare group.

2) The parity---included in the full Poincare group---and
charge-parity transformations are not symmetries of the Electroweak
interactions\cite{brancocp}. It is not clear why the charge-parity is
an apparent symmetry of the Strong interactions\cite{strongcp} or how
to explain the matter-antimatter asymmetry\cite{imbalance} through the
charge-parity violation. Since the self-conjugate finite-dimensional 
representations of the identity component of the Lorentz group are also 
representations of the parity, this work may be useful in future studies
of the parity and charge-parity violations.

3) The localization of complex irreducible unitary representations of the Poincare
group is incompatible with causality, Poincare covariance and
energy positivity\cite{localization,causality,stringfields}, while the
complex representation corresponding to the photon is not
localizable\cite{newton,wightman,vara}. The localization problems in the complex 
representations may come from the representation of the charge and matter-antimatter 
properties in relativistic Quantum Mechanics---which has always been problematic, 
remember the Dirac sea\cite{diracsea}---and so a study of the real representations, 
necessarily independent of the charge and matter-antimatter properties, may be useful.

\subsection{Systems on real and complex Hilbert spaces}
The position operator in Quantum Mechanics is mathematically expressed using
a system of imprimitivity: a set of projection operators---
associated with the coordinate space---on a Hilbert space;
a group acting both on the Hilbert space and on the coordinate space 
in a consistent way\cite{vara,mathQM}.

Many representations of a group---such as the finite-dimensional
representations of semisimple Lie  groups\cite{Hall} or the unitary
representations of separable locally compact
groups\cite{locallycompact}---are direct sums (or integrals) of
irreducible representations, hence the study of these representations 
reduces to the study of the irreducible representations.

If the set of normal operators commuting with an irreducible real unitary 
representation of the Poincare group is isomorphic to the quaternions or to the complex numbers,
then the most general position operator that the representation space admits is not complex linear,
but real linear. Therefore, in this case, the real irreducible representations 
generalize the complex ones and these in turn generalize the quaternionic ones.

The study of irreducible representations on complex Hilbert
spaces is in general easier than on real Hilbert spaces, because the
field of complex numbers is the algebraic closure --- where any
polynomial equation has a root --- of the field of real numbers. 
There is a well studied map, one-to-one or two-to-one and surjective
up to equivalence, from the complex to the real linear
finite-dimensional irreducible representations of a real Lie
algebra\cite{realalgebras,realirrep}.

Section \ref{section:Systems} reviews a similar map from the complex
to the real irreducible representations---finite-dimensional or
unitary---of a Lie group on a Hilbert space. 
Using Mackey's imprimitivity theorem, we extend the map to systems of 
imprimitivity. This section follows closely the reference\cite{realalgebras}, with the 
addition that we will also use the Schur's lemma for unitary representations
on a complex Hilbert space\cite{schur}.

Related studies can be found in the references
\cite{compactlie,spinorsrealhilbert}.

\subsection{Finite-dimensional representations of the Lorentz group}

The Poincare group, also called inhomogeneous Lorentz group, is the
semi-direct product of the translations and Lorentz Lie
groups\cite{Hall}. Whether or not the Lorentz and Poincare groups
include the parity and time reversal transformations depends on the
context and authors. To be clear, we use the prefixes full/restricted
when including/excluding parity and time reversal transformations. The
Pin(3,1)/SL(2,C) groups are double covers of the full/restricted
Lorentz group. The semi-direct product of the translations with the
Pin(3,1)/SL(2,C) groups is called IPin(3,1)/ISL(2,C) Lie group --- the
letter (I) stands for inhomogeneous.

A projective representation of the Poincare group on a complex/real
Hilbert space is an homomorphism, defined up to a complex phase/sign,
from the group to the automorphisms of the Hilbert space. Since the
IPin(3,1) group is a double cover of the full Poincare group, their
projective representations are the same\cite{pin}. 
All finite-dimensional projective representations of a simply
connected group, such as SL(2,C), are usual
representations\cite{weinberg}.
Both SL(2,C) and Pin(3,1) are semi-simple Lie groups, and so all its
finite-dimensional representations are direct sums of irreducible
representations\cite{Hall}. Therefore, the study of the
finite-dimensional projective representations of the restricted
Lorentz group reduces to the study of the finite-dimensional
irreducible representations of SL(2,C).

The Dirac spinor is an element of a 4 dimensional complex vector
space, while the Majorana spinor is an element of a 4 dimensional real
vector space\cite{todorov,irreducible, pal, dreiner}. The complex
finite-dimensional irreducible representations of SL(2,C) can be
written as linear combinations of tensor products of Dirac spinors.

In Section \ref{section:Finite} we will review the Pin(3,1) and
SL(2,C) semi-simple Lie groups and its relation with the Majorana,
Dirac and Pauli matrices. We will obtain all the real
finite-dimensional irreducible representations of SL(2,C) as linear
combinations of tensor products of Majorana spinors, using the map
from Section \ref{section:Systems}.
Then we will check that all these real representations are also
projective representations of the full Lorentz group, in contrast with
the complex representations which are not all projective
representations of the full Lorentz group.

\subsection{Unitary representations of the Poincare group}

According to Wigner's theorem, the most general transformations,
leaving invariant the modulus of the internal product of a 
Hilbert space, are: unitary or anti-unitary operators, defined up to a
complex phase, for a complex Hilbert; unitary, defined
up to a signal, for a real Hilbert\cite{wignertheorem, vara}. This motivates
the study of the (anti-)unitary projective representations of the full
Poincare group.

All (anti-)unitary projective representations of ISL(2,C) are, up to
isomorphisms, well defined unitary representations, because ISL(2,C)
is simply connected\cite{weinberg}. Both ISL(2,C) and IPin(3,1) are
separable locally compact groups and so all its (anti-)unitary
projective representations are direct integrals of irreducible
representations\cite{locallycompact}. Therefore, the study of the
(anti-)unitary projective representations of the restricted Poincare
group reduces to the study of the unitary irreducible representations
of ISL(2,C).

The spinor fields, space-time dependent spinors, are solutions of the
free Dirac equation\cite{Dirac}. The real/complex Bargmann-Wigner
fields\cite{BW,allspins}, space-time dependent linear combinations of
tensor products of Majorana/Dirac spinors, are solutions of the free
Dirac equation in each tensor index. The complex unitary irreducible
projective representations of the Poincare group with discrete spin or
helicity can be written as complex Bargmann-Wigner fields.

In Section \ref{section:Unitary}, we will obtain all the real
unitary irreducible projective representations of the Poincare
group, with discrete spin or helicity, as real Bargmann-Wigner fields, using the
map from Section 2. For each pair of complex representations with
positive/negative energy, there is one real representation.
We will define the Majorana-Fourier and Majorana-Hankel unitary
transforms of the real Bargmann-Wigner fields, 
relating the coordinate space with the linear and angular momenta
spaces.
We show that any localizable unitary representation of the Poincare group, 
compatible with Poincare covariance, verifies: 
1) it is a direct sum of irreducible representations which are massive or massless with discrete helicity.
2) it respects causality;
3) if it is complex it contains necessarily both positive and negative energy subrepresentations
4) it is an irreducible representation of the Poincare group (including parity) if and only if it is: a)real and b)massive with spin 1/2 or 
massless with helicity 1/2. 

The free Dirac equation is diagonal in the Newton-Wigner
representation\cite{newton}, related to the Dirac representation
through a Foldy-Wouthuysen transformation\cite{revfoldy,foldy} of
Dirac spinor fields. The Majorana-Fourier transform, when applied on
Dirac spinor fields, is related with the Newton-Wigner representation
and the Foldy-Wouthuysen transformation. In the context of Clifford
Algebras, there are studies on the geometric square roots of -1 
\cite{hestenes_old,squareroot} and on the
generalizations of the Fourier transform\cite{clifford}, with
applications to image processing\cite{image}.

\subsection{Energy Positivity}
In non-relativistic Quantum Mechanics the time is invariant under the
Galilean transformations ---excluding the time reversal transformation---and so 
the generator of translations in time is also invariant. 
Therefore, the positivity of the Energy and the localization in space of a state
can be defined simultaneously.
In relativistic Quantum Mechanics, the time is not invariant under Lorentz transformations, as a consequence 
the positivity of the Energy and the localization in space of a state cannot be defined simultaneously---
the corresponding projection operators do not commute. The solution can be found in a many particles system. 

In the canonical quantization description of a many particles system, the positivity of Energy is well defined
by construction and the localization problem is handled by introducing anti-particles---causality implies the
existence of anti-particles\cite{weinberg}, a related approach led Dirac to predict the positron\cite{diracsea}.
Yet, it should also be possible to build a description of a many particles system where the localization in space of a state 
is well defined by construction and the Energy positivity problem can be handled, as we can infer from the canonical quantization 
that both Energy positivity and localization are important and, in some way, complementary.
Dirac himself was the first to consider an approach which do not assume the positivity of Energy by construction\cite{negativeprobs} 
and quantization in de Sitter space-time may be achieved in a related approach\cite{desitter,krein}.

The description of a many-particles system based on the localization will be discussed in the section \ref{section:Energy}.

\section{Systems on real and complex Hilbert spaces}
\label{section:Systems}

\begin{defn*}[System]
A system $(M,V)$ is defined by:\\
1) the (real or complex) Hilbert space $V$;\\
2) a set $M$ of bounded endomorphisms on $V$.
\end{defn*}

The representation of a symmetry is an example of a system: a representation space plus a set of operators 
representing the action of the symmetry group in the representation space\cite{representations}. 

\begin{defn*}[Complexification]
Consider a system $(M,W)$ on a
real Hilbert space. The system $(M,W^c)$ is the complexification of the
system $(M,W)$, defined as 
$W^c\equiv  \mathbb{C}\otimes W$, with the multiplication by scalars
such that $a(b w)\equiv (ab)w$
for $a,b\in \mathbb{C}$ and $w\in W$.
The internal product of $W^c$ is defined---for $u_r,u_i,v_r,v_i\in W$ and $<v_r,u_r>$ the internal product of $W$---as:
\begin{align*}
<v_r+i v_i,u_r+i u_i>_c\equiv
<v_r,u_r>+<v_i,u_i>+i<v_r,u_i>-i<v_i,u_r>
\end{align*}
\end{defn*}

\begin{defn*}[Realification]
Consider a system $(M,V)$ on a
complex Hilbert space. 
The system$(M,V^r)$ is the realification of the
system $(M,V)$, defined as 
$V^r\equiv V$ is a real Hilbert space 
with the multiplication by scalars restricted to reals
such that $a(v)\equiv (a+i0)v$ 
for $a\in \mathbb{R}$ and $v\in V$. 
The internal product of $V^r$ is defined---for $u,v\in V$ and $<v,u>$ is the internal product of $V$---as:
\begin{align*}
<v,u>_r\equiv \frac{<v,u>+<u,v>}{2}
\end{align*}
\end{defn*}

\begin{rmk}
Let $H_n$, with $n\in\{1,2\}$, be two Hilbert spaces with internal
products $<,>:H_n\times H_n\to
\mathbb{F}$,($\mathbb{F}=\mathbb{R},\mathbb{C}$).
A (anti-)linear operator $U:H_1\to H_2$ is (anti-)unitary iff:\\
1) it is surjective;\\
2) for all $x\in H_1$, $<U(x) , U(x)>=<x, x>$.
\end{rmk}

\begin{prop}
\label{prop:unitary}
Let $H_n$, with $n\in\{1,2\}$, be two complex Hilbert spaces and
$H^r_n$ its complexification.
The following two statements are equivalent:

1) The operator $U:H_1\to H_2$ is (anti-)unitary;

2) The operator $U^r:H_1^r\to H_2^r$ is (anti-)unitary, where 
$U^r(h)\equiv U(h)$, for $h\in H_1$.
\end{prop}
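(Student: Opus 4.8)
The plan is to reduce both statements to the characterization of (anti-)unitarity recorded in the preceding Note---surjectivity together with preservation of $\langle x,x\rangle$---and then to check that neither of these two conditions changes when one passes from $U$ to $U^r$. Throughout I read $H_n^r$ as the realification of $H_n$ in the sense of the Definition above, so that $H_n$ and $H_n^r$ have the same underlying set and $U^r=U$ as a map of sets.

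First I would dispose of the linearity bookkeeping. If $U$ is $\mathbb{C}$-linear it is a fortiori $\mathbb{R}$-linear; if $U$ is $\mathbb{C}$-anti-linear then, since complex conjugation fixes the reals, $U^r(ah)=U(ah)=\bar a\,U(h)=a\,U^r(h)$ for $a\in\mathbb{R}$, so $U^r$ is again $\mathbb{R}$-linear. Conversely, over a real Hilbert space the notions ``linear'' and ``anti-linear'' coincide, so ``$U^r$ is (anti-)unitary'' is unambiguous and already presupposes that $U^r$ is $\mathbb{R}$-linear. Hence the (anti-)linearity hypotheses on the two sides match, and, $U$ and $U^r$ being the same set map, so do the surjectivity hypotheses.

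The one computation needed is on the inner products. From the definition of the realification, any vector $y$ in a complex Hilbert space $H$ satisfies $\langle y,y\rangle_r=\tfrac{1}{2}\bigl(\langle y,y\rangle+\langle y,y\rangle\bigr)=\langle y,y\rangle$. Applying this to $y=x\in H_1$ and to $y=U(x)\in H_2$ gives $\langle x,x\rangle_r=\langle x,x\rangle$ and $\langle U^r(x),U^r(x)\rangle_r=\langle U(x),U(x)\rangle$, so the norm-preservation equation for $U^r$ on $H_1^r$ is, vector by vector, the norm-preservation equation for $U$ on $H_1$.

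Finally I would invoke the Note twice: $U$ is (anti-)unitary iff it is (anti-)linear, surjective and $\langle x,x\rangle$-preserving, and $U^r$ is (anti-)unitary iff it is $\mathbb{R}$-linear, surjective and $\langle x,x\rangle_r$-preserving; by the previous two paragraphs these right-hand sides are equivalent, giving (1) $\Leftrightarrow$ (2). I do not expect a genuine obstacle here: the only points requiring care are not to identify the complex and real inner products before the identity $\langle y,y\rangle_r=\langle y,y\rangle$ has been applied, and to remember that the real mathematical content---that an (anti-)linear surjection preserving the norm automatically preserves the full (Hermitian, resp. real-bilinear) inner product via polarization---has been packaged into the cited Note and so may be used as a black box.
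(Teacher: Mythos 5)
Your proposal is correct and follows essentially the same route as the paper's own (much terser) proof, which likewise rests on the two observations that $U^r$ and $U$ coincide as set maps and that $\langle h,h\rangle_r=\langle h,h\rangle$, then invokes the Note's characterization of (anti-)unitarity. Your additional care with the linearity bookkeeping and with reading $H_n^r$ as the realification (despite the statement's typo ``complexification'') only makes explicit what the paper leaves implicit.
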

\begin{proof}
Since $<h,h>=<h,h>_r$ and 
$U^r(h)=U(h)$, for $h\in H_1$, we get the result.
\end{proof}

\begin{defn*}[Equivalence]
Consider the systems $(M,V)$ and $(N,W)$:\\
1) A normal endomorphism of $(M,V)$ is a bounded endomorphism $S:V\to V$
commuting with $S^\dagger$ and $m$, for all $m\in M$; an anti-endomorphism in a 
complex Hilbert space is an anti-linear endomorphism;\\
2) An isometry of $(M,V)$ is a unitary operator $S:V\to V$
commuting with $m$, for all $m\in M$;\\
3) The systems $(M,V)$ and $(N,W)$ are unitary equivalent iff there is a isometry
$\alpha:V\to W$ such that $N=\{\alpha m\alpha^{\dagger}: m\in M\}$.\\
\end{defn*}

% \begin{defn*}[Equivalence]
% Consider the systems $(M,V)$ and $(N,W)$:\\
% 1) An endomorphism of $(M,V)$ is an endomorphism $S:V\to V$
% commuting with $m$, for all $m\in M$;\\
% 2) An isomorphism of $(M,V)$ is a bijective operator $S:V\to V$
% commuting with $m$, for all $m\in M$;\\
% 3) The systems $(M,V)$ and $(N,W)$ are equivalent iff there is a linear bijection
% $\alpha:V\to W$ such that $N=\{\alpha m\alpha^{-1}: m\in M\}$.
% \end{defn*}

We use the trivial extension of the definition of irreducibility from representations to systems.

\begin{defn*}[Irreducibility]
Consider the system $(M,V)$ and let $W$ be a linear subspace of $V$:\\
1) $(M,W)$ is a (topological) subsystem of $(M,V)$
iff $W$ is closed and invariant under the system action, that is, for all $w\in W$:$(m w)\in W$, for all $m\in M$;\\
2) A system $(M,V)$ is (topologically) irreducible iff their only 
sub-systems are the non-proper $(M,V)$ or trivial  $(M,\{0\})$ sub-systems, 
where $\{0\}$ is the null space.
\end{defn*}

\begin{defn*}[Structures]
1) Consider a system $(M,V)$ on a
complex Hilbert space. 
A C-conjugation operator of $(M,V)$ is an 
anti-unitary involution of $V$ commuting with $m$,
for all $m\in M$;\\
2) Consider a system $(M,W)$ on a
real Hilbert space. 
A R-imaginary operator of $(M,W)$, $J$, is an isometry of
$(M,W)$ verifying $J^2=-1$. 
\end{defn*}

\subsection{The map from the complex to the real systems}

\begin{defn*}
Consider an irreducible system $(M,V)$ on a
complex Hilbert space:\\
1) The system is C-real iff there is a C-conjugation
operator;\\
2) The system is C-pseudoreal iff 
there is no C-conjugation operator but there is an
anti-unitary operator of $(M,V)$;\\
3) The system is C-complex iff 
there is no anti-unitary operator of $(M,V)$.
\end{defn*}

\begin{defn}
\label{defn:Rsystem}
Consider the system $(M,W)$ on a
real Hilbert space and let $(M,W^c)$ be its complexification:
1) $(M,W)$ is R-real iff $(M,W^c)$ is C-real irreducible;\\
2) $(M,W)$ is R-pseudoreal iff $(M,V)$ is C-pseudoreal irreducible,
with $W^c=V\oplus \bar V$;
3) $(M,W)$ is R-complex iff $(M,V)$ is C-complex irreducible, with
$W^c=V\oplus \bar V$.
\end{defn}

\begin{prop}
Any irreducible real system is R-real or R-pseudoreal or R-complex.
\end{prop}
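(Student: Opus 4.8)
The plan is to pass to the complexification $(M,W^c)$ and exploit the structure it automatically carries. First I would observe that ``complex conjugation relative to the real form $W$'' defines a distinguished anti-unitary involution $\sigma$ of $W^c$, acting on $u_r+iu_i$ (with $u_r,u_i\in W$) as $u_r-iu_i$. Because every $m\in M$ is an endomorphism of $W$ and is extended to $W^c$ by complex linearity, $m$ commutes with $\sigma$; hence $\sigma$ is a C-conjugation operator of $(M,W^c)$. The bookkeeping fact behind the whole argument is that a closed complex subspace $X\subseteq W^c$ is $\sigma$-invariant if and only if $X=U^c$ for the closed real subspace $U=\{x\in X:\sigma x=x\}\subseteq W$, and $X$ is $M$-invariant iff $U$ is; so, by irreducibility of $(M,W)$, the only $\sigma$-invariant subsystems of $(M,W^c)$ are $(M,\{0\})$ and $(M,W^c)$.

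Next I would split on whether $(M,W^c)$ is irreducible. If it is, then since it carries the C-conjugation $\sigma$ it is C-real (an irreducible complex system possessing a C-conjugation is, by the trichotomy defining C-real/C-pseudoreal/C-complex, neither of the latter two), so $(M,W)$ is R-real by Definition \ref{defn:Rsystem}. If $(M,W^c)$ is reducible, I would take a nonzero proper closed $M$-invariant subspace and, using complete reducibility (valid for the finite-dimensional and unitary representations this paper treats), take it irreducible; call it $V$. Then $\sigma(V)$ is again closed, $M$-invariant and irreducible; $V\cap\sigma(V)$ is closed and $\sigma$-invariant, hence $\{0\}$ or $W^c$, and being contained in the proper $V$ it is $\{0\}$; the closure of $V+\sigma(V)$ is $\sigma$-invariant and nonzero, hence all of $W^c$. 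So $W^c=V\oplus\bar V$ with $\bar V:=\sigma(V)$ and $V$ irreducible. By the definitions of R-pseudoreal and R-complex it now suffices to prove that $V$ is \emph{not} C-real, since the trichotomy C-real/C-pseudoreal/C-complex is exhaustive for irreducible complex systems.

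This last step is the main obstacle. Assume for contradiction that $V$ admits a C-conjugation $\tau$, with real form $V_\tau=\{v\in V:\tau v=v\}$, so that $V=V_\tau\oplus iV_\tau$ as real spaces and $V_\tau$ is $M$-invariant and nonzero. Consider $U:=V_\tau\oplus\sigma(V_\tau)\subseteq W^c$: it is a real $M$-invariant subspace, and it is $\sigma$-invariant since $\sigma(V_\tau\oplus\sigma(V_\tau))=\sigma(V_\tau)\oplus V_\tau$, so its $\sigma$-fixed part $U\cap W=\{v+\sigma v:v\in V_\tau\}$ is an $M$-invariant real subspace of $W$. It is nonzero because $V_\tau\neq\{0\}$, and it is proper in $W=\{v+\sigma v:v\in V\}$ because $v\mapsto v+\sigma v$ is injective on $V$ (its kernel lies in $V\cap\bar V=\{0\}$) while $V_\tau\subsetneq V$. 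This contradicts the irreducibility of $(M,W)$. Hence $V$ is C-pseudoreal or C-complex irreducible, and accordingly $(M,W)$ is R-pseudoreal or R-complex.

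What needs care is not this logical skeleton but the topology: existence of an irreducible subsystem (complete reducibility), closedness of $V\oplus\sigma(V)$ and of $U\cap W$, and continuity of the inverse of $v\mapsto v+\sigma v$. All are automatic in finite dimensions and are handled, for the unitary representations considered here, by the fact that invariant subspaces of unitary representations admit invariant orthogonal complements. An equivalent and arguably cleaner route uses Schur's lemma: the algebra of normal endomorphisms of an irreducible real system is a real associative division algebra, hence $\mathbb{R}$, $\mathbb{C}$ or $\mathbb{H}$ by Frobenius' theorem, and tensoring with $\mathbb{C}$ identifies these three cases with R-real ($W^c$ irreducible), R-complex ($W^c=V\oplus\bar V$ with $\bar V\not\cong V$) and R-pseudoreal ($W^c=V\oplus\bar V$ with $\bar V\cong V$ but $V$ carrying no C-conjugation).
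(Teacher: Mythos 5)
Your skeleton is the same as the paper's: complexify, use the canonical conjugation $\sigma$, split on whether $(M,W^c)$ is irreducible, and in the reducible case obtain $W^c=V\oplus\sigma(V)$ and rule out a C-conjugation $\tau$ on $V$ by producing the proper nontrivial invariant real subspace $\{v+\sigma v:v\in V_\tau\}$ of $W$ (this is essentially the paper's $W_\pm$ construction). The genuine gap is where you get the irreducibility of the summand $V$: you appeal to complete reducibility to choose it irreducible. The proposition is stated for arbitrary irreducible real systems, and Definition \ref{defn:Rsystem} does require the summand to be C-pseudoreal or C-complex \emph{irreducible}; yet in the paper's principal application---unitary representations of the Poincare group on separable Hilbert spaces---decomposition holds only as a direct integral, which need not supply any irreducible closed invariant subspace at all (the regular representation of $\mathbb{R}$ on $L^2(\mathbb{R})$ has no irreducible subrepresentations). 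So the parenthetical justification ``valid for the \ldots unitary representations this paper treats'' does not hold in the sense you need, and for a general system nothing of the kind is available.

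The missing idea, which is the paper's key move and makes complete reducibility unnecessary, is to take an \emph{arbitrary} proper nontrivial closed invariant subspace $X\subset W^c$ and note that $X+\sigma X$ and $X\cap\sigma X$ are $\sigma$-invariant subsystems, hence by your own bookkeeping fact and the irreducibility of $(M,W)$ one gets $X\cap\sigma X=\{0\}$ and $X+\sigma X=W^c$; then the real-linear bijection $u\mapsto u+\sigma u$ from the realification of $X$ onto $W$ intertwines $M$, so $(M,X^r)$ is equivalent to $(M,W)$ and is therefore irreducible, and since every complex subsystem of $X$ is in particular a real subsystem, $X$ is automatically irreducible as a complex system. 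Substituting this for the appeal to complete reducibility closes your Case 2, and the rest of your argument (the non-existence of a C-conjugation on $V$, and Case 1) is correct and matches the paper. The remaining topological points you flag---closedness of $V+\sigma V$ and boundedness of the inverse of $v\mapsto v+\sigma v$---are glossed over in exactly the same way in the paper's own proof, so they are not a distinguishing defect of your argument; your concluding Frobenius-style remark, however, again presupposes commutant and decomposition facts that the paper only derives \emph{after} this proposition, so it cannot replace the direct argument here.
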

\begin{proof}
Consider an irreducible system $(M,W)$
on a real Hilbert space.
There is a C-conjugation operator of $(M,W^c)$, 
$\theta$, defined by $\theta(u+iv)\equiv (u-iv)$ for $u,v\in W$,
verifying $(W^c)_\theta=W$.

Let $(M,X^c)$ be a proper non-trivial subsystem of
$(M,W^c)$. Then $\theta$ is a C-conjugation operator of the 
subsystems $(M,Y^c)$ and $(M,Z^c)$, where 
$Y^c\equiv \{u+\theta v: u,v\in X^c\}$ and 
$Z^c\equiv \{u: u,\theta u\in X^c\}$. Therefore,
$Y^c=\{u+iv: u,v\in Y\}$ and $Z^c=\{u+iv: u,v\in Z\}$, 
where $Y\equiv\{\frac{1+\theta}{2}u: u\in Y^c\}$ and
$Z\equiv\{\frac{1+\theta}{2}u: u\in Z^c\}$, are invariant closed subspaces of
$W$. 
If $Y=\{0\}$ then $Z=\{0\}$ and $Y^c=X^c=\{0\}$, in contradiction
with $X^c$ being non-trivial. 
If $Z=W$ then $Y=W$ and $Z^c=X^c=W^c$, in contradiction 
with $X^c$ being proper.
Therefore $Z=\{0\}$ and $Y=W$, which implies 
$Z^c=\{0\}$ and $Y^c=W^c$.

So, $(M,W)$ is equivalent to $(M,(X^c)^r)$, 
due to the existence of the bijective linear 
map $\alpha:(X^c)^r\to W$, $\alpha(u)=u+\theta u$,
$\alpha^{-1}(u+\theta u)=u$, for $u\in(X^c)^r$.
Suppose that there is a C-conjugation operator of $(M,X^c)$,
$\theta'$. Then $(M,W_{\pm})$ is a proper non-trivial
subsystem of $(M,W)$, where
$W_\pm\equiv\{\frac{1\pm\theta'}{2}w: w\in W\}$, in contradiction with
$(M,W)$ being irreducible.
\end{proof}

\begin{prop}
\label{prop:Rirreducible}
Any real system which is R-real or R-pseudoreal or
R-complex is irreducible.
\end{prop}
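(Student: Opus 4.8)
The plan is to translate the statement into a claim about the complexification and then dispatch the three cases with Schur's lemma. As in the proof of the preceding proposition, every real system $(M,W)$ equips its complexification $W^c$ with the canonical C-conjugation $\theta(u+iv)\equiv u-iv$ ($u,v\in W$), which commutes with the (complex-linearly extended) action of $M$ and satisfies $(W^c)_\theta=W$. The map $X\mapsto X^c$ is then a bijection between the closed invariant real subspaces of $W$ and the closed invariant $\theta$-stable complex subspaces of $W^c$, with inverse $Y\mapsto Y_\theta\equiv\{y\in Y:\theta y=y\}$ (one checks $(X^c)_\theta=X$ and $Y=Y_\theta\oplus iY_\theta$ when $\theta Y=Y$); and $X$ is proper, resp. nontrivial, exactly when $X^c$ is. So it suffices to show that $W^c$ has no proper nontrivial $\theta$-stable subsystem.

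If $(M,W)$ is R-real this is immediate, since by Definition \ref{defn:Rsystem} the system $(M,W^c)$ is C-real \emph{irreducible} and hence has no proper nontrivial subsystem whatsoever. If $(M,W)$ is R-pseudoreal or R-complex, then $W^c=V\oplus\bar V$ with $(M,V)$ irreducible, $\bar V=\theta(V)$, and $(M,V)$ admitting no C-conjugation (it is C-pseudoreal or C-complex). In particular $\theta(V)\neq V$, for otherwise $\theta|_V$ would be a C-conjugation of $(M,V)$; so $\theta$ genuinely interchanges the two summands. Now let $Y$ be a proper nontrivial closed invariant subspace of $W^c$. By Schur's lemma for unitary representations on a complex Hilbert space, applied to the irreducibles $(M,V)$ and $(M,\bar V)$: in the C-complex case ($V\not\cong\bar V$) the only possibilities are $Y=V$ or $Y=\bar V$, neither of which is $\theta$-stable because $\theta$ swaps $V$ and $\bar V$; in the C-pseudoreal case ($V\cong\bar V$) every such $Y$ is the graph of an intertwining isomorphism $V\to\bar V$ and is therefore isomorphic as a system to $(M,V)$, so a hypothetical $\theta$-stability of $Y$ would exhibit $\theta|_Y$ as a C-conjugation of $(M,Y)\cong(M,V)$ --- again impossible. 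Hence no proper nontrivial $Y$ is $\theta$-stable, and $(M,W)$ is irreducible.

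The step I expect to be the main obstacle is the classification of the invariant subspaces of $W^c=V\oplus\bar V$ in the pseudoreal case, i.e. checking that every proper nontrivial one is the graph of an isomorphism $V\to\bar V$ (and so irreducible). This is exactly where Schur's lemma is used; in infinite dimensions I would argue it through the fact that a nonzero intertwiner between irreducible unitary representations is a positive scalar times a unitary, so that the orthogonal projections of $W^c$ onto $V$ and onto $\bar V$ restrict on such a $Y$ to isomorphisms onto $V$ and $\bar V$ respectively. Everything else --- the $X\leftrightarrow X^c$ correspondence and the two appeals to ``$(M,V)$ has no C-conjugation'' --- is routine bookkeeping.
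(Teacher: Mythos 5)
Your argument is correct, but it is not the route the paper takes. You work upstairs: you complexify the given real system, set up the bijection $X\leftrightarrow X^{c}$ between invariant real subspaces of $W$ and $\theta$-stable invariant subspaces of $W^{c}$, and then invoke Schur's lemma to classify the closed invariant subspaces of $V\oplus\bar V$ and check that none of them is $\theta$-stable. The paper instead works downstairs, in the realification of the complex irreducible $(M,V)$: from a putative proper nontrivial real subsystem $X$ it forms the complex subspaces $Y=X+JX$ and $Z=X\cap JX$ of $V$, and irreducibility of $(M,V)$ alone forces $X$ to be an invariant real form, hence produces a C-conjugation of $(M,V)$ --- impossible in the pseudoreal and complex cases; the R-real case is then treated separately through $V_{\theta}$. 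The trade-off is clear: your route makes the R-real case immediate and keeps anti-unitarity automatic (your $\theta$ is always the canonical conjugation of $W^{c}$ or its restriction, transported by a unitary equivalence), but it leans on Schur's lemma, i.e.\ on the commutant/graph classification of subspaces of $V\oplus\bar V$, which is available for unitary representations, normal systems and finite-dimensional representations (all the cases the paper actually uses) but is not part of the hypotheses of the proposition as stated for a general system; the paper's construction uses nothing beyond topological irreducibility of $(M,V)$ and avoids classifying subspaces of the doubled space altogether. Two small points in your write-up: in the pseudoreal case not every proper nontrivial invariant $Y$ is the graph of an \emph{isomorphism} ($Y=V$ and $Y=\bar V$ are not), but this is harmless since you have already shown $\theta$ swaps the two summands and your conjugation-transport argument only needs $(M,Y)\cong(M,V)$, which holds in all cases; and the unitary equivalence $(M,Y)\cong(M,V)$ should be obtained from the polar decomposition of the bounded invertible intertwiner $P|_{Y}$ rather than from the graph map itself, since $V$ and $\bar V$ need not be orthogonal in $W^{c}$.
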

\begin{proof}
Consider an irreducible system
on a complex Hilbert space $(M,V)$.
There is a R-imaginary operator $J$ of the system
$(M ,V^{r})$, defined by $J(u)\equiv i u$, for $u\in V^r$.

Let $(M,X^r)$ be a proper non-trivial subsystem of
$(M,V^{r})$. Then $J$ is an R-imaginary operator of
$(M,Y^r)$ and $(M^r,Z^r)$, 
where $Y^r\equiv \{u+J v: u, v \in X^r\}$ and 
$Z^r\equiv \{u: u,Ju\in X^r\}$.
Then $(M,Y)$ and $(M,Z)$ are
subsystems of $(M,V)$,
where the complex Hilbert spaces 
$Y\equiv Y^r$ and $Z\equiv Z^r$ have the scalar 
multiplication such that $(a+ib)(y)=ay+bJy$, 
for $a,b\in\mathbb{R}$ and $y\in Y$ or $y\in Z$.
If $Y=\{0\}$, then $Z=X^r=\{0\}$ which is in contradiction with $X^r$
being non-trivial.
If $Z=V$, then $Y=V$ and $X^r=V^r$ which is in contradiction with
$X^r$ being non-trivial.
So $Z=\{0\}$ and $Y=V$, which implies that $V=(X^r)^c$.

Then there is a C-conjugation operator of $(M,V)$, 
$\theta$, defined by $\theta(u+iv)\equiv u-iv$, for $u,v\in X^r$. 
We have $X^r=V_\theta$. 
Suppose there is a R-imaginary operator of $(M,V_\theta)$, 
$J'$. Then $(M,V_{\pm})$, where
$V_{\pm}\equiv\{\frac{1\pm iJ'}{2}v: v\in V\}$, 
are proper non-trivial subsystems of $(M,V)$, 
in contradiction with $(M,V)$ being irreducible.

Therefore, if $(M,V)$ is C-real, then $(M,V_\theta)$ is
R-real irreducible. 
If $(M,V)$ is C-pseudoreal or C-complex, then 
 $(M,V_\theta^r)$ is R-pseudoreal or R-complex, irreducible. 
\end{proof}

\subsection{Schur Systems}

\begin{defn}[Schur System]
\label{defn:schur}
A system $(M,V)$, on a complex Hilbert space $V$, 
is a Schur system if the set of normal operators of $(M,V)$ is isomorphic to $\mathbb{C}$.\\
Consider an irreducible system $(M,W)$, on a real Hilbert space $W$ and 
let $(M,W^c)$ be its complexification:
1) $(M,W)$ is Schur R-real iff $(M,W^c)$ is Schur C-real;\\
2) $(M,W)$ is Schur R-pseudoreal iff $(M,V)$ is Schur C-pseudoreal,
with $W^c=V\oplus \bar V$;\\
3) $(M,W)$ is Schur R-complex iff $(M,V)$ is Schur C-complex, with
$W^c=V\oplus \bar V$.
\end{defn}

\begin{lem}
Consider a Schur system $(M,V)$ on a complex
Hilbert space. An anti-isometry of $(M,V)$, if it exists, is unique
up to a complex phase.
\end{lem}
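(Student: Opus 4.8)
The plan is to reduce the statement to the defining Schur property by forming the ``ratio'' of two anti-isometries. First I would recall that an anti-unitary operator is a bijection (surjectivity is built into the definition, injectivity follows from norm preservation), so if $\Theta_1,\Theta_2$ are anti-isometries of $(M,V)$ --- anti-unitary operators $V\to V$ commuting with every $m\in M$ --- then $\Theta_1^{-1}$ exists, is again anti-unitary and bounded, and still commutes with every $m\in M$ (conjugate the relation $\Theta_1 m=m\Theta_1$ by $\Theta_1^{-1}$).

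Next, consider $U\equiv\Theta_2\Theta_1^{-1}$. Since the composition of two anti-linear maps is linear, $U$ is linear; being a composition of anti-unitaries it is unitary; and it commutes with every $m\in M$ because both $\Theta_2$ and $\Theta_1^{-1}$ do. A unitary operator is normal and trivially commutes with its adjoint, so $U$ is a bounded endomorphism commuting with $U^\dagger$ and with every $m\in M$, i.e.\ a normal operator of $(M,V)$ in the sense fixed in the Equivalence definition.

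Now I would invoke the hypothesis that $(M,V)$ is a Schur system: the set of normal operators of $(M,V)$ is isomorphic to $\mathbb{C}$. Since that set always contains the scalar operators $\{\lambda\,\mathrm{Id}_V:\lambda\in\mathbb{C}\}$, and these already form a copy of $\mathbb{C}$, the isomorphism condition forces the set of normal operators to be exactly $\mathbb{C}\,\mathrm{Id}_V$. Hence $U=\lambda\,\mathrm{Id}_V$ for some $\lambda\in\mathbb{C}$, and unitarity of $U$ gives $|\lambda|=1$. Composing on the right with $\Theta_1$ yields $\Theta_2=U\Theta_1=\lambda\Theta_1$, so any two anti-isometries differ by a complex phase, which is the claim.

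I do not expect a serious obstacle here; the statement is essentially an application of the Schur condition. The only points that need care are: that an anti-unitary operator is invertible with bounded inverse; that ``commutes with $m$'' is inherited by the inverse and by the composition; that ``isomorphic to $\mathbb{C}$'' for the set of normal operators genuinely pins it down to $\mathbb{C}\,\mathrm{Id}_V$ rather than some abstract isomorphic copy, which is immediate because the scalars are already present; and the harmless asymmetry of anti-linearity, which makes it important to place the phase on the left, writing $\Theta_2=\lambda\Theta_1$, rather than trying to absorb it on the other side where it would be conjugated.
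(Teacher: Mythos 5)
Your proof is correct and follows essentially the same route as the paper: compose the two anti-isometries to obtain a linear unitary commuting with every $m\in M$, then use the Schur condition to force it to be a unimodular scalar. The only (harmless) difference is that you form $\Theta_2\Theta_1^{-1}$ rather than the paper's $\theta_2\theta_1$, which in fact lands you on the conclusion $\Theta_2=\lambda\Theta_1$ a little more directly.
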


\begin{proof}
Let $\theta_1$,$\theta_2$ be two anti-isometries of $(M,V)$. The
product $(\theta_2\theta_1)$ is an isometry of $(M,V)$;
since $(M,V)$ is irreducible,
$(\theta_2\theta_1)=e^{i\phi}$; with $\phi\in \mathbb{R}$.

Therefore $\theta_2=\alpha \theta_1\alpha^{-1}$; where 
$\alpha\equiv e^{i\frac{\phi}{2}}$ is a complex phase.
\end{proof}

\begin{prop}
Two R-real Schur systems are
isometric iff their complexifications are isometric.
\end{prop}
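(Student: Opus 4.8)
The plan is to prove the two implications separately: the forward one is essentially formal, while the converse is where the Schur hypothesis and the preceding Lemma carry the weight.

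First I would dispatch the easy direction. If $(M,W)$ and $(N,X)$ are isometric, say via a real-linear surjective isometry $\alpha\colon W\to X$ with $N=\{\alpha m\alpha^{\dagger}\colon m\in M\}$, I would complexify it by setting $\alpha^{c}(u+iv)\equiv\alpha(u)+i\alpha(v)$ for $u,v\in W$. Then $\alpha^{c}$ is complex-linear and surjective, it intertwines the complexified systems by construction, and a direct computation --- substituting $\alpha^{c}$-images into the defining formula for the inner product of a complexification and using that $\alpha$ preserves the real inner product --- shows that $\alpha^{c}$ preserves the complex inner product. Hence $\alpha^{c}$ witnesses the isometry of $(M,W^{c})$ and $(N,X^{c})$. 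Neither Schur nor irreducibility is needed here.

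For the converse I would use the canonical C-conjugations. By the proof that every irreducible real system is R-real, R-pseudoreal or R-complex, the complexifications carry the canonical C-conjugations $\theta_{W}(u+iv)\equiv u-iv$ and $\theta_{X}(u+iv)\equiv u-iv$, with fixed-point sets exactly $W$ and $X$. Given a complex-linear isometry $\beta\colon W^{c}\to X^{c}$ with $N=\{\beta m\beta^{\dagger}\colon m\in M\}$, I would form $\theta'\equiv\beta\,\theta_{W}\,\beta^{\dagger}$ and check it is an anti-unitary involution commuting with $N$, i.e. a C-conjugation of $(N,X^{c})$. In particular $(N,X^{c})$ is C-real; being isometric to the Schur system $(M,W^{c})$ it is also Schur, so the preceding Lemma applies and forces $\theta'=e^{i\chi}\theta_{X}$ for some $\chi\in\mathbb{R}$.

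It then remains to absorb this phase. Replacing $\beta$ by $\gamma\equiv e^{-i\chi/2}\beta$, which is still a complex-linear intertwining isometry, should give $\gamma\,\theta_{W}=\theta_{X}\,\gamma$; the short computation here exploits the anti-linearity of $\theta_{W}$, which makes conjugation by $\gamma$ contribute the square of the chosen phase and thus cancel $e^{i\chi}$. Once the conjugations are intertwined, $\gamma$ maps the fixed-point set of $\theta_{W}$, namely $W$, into that of $\theta_{X}$, namely $X$, and applying the same to $\gamma^{-1}$ shows $\gamma(W)=X$; the restriction $\gamma|_{W}\colon W\to X$ is then a real-linear isometry realizing the equivalence of $(M,W)$ and $(N,X)$. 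The main obstacle I anticipate is precisely this phase bookkeeping with anti-linear operators --- keeping the sign of the exponent straight so that $\gamma\theta_{W}\gamma^{-1}$ lands on $\theta_{X}$ itself and not on a phase multiple --- together with the (routine but necessary) verification that $(N,X^{c})$ is genuinely Schur C-real so that the Lemma is legitimately invoked.
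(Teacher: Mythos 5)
Your proof is correct and follows essentially the same route as the paper's: conjugate the canonical C-conjugation by the complex isometry, invoke the preceding lemma that an anti-isometry of a Schur system is unique up to a phase, and absorb the phase by rescaling the isometry by the half-angle (the anti-linearity doubling the phase is exactly the point, and your sign is right). The only difference is that you also write out the easy forward direction, which the paper leaves implicit; note too that the Schur C-real property of both complexifications is already part of the hypothesis (Definition of R-real Schur system), so no extra verification is needed there.
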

\begin{proof}
Let $(M,V)$ and $(N,W)$ be C-real Schur systems,
with $\theta_M$ and $\theta_N$ the respective C-conjugation operators.
If there is an isometry $\alpha:V\to W$ such that 
$\alpha M=N\alpha$, then 
$\vartheta\equiv \alpha\theta_M\alpha^{-1}$ is an anti-isometry of
$(N,W)$. Since it is unique up to a phase, then
$\theta_N=e^{i\phi}\vartheta$. Therefore $e^{i\frac{\phi}{2}}\alpha$ is an
isometry between $(M,V_\theta)$ and $(N,W_\theta)$, where
$V_{\theta_M}\equiv\{(1+\theta_M)v: v\in V\}$.
\end{proof}

\begin{prop}
Two C-complex or C-pseudoreal Schur systems are isometric or anti-isometric 
iff their realifications are isometric.
\end{prop}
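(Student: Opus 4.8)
The plan is to track how the complex structure is transported by the intertwiner, since a real-linear bijection between realifications is complex-linear (resp.\ complex-anti-linear) exactly when it commutes (resp.\ anti-commutes) with multiplication by $\mathrm{i}$. Write $J_V\in B(V^r)$ and $J_W\in B(W^r)$ for the R-imaginary operators $J_V(v)\equiv\mathrm{i}v$ and $J_W(w)\equiv\mathrm{i}w$; these are isometries of $(M,V^r)$ and $(N,W^r)$ because $M$ and $N$ consist of complex-linear endomorphisms.

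For the implication ``(anti-)isometric $\Rightarrow$ realifications isometric'' I would simply reinterpret the intertwiner. If $\alpha\colon V\to W$ is an isometry with $N=\{\alpha m\alpha^\dagger:m\in M\}$, then as a real-linear map $\alpha\colon V^r\to W^r$ it is still a bijective intertwiner, its adjoint is unchanged, and it is unitary since $\langle\alpha v,\alpha u\rangle_r=\operatorname{Re}\langle\alpha v,\alpha u\rangle=\operatorname{Re}\langle v,u\rangle=\langle v,u\rangle_r$. If instead $\beta\colon V\to W$ is an anti-isometry intertwining $M$ and $N$, the identity $\langle\beta v,\beta u\rangle=\langle u,v\rangle$ has the same real part as $\langle v,u\rangle$, so $\beta\colon V^r\to W^r$ is again a real unitary intertwiner; either way $(M,V^r)$ and $(N,W^r)$ are isometric.

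For the converse, let $\gamma\colon V^r\to W^r$ be an isometry with $N=\{\gamma m\gamma^\dagger:m\in M\}$ and put $K\equiv\gamma J_V\gamma^{-1}$, which is an R-imaginary operator of $(N,W^r)$. Then $\gamma$ is complex-linear precisely when $K=J_W$ and complex-anti-linear precisely when $K=-J_W$, so the task is to move $K$ onto $\pm J_W$ without spoiling the intertwining. This is where the Schur hypothesis enters, through the real commutant $\mathcal{C}_N\equiv\{T\in B(W^r):Tn=nT\ \forall n\in N\}$: by Schur's lemma for the irreducible system $(N,W)$ the complex-linear operators commuting with $N$ are the scalars $\mathbb{C}\,1$, and any nonzero complex-anti-linear operator commuting with $N$ would, after polar decomposition, yield an anti-isometry of $(N,W)$. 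Hence if $(N,W)$ is C-complex there is no such anti-linear operator, $\mathcal{C}_N=\mathbb{R}\,1\oplus\mathbb{R}\,J_W\cong\mathbb{C}$, and the only R-imaginary operators are $K=\pm J_W$, so $\gamma$ is already complex-linear or complex-anti-linear. If $(N,W)$ is C-pseudoreal, its anti-isometry can be normalized to $\beta_W$ with $\beta_W^2=-1$ (its square is a scalar, real because $\beta_W$ is anti-linear, and $+1$ would make $(N,W)$ C-real), it satisfies $\beta_W J_W=-J_W\beta_W$, and so $1,J_W,\beta_W,J_W\beta_W$ span a copy of the quaternions inside $\mathcal{C}_N$; since $\mathcal{C}_N$ is a finite-dimensional real division algebra (the real form of Schur's lemma) this forces $\mathcal{C}_N\cong\mathbb{H}$, and because the unit quaternions act transitively by conjugation on the sphere of imaginary units there is a unitary $R\in\mathcal{C}_N$ with $RKR^{-1}=J_W$. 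Then $R\gamma$ is a real unitary intertwiner of $M$ and $N$ with $(R\gamma)J_V(R\gamma)^{-1}=J_W$, i.e.\ complex-linear.

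In every case I am then left with a real unitary intertwiner $\gamma'$ of $M$ and $N$ that is complex-linear or complex-anti-linear, and it only remains to upgrade real unitarity to complex (anti-)unitarity: from $\operatorname{Re}\langle\gamma'v,\gamma'u\rangle=\operatorname{Re}\langle v,u\rangle$ and $\gamma'(\mathrm{i}u)=\pm\mathrm{i}\gamma'(u)$, replacing $u$ by $\mathrm{i}u$ gives $\operatorname{Im}\langle\gamma'v,\gamma'u\rangle=\pm\operatorname{Im}\langle v,u\rangle$, so $\langle\gamma'v,\gamma'u\rangle$ equals $\langle v,u\rangle$ in the linear case and $\langle u,v\rangle$ in the anti-linear case. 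Thus $(M,V)$ and $(N,W)$ are isometric or anti-isometric. The step I expect to be the main obstacle is the C-pseudoreal case: one must show $\mathcal{C}_N$ is \emph{exactly} $\mathbb{H}$ rather than merely containing it --- this is precisely where the Schur property of $(N,W)$ (Definition~\ref{defn:schur}), equivalently Schur's lemma for its complexification, is indispensable --- and then carry out the quaternionic conjugation that rotates the ``wrong'' complex structure $K$ back onto $J_W$. By contrast the C-complex case is rigid: the commutant being abelian, $J_W$ and $-J_W$ are not conjugate in it, which is exactly why the ``or anti-isometric'' alternative is genuinely necessary.
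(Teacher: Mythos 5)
Your argument is correct in substance, but it takes a genuinely different route from the paper's. You transport the complex structure, $K\equiv\gamma J_V\gamma^{-1}$, and then classify the real commutant of the realified system: $\mathbb{R}\oplus\mathbb{R}J_W\cong\mathbb{C}$ in the C-complex case (forcing $K=\pm J_W$), and $\mathbb{H}$ in the C-pseudoreal case, where you rotate $K$ onto $J_W$ by conjugation with a unit quaternion. The paper does not split cases and never needs the commutant classification: writing $\alpha$ for the real isometry and $K\equiv\alpha J_M\alpha^{-1}$, it forms the single operator $A\equiv 1-J_NK$ (which is $\alpha\mapsto\alpha-J_N\alpha J_M$, i.e.\ twice the complex-linear part of the intertwiner), notes that $AA^{\dagger}=2-(J_NK+KJ_N)$ is a positive normal operator commuting with the complex system and hence a scalar $r\geq 0$ by the Schur property; $r=0$ gives $K=-J_N$, so $\alpha$ is an anti-isometry, while $r>0$ gives the isometry $r^{-1/2}A\alpha$, complex-linear because $AK=J_NA$. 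The paper's trick is shorter and uses the Schur hypothesis only once; your route is longer but makes the dichotomy transparent (in $\mathbb{C}$ the structures $J_W$ and $-J_W$ are not conjugate, which is exactly why the ``anti-isometric'' alternative is unavoidable, while in $\mathbb{H}$ they are), and it ties in with the commutant propositions the paper proves immediately afterwards. One caveat: your assertion that $\mathcal{C}_N$ is a finite-dimensional real division algebra is not justified as stated (the Hilbert spaces may be infinite-dimensional, and finite-dimensionality of the commutant is not what Schur's lemma gives you). What you actually need is only that the \emph{normal} part of the commutant of the realification is $\mathbb{C}$ resp.\ $\mathbb{H}$---which suffices because $K$ is unitary, hence normal---and that is precisely the content of the two propositions following this one in the paper, provable from the Schur hypothesis along the lines you sketch; with that substitution your proof closes.
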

\begin{proof}
Let $(M,V)$ and $(N,W)$ be R-complex or R-pseudoreal Schur
systems, with $J_M$ and $J_N$ the respective R-imaginary
operators. If there is an isometry $\alpha:V\to W$ such that
$\alpha M=N\alpha$, then 
$K\equiv \alpha J_M\alpha^{-1}$ is a R-imaginary operator of
$(N,W)$. 
When considering $(N,W_{J_N})$ and $(M,V_{J_M})$, where 
$W_{J_N}\equiv \{(1-iJ_N) w: w\in W\}$,  we get that 
$(1-J_N K)(1-K J_N)=r$ as an operator of $W_{J_N}$, where $r$ is a
non-negative  null real scalar. If $c=0$ then $K=-J_N$ and $\alpha$
defines an anti-isometry between $(M,V_{J_M})$ and $(N,W_{J_N})$.
If $c\neq 0$ then $(1-J_N K)\alpha c^{-\frac{1}{2}}$ is an isometry
between $(M,V_{J_M})$ and $(N,W_{J_N})$.
\end{proof}

\begin{prop}
The space of normal operators of a R-real Schur system is
isomorphic to $\mathbb{R}$.
\end{prop}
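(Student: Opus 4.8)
The plan is to transfer the problem to the complexification, where the Schur hypothesis is available, and then descend back to the real Hilbert space.

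I would start with the trivial inclusion: for each $a\in\mathbb{R}$ the operator $a\,\mathrm{id}_W$ is a bounded self-adjoint endomorphism of $W$ commuting with every $m\in M$, hence a normal operator of $(M,W)$; so $\{a\,\mathrm{id}_W:a\in\mathbb{R}\}\cong\mathbb{R}$ sits inside the space of normal operators. For the converse, let $S$ be any normal operator of $(M,W)$ and define its complexification $S^c:W^c\to W^c$ by $S^c(u+iv)\equiv Su+iSv$ for $u,v\in W$. Three checks will show that $S^c$ is a normal operator of $(M,W^c)$: (i) $S^c$ is complex linear and bounded; (ii) $S^c$ commutes with each $m\in M$ acting on $W^c$, since $m$ is linear and $Sm=mS$ on $W$; (iii) a direct computation with the inner product $\langle\,,\,\rangle_c$ of $W^c$ gives $(S^c)^\dagger=(S^\dagger)^c$, and because $S$ commutes with $S^\dagger$ on $W$ their complexifications commute, so $S^c$ commutes with $(S^c)^\dagger$.

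Then I would invoke the hypothesis: since $(M,W)$ is R-real, its complexification $(M,W^c)$ is a Schur C-real system, in particular a nonzero Schur system, so every normal operator of $(M,W^c)$ is a complex scalar multiple of the identity; thus $S^c=\lambda\,\mathrm{id}_{W^c}$ for some $\lambda\in\mathbb{C}$. Restricting to $W\subset W^c$ yields $Sw=\lambda w$ for all $w\in W$. Since $S$ maps $W$ into itself and $W\neq\{0\}$, choosing $w\neq 0$ and writing $\lambda=a+ib$ forces $b\,(iw)=\lambda w-aw\in W\cap iW=\{0\}$, hence $b=0$ and $S=a\,\mathrm{id}_W$ with $a\in\mathbb{R}$. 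Together with the first inclusion, this identifies the normal operators of $(M,W)$ with $\{a\,\mathrm{id}_W:a\in\mathbb{R}\}\cong\mathbb{R}$.

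I expect the only genuine work to be item (iii): unwinding the slightly asymmetric formula $\langle v_r+iv_i,u_r+iu_i\rangle_c=\langle v_r,u_r\rangle+\langle v_i,u_i\rangle+i\langle v_r,u_i\rangle-i\langle v_i,u_r\rangle$ to confirm $(S^c)^\dagger=(S^\dagger)^c$; the remaining points are routine bookkeeping with the real orthogonal decomposition $W^c=W\oplus iW$ and with the behaviour of $M$ under complexification.
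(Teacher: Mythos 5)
Your proposal is correct and follows essentially the same route as the paper: pass to the complexification, use the Schur property of the C-real system to conclude the operator is a complex scalar, and observe that preserving the real form forces the scalar to be real. Your write-up is in fact more explicit than the paper's terse argument, since you verify that the complexified operator is indeed a normal operator of $(M,W^c)$ before invoking the Schur hypothesis.
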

\begin{proof}
Let $(M,V)$ be a C-real Schur system,
with $\theta$ the C-conjugation operator.
If there is an endomorphism $\alpha:V\to V$ such that 
$\alpha M=M\alpha$, we know that $\alpha=re^{i\varphi}$. Then the endomorphism of $V_\theta$
is a real number.
\end{proof}

\begin{prop}
The space of normal operators of a R-complex Schur system is
isomorphic to $\mathbb{C}$.
\end{prop}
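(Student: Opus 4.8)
The plan is to mirror the proof of the preceding proposition, passing to the complexification and using Schur's lemma. By Definition~\ref{defn:Rsystem} and Definition~\ref{defn:schur}, a R-complex Schur system is, up to isometry, the realification $(M,V^{r})$ of a Schur C-complex system $(M,V)$; let $J$ be the R-imaginary operator of $(M,V^{r})$ given by $J(v)\equiv iv$, so that the $\mathbb{C}$-linear extension of $J$ to $(V^{r})^{c}=V\oplus\bar V$ has the two summands as its $\pm i$ eigenspaces. One checks directly that every $aI+bJ$ with $a,b\in\mathbb{R}$ is a normal operator of $(M,V^{r})$ (its adjoint is $aI-bJ$, and $(aI+bJ)(aI-bJ)=(a^{2}+b^{2})I=(aI-bJ)(aI+bJ)$) and that $a+ib\mapsto aI+bJ$ is an isomorphism of real algebras onto $\{aI+bJ:a,b\in\mathbb{R}\}$. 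So the whole point is the reverse inclusion: every normal operator of $(M,V^{r})$ must be of this form.

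For that I would prove the stronger statement that the entire real-linear commutant of $M$ in $V^{r}$ equals $\{aI+bJ\}$. A bounded real-linear $S\colon V^{r}\to V^{r}$ commuting with $M$ extends to a $\mathbb{C}$-linear operator $\tilde S$ on $(V^{r})^{c}=V\oplus\bar V$ commuting with $M$ and with the C-conjugation $\tau$ whose fixed real form is $V^{r}$ (namely $\tau(u+iv)\equiv u-iv$ for $u,v\in V^{r}$), and conversely every such $\tilde S$ restricts to an admissible $S$. Writing $\tilde S$ in block form with respect to $V\oplus\bar V$: the diagonal blocks are $M$-endomorphisms of the Schur systems $(M,V)$ and $(M,\bar V)$, hence scalars $\lambda I_{V}$ and $\mu I_{\bar V}$ by Schur's lemma~\cite{schur}; the off-diagonal blocks are $M$-intertwiners $\bar V\to V$ and $V\to\bar V$, and these vanish precisely because $(M,V)$ is C-complex, i.e.\ $V$ and $\bar V$ are inequivalent. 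Since $\tau$ is anti-linear and interchanges the two summands, the constraint $\tau\tilde S\tau=\tilde S$ forces $\mu=\bar\lambda$; restricting back to $V^{r}$, this $\tilde S$ is $(\mathrm{Re}\,\lambda)\,I+(\mathrm{Im}\,\lambda)\,J$. Hence the commutant, and a fortiori the set of normal operators, of $(M,V^{r})$ is $\{aI+bJ\}\cong\mathbb{C}$.

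The block computation and the check that $aI+bJ$ is normal are routine; the only real step is the vanishing of the off-diagonal blocks, which is just the defining property of a C-complex system together with Schur's lemma for inequivalent irreducibles. If one prefers to argue on $V^{r}$ directly, the same obstacle reappears: decompose a normal $S$ of $(M,V^{r})$ as $S=A+B$ with $A\equiv\frac{1}{2}(S-JSJ)$ complex-linear and $B\equiv\frac{1}{2}(S+JSJ)$ antilinear; since $J$ commutes with $M$, both $A$ and $B$ commute with $M$, so $A=aI+bJ$ by Schur, while $B^{2}$ is complex-linear and commutes with $M$, hence $B^{2}=cI$ for some $c\in\mathbb{C}$. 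Were $B$ nonzero: if $c\neq 0$ then $B$ would be an antilinear automorphism of $V$ commuting with $M$, making $V\cong\bar V$; if $c=0$ then the kernel of $B$ would be a closed, $M$-invariant, nonzero (it contains the range of $B$) and proper subspace of $V$; both contradict the hypotheses. So $B=0$ and $S=A=aI+bJ$, and we conclude that the space of normal operators of a R-complex Schur system is $\{aI+bJ:a,b\in\mathbb{R}\}\cong\mathbb{C}$.
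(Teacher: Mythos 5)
Your proposal is correct and, in its second (direct) form, is essentially the paper's own proof: the paper sets $K\equiv \alpha+J\alpha J$ (your $2B$), argues $K=0$ because $KK^\dagger$ is a nonnegative scalar whose positivity would produce an anti-unitary intertwiner, contradicting C-complexness, and then applies Schur to the remaining $J$-linear operator to conclude $\alpha=re^{J\theta}=a+bJ$. Your block decomposition on $V\oplus\bar V$ and your treatment of the degenerate case via $B^{2}=cI$ and $\ker B$ (where the paper instead uses $KK^\dagger=0\Rightarrow K=0$) are only minor variants of the same argument.
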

\begin{proof}
Let $(M,V)$ be a R-complex Schur
system, with $J$ the R-imaginary
operator. 
If there is a normal operator $\alpha$ of $(M,V)$, then
$KK^\dagger$ is a normal operator of the C-complex Schur system
$(M,V_{J})$, where $K\equiv
(\alpha+J\alpha J)$ and $V_{J}\equiv \{(1-iJ) v: v\in V\}$.
If $KK^\dagger=r>0$, then $\frac{K}{\sqrt{r}}$ is unitary and $V_J$ is
equivalent to $\overline{V}_{J}$ which would imply that $(M,V)$ is
C-pseudoreal. Therefore $K=0$ and hence $\alpha$ is a normal operator of 
$(M,V_{J})$, so $\alpha=r e^{J\theta}$.
\end{proof}

\begin{prop}
The space of normal operators of a R-pseudoreal Schur system is
isomorphic to $\mathbb{H}$ (quaternions).
\end{prop}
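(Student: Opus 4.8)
The plan is to realise the given R-pseudoreal Schur system $(M,W)$ as the realification of a C-pseudoreal Schur system and then to identify its commutant explicitly. By Definition \ref{defn:Rsystem}, $W^c=V\oplus\bar V$ with $(M,V)$ a C-pseudoreal Schur system. The canonical C-conjugation $\theta_0$ of $W^c$, $\theta_0(u+iv)\equiv u-iv$ for $u,v\in W$, commutes with $M$, fixes exactly $W$, and cannot leave $V$ invariant --- otherwise $\theta_0|_V$ would be a C-conjugation of $(M,V)$ --- so it interchanges $V$ and $\bar V$ and the real-linear map $v\mapsto v+\theta_0 v$ identifies $(M,V^r)$ with $(M,W)$ up to an irrelevant rescaling of the inner product. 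I would therefore work from now on with $(M,V^r)$, where $M$ acts $\mathbb{C}$-linearly on the complex Hilbert space $V$. On $V^r$ there are two operators commuting with $M$: the R-imaginary operator $J$, $J(v)\equiv iv$, and the anti-unitary operator $\theta$ of $(M,V)$ supplied by C-pseudoreality. Regarded on $V^r$, $\theta$ is real-linear, surjective, and conjugates the inner product of $V$, hence preserves the inner product of $V^r$, so it is an isometry of $(M,V^r)$; likewise $J$ is an isometry of $(M,V^r)$.

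The key point --- and the only place where C-pseudoreality, as opposed to C-reality, enters --- is that $\theta^2=-1$. Indeed $\theta^2$ is an isometry of the Schur system $(M,V)$, hence a phase $e^{i\phi}$; matching this phase against the anti-linearity of $\theta$ (via $\theta\theta^2=\theta^2\theta$) forces $e^{i\phi}=\pm1$, and $+1$ is excluded because it would make $\theta$ a C-conjugation of $(M,V)$. Since $\theta$ is anti-linear one also has $J\theta=-\theta J$, whence $(J\theta)^2=-1$, and $J^\dagger=-J$, $\theta^\dagger=-\theta$, $(J\theta)^\dagger=-J\theta$. Consequently the real span of $\{1,J,\theta,J\theta\}$ is a subalgebra of the operators of $(M,V^r)$ commuting with $M$, closed under the adjoint, with multiplication and $\dagger$ those of $\mathbb{H}$ and its conjugation; in particular every element $a+bJ+c\theta+dJ\theta$ of it is normal.

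For the converse I would show this copy of $\mathbb{H}$ is all of the commutant. Given any bounded real-linear $\alpha$ commuting with $M$, split it with respect to $J$ as $\alpha=\alpha_++\alpha_-$, $\alpha_\pm\equiv\tfrac12(\alpha\mp J\alpha J)$; then $\alpha_+$ commutes with $J$ and is $\mathbb{C}$-linear, $\alpha_-$ anticommutes with $J$ and is $\mathbb{C}$-anti-linear, and both still commute with $M$. Now $\alpha_+$ is a bounded $\mathbb{C}$-linear operator commuting with the irreducible $M$, hence a complex scalar by the Schur property of $(M,V)$ (Definition \ref{defn:schur}), that is $\alpha_+=a+bJ$ with $a,b\in\mathbb{R}$; and $\alpha_-\theta^{-1}$ is bounded, $\mathbb{C}$-linear (a composition of two anti-linear maps) and commutes with $M$, hence a complex scalar, that is $\alpha_-=c\theta+dJ\theta$ with $c,d\in\mathbb{R}$. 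Thus $\alpha=a+bJ+c\theta+dJ\theta$ lies in the algebra of the previous paragraph, so the operators of $(M,W)$ commuting with $M$ form exactly $\mathbb{H}$; being all normal, they are precisely the normal operators of $(M,W)$.

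I expect the main obstacle to be the bookkeeping around $\theta$: checking carefully that the abstract anti-unitary operator of $(M,V)$ descends to a genuine normal --- indeed unitary --- operator of the real system $(M,V^r)$, and pinning down $\theta^2=-1$ rather than $+1$, since this sign is exactly what separates the quaternionic case from the real one. A secondary point is the passage from ``the normal operators form $\mathbb{C}$'', in the definition of a Schur system, to ``every operator commuting with $M$ is a complex scalar'', which I would obtain as usual by taking self-adjoint real and imaginary parts and using that $M$, as a unitary representation, is closed under the adjoint.
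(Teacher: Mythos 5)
Your proof is correct and follows essentially the same route as the paper: split the commutant into the part commuting with $J$ (a complex scalar $a+bJ$ by Schur) and the part anticommuting with $J$, handle the latter by composing with an anti-unitary of square $-1$, and assemble $\{1,J,\theta,J\theta\}$ into a copy of $\mathbb{H}$. The only differences are presentational --- you work on the realification $V^r$ with the anti-unitary $\theta$ of the complex system, while the paper works on the real system with a $J$-anticommuting unitary $K$ (the same object under the standard dictionary) and normalizes via $TT^\dagger$ instead of composing with $\theta^{-1}$; your explicit treatment of $\theta^2=-1$ and of the extension of the Schur property to the full commutant is, if anything, more careful than the paper's.
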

\begin{proof}
Let $(M,V)$ be a R-pseudoreal Schur
system, with $J$ the R-imaginary
operator. If there is an endomorphism $\alpha$ of $(M,V)$, then
$SS^\dagger$ and $TT^\dagger$ are a self-adjoint endomorphisms of the C-complex Schur system
$(M,V_{J})$, where $S\equiv(\alpha-J\alpha J)/2$,
$T\equiv(\alpha+J\alpha J)/2$  and 
$V_{J}\equiv\{(1-iJ) v: v\in V\}$.
Let $K$ be an unitary operator of
$(M,V)$ and anti-commuting with $J$, then $K^2=e^{J\theta}$ and 
$Ke^{J\theta}=K(K^2)=(K^2)K=e^{J\theta}K$, therefore $K^2=-1$.
If $TT^\dagger=t>0$, then $\frac{T}{\sqrt{t}}$ is unitary and
anti-commutes with $J$, $TK$ is a normal endomorphism of  
$(M,V_{J})$ and therefore $T=Kc+KJd$; if $TT^\dagger=0$ then 
$c=d=0$.
If $SS^\dagger=s>0$, then $\frac{S}{\sqrt{s}}$ is unitary and
commutes with $J$, $S$ is a normal endomorphism of  
$(M,V_{J})$ and therefore $S=a+Jb$; if $SS^\dagger=0$ then 
$a=b=0$.

Therefore $\alpha=S+T=a+Jb+Kc+KJd$, 
which is isomorphic to the quaternions.
\end{proof}

\subsection{Finite-dimensional representations}
\label{section:Finite}

\begin{lem}[Schur's lemma for finite-dimensional representations\cite{schur}]
\label{lem:commuting}
Consider an irreducible finite-dimensional representation $(M_G,V)$ of
a Lie group $G$ on a complex Hilbert space $V$. If the representation
$(M_G,V)$ is irreducible then any endomorphism $S$ of $(M_G,V)$ is a
complex scalar.
\end{lem}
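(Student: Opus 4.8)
The plan is to run the classical Schur argument, using only that $V$ is finite-dimensional over the algebraically closed field $\mathbb{C}$; the Lie group structure of $G$ plays no role here, so this is really a statement about an arbitrary irreducible finite-dimensional complex system $(M_G,V)$ in the sense of the Irreducibility definition above, with $G$ entering merely through the set of operators $M_G$.

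First I would unwind the terminology: an endomorphism $S$ of $(M_G,V)$ is a bounded linear map $S:V\to V$ commuting with every $m\in M_G$. Since $V$ is a nonzero finite-dimensional complex vector space, the characteristic polynomial of $S$ has a root $\lambda\in\mathbb{C}$ by algebraic closedness of $\mathbb{C}$, so $S-\lambda\,\mathrm{id}_V$ fails to be invertible and $W\equiv\ker(S-\lambda\,\mathrm{id}_V)$ is a nonzero linear subspace of $V$.

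Next I would show that $(M_G,W)$ is a subsystem of $(M_G,V)$. Because $S$ commutes with each $m\in M_G$, so does $S-\lambda\,\mathrm{id}_V$; hence for $w\in W$ and $m\in M_G$ we get $(S-\lambda\,\mathrm{id}_V)(m w)=m(S-\lambda\,\mathrm{id}_V)w=0$, so $m w\in W$, i.e. $W$ is invariant under the system action. Moreover $W$ is closed, since every linear subspace of a finite-dimensional Hilbert space is closed, so the topological irreducibility in the definition reduces to the algebraic one in this setting. As $W\neq\{0\}$ and $(M_G,V)$ is irreducible, we conclude $W=V$, that is $S=\lambda\,\mathrm{id}_V$ is a complex scalar.

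I do not expect a genuine obstacle. The only points requiring care are the bookkeeping that ``endomorphism of $(M_G,V)$'' means commutation with all of $M_G$ (so that $\ker(S-\lambda\,\mathrm{id}_V)$ is invariant) and the observation that finite-dimensionality makes the closedness clause of the subsystem definition automatic. The one essential input is the algebraic closedness of $\mathbb{C}$, which supplies the eigenvalue $\lambda$; over $\mathbb{R}$ the statement can fail, which is precisely why the real case is treated separately via the complexification and the R-real/R-pseudoreal/R-complex classification of Section \ref{section:Systems}.
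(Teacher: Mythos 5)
Your proof is correct and complete: it is the standard Schur argument (extract an eigenvalue $\lambda$ of $S$ via algebraic closedness of $\mathbb{C}$ in finite dimensions, observe that $\ker(S-\lambda\,\mathrm{id}_V)$ is a nonzero closed invariant subspace, and invoke irreducibility). The paper itself offers no proof of this lemma---it simply cites the literature---so there is nothing to compare against; your argument is exactly the one the cited reference supplies, and your closing remark about why the real case needs the separate complexification machinery is accurate.
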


\begin{lem}
Consider an irreducible complex finite-dimensional representation $(M,V)$ on a
complex Hilbert space. Then there is internal product such that:
1) The system is C-real iff there is an anti-linear involution of $(M,V)$;\\
2) The system is C-pseudoreal iff 
there is not an anti-linear bounded involution of $(M,V)$, but there is an
anti-isomorphism of $(M,V)$;\\
3) The system is C-complex iff 
there is no anti-isomorphism of $(M,V)$.
\end{lem}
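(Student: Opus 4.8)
The plan is to classify the anti-linear maps commuting with $M$ by a Schur-type argument and then, in each case, exhibit an inner product obtained by ``averaging'' an arbitrary one along the relevant anti-linear map. The point is that the labels C-real / C-pseudoreal / C-complex depend a priori on the inner product, whereas ``anti-linear involution'', ``anti-isomorphism'' do not; the lemma asserts that the averaged inner product is the one for which the two notions match.

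First I would record a finite-dimensional Schur's lemma for maps not required to respect any inner product. Since $(M,V)$ is irreducible and $V$ is finite-dimensional, the kernel and image of any endomorphism or anti-endomorphism of $(M,V)$ are invariant subspaces, so any nonzero such map is bijective. By Lemma \ref{lem:commuting} the endomorphisms of $(M,V)$ form a copy of $\mathbb{C}$; hence if $\theta_1,\theta_2$ are nonzero anti-endomorphisms then $\theta_2\theta_1^{-1}$ is an endomorphism, a complex scalar, so the anti-endomorphisms of $(M,V)$ are either $\{0\}$ or a one-dimensional complex space $\mathbb{C}\,\theta_0$ with $\theta_0$ invertible. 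In the latter case $\theta_0^2$ is an endomorphism, hence a scalar $\lambda$, and comparing $\theta_0(\theta_0^2 v)=\bar\lambda\,\theta_0 v$ with $(\theta_0^2)\theta_0 v=\lambda\,\theta_0 v$ forces $\lambda\in\mathbb{R}$; invertibility of $\theta_0$ gives $\lambda\neq 0$, so after rescaling $\theta_0$ by a positive real we may assume $\theta_0^2=\varepsilon$ with $\varepsilon\in\{+1,-1\}$.

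Next I would build the inner product. If there is no nonzero anti-endomorphism, any inner product makes $(M,V)$ C-complex (there is no anti-unitary operator commuting with $M$ since there is no anti-linear one at all), there is no anti-isomorphism, and the three biconditionals hold trivially. If $\theta_0^2=\varepsilon$, pick any Hermitian inner product $\langle\cdot,\cdot\rangle_0$ on $V$ and set
\begin{align*}
\langle x,y\rangle\equiv\tfrac12\big(\langle x,y\rangle_0+\overline{\langle\theta_0 x,\theta_0 y\rangle_0}\big).
\end{align*}
One checks this is again a positive-definite Hermitian form (positivity from $\langle x,x\rangle=\tfrac12(\|x\|_0^2+\|\theta_0 x\|_0^2)$), and since $\theta_0^2=\pm1$,
\begin{align*}
\langle\theta_0 x,\theta_0 y\rangle=\tfrac12\big(\langle\theta_0 x,\theta_0 y\rangle_0+\overline{\langle x,y\rangle_0}\big)=\overline{\langle x,y\rangle},
\end{align*}
so $\theta_0$ is anti-unitary for this inner product, and all endomorphisms stay bounded because $V$ is finite-dimensional.

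Finally I would read off the statements with this inner product. If $\varepsilon=+1$, $\theta_0$ is an anti-unitary involution commuting with $M$, i.e.\ a C-conjugation operator, so $(M,V)$ is C-real; conversely a C-conjugation operator is in particular an anti-linear involution, and $(M,V)$ cannot then be C-pseudoreal or C-complex, which settles (1)--(3). If $\varepsilon=-1$, $\theta_0$ is an anti-unitary operator of $(M,V)$, but no anti-linear involution exists at all, since every anti-endomorphism is $c\theta_0$ with $(c\theta_0)^2=-|c|^2\neq 1$; hence there is no C-conjugation operator but there is an anti-unitary one, so $(M,V)$ is C-pseudoreal, $\theta_0$ provides the required anti-isomorphism, and again (1)--(3) follow. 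I expect the only delicate point to be the bookkeeping that this single inner product makes all three biconditionals true simultaneously (together with the observation that the trichotomy ``no anti-endomorphism / $\theta_0^2=+1$ / $\theta_0^2=-1$'' is itself inner-product independent), the rest being routine verification.
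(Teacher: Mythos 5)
Your proof is correct and takes essentially the same route as the paper's: Schur's lemma makes $S^2$ a scalar, anti-linearity of $S$ forces that scalar to be real, rescaling gives $S^2=\pm 1$, and one then passes to an inner product for which $S$ is anti-unitary. The paper merely asserts the existence of that inner product, whereas you construct it explicitly by averaging and also spell out the trichotomy bookkeeping; these are welcome details rather than a different method.
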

\begin{proof}
Let $S$ be an anti-isomorphism of an irreducible representation $(M,V)$.
Then $S^2=re^{i\varphi}$. But $S^2$ commutes with $S$ which is anti-linear, 
so $S^2=\pm r$. So, there is an internal product such that $S$ is anti-unitary.
\end{proof}

\begin{defn}
A finite-dimensional system is completely reducible iff it can
be expressed as a direct sum of irreducible systems.
\end{defn}

\begin{rmk}[Weyl theorem]
All finite-dimensional representations of a semi-simple Lie group 
(such as SL(2,C)) are completely reducible.
\end{rmk}

\subsection{Unitary representations and Systems of Imprimitivity}
\label{section:Unitary}
\begin{defn}[Normal System]
A System $(M,V)$ is normal iff $M$ is a set $M$ of normal operators on $V$
closed under Hermitian conjugation---for all $m\in M$ there is $n\in M$ such that
$n=m^\dagger$.
\end{defn}

A unitary representation or a System of Imprimitivity are examples of a normal System.

\begin{rmk}
$W^\bot$ is the orthogonal complement of the subspace $W$ of the
Hilbert space $V$ if:\\
1) $V=W \oplus W^\bot$, that is, 
all $v\in V$ can be expressed as $v=w+x$, where $w\in W$ and $x\in W^\bot$;\\
2) if $w\in W$ and $x\in W^\bot$, then $x^\dagger w=0$.
\end{rmk}

\begin{lem}
\label{lem:orthogonal}
Consider a normal system $(M,V)$. Then, for all subsystem
$(M,W)$ of $(M_G,V)$, $(M_G,W^\bot)$ is also a subsystem of
$(M,V)$, where $W^\bot$ is the orthogonal complement of the subspace $W$.
\end{lem}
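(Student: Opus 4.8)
The plan is to exploit the one hypothesis that distinguishes a normal system from an arbitrary one — that $M$ is closed under Hermitian conjugation — together with two standard Hilbert-space facts: the orthogonal complement of any closed subspace is itself closed, and $V=W\oplus W^\bot$ holds whenever $W$ is closed. Since $(M,W)$ being a subsystem already guarantees $W$ is closed, the decomposition $V=W\oplus W^\bot$ is available and $W^\bot$ is closed; hence to conclude that $(M,W^\bot)$ is a subsystem it remains only to check that $W^\bot$ is invariant under the action of every $m\in M$.

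First I would fix $m\in M$. By the definition of a normal system there is some $n\in M$ with $n=m^\dagger$. Take arbitrary $x\in W^\bot$ and $w\in W$ and compute, using that $n=m^\dagger$, the identity $<w,mx>=<m^\dagger w,x>=<nw,x>$. Now, since $(M,W)$ is a subsystem and $n\in M$, invariance of $W$ gives $nw\in W$; and because $x\in W^\bot$, the inner product $<nw,x>$ vanishes. As $w\in W$ was arbitrary, this shows $mx$ is orthogonal to all of $W$, i.e.\ $mx\in W^\bot$. Since $m\in M$ was arbitrary, $W^\bot$ is invariant under the whole of $M$.

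Finally I would assemble the two pieces: $W^\bot$ is a closed subspace of $V$ and it is invariant under $M$, so by the definition of subsystem $(M,W^\bot)$ is a subsystem of $(M,V)$, which is the claim.

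I do not anticipate a genuine obstacle here; the only point requiring care is recognizing that the step $<w,mx>=<nw,x>$ with $nw\in W$ is precisely where closure of $M$ under adjoints is used — for a general (non-normal) system $W^\bot$ need not be invariant — so the normality hypothesis is exactly what the argument consumes, and nothing is needed beyond the defining property of the orthogonal complement recalled in the preceding note.
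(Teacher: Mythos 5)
Your proposal is correct and follows essentially the same argument as the paper: fix $m\in M$, use the adjoint identity together with the hypothesis that $m^\dagger=n\in M$ and the invariance of $W$ under $n$ to conclude $mx\perp W$ for every $x\in W^\bot$. The only difference is that you also record explicitly that $W^\bot$ is closed, which the paper leaves implicit.
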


\begin{proof} 
Let $(M,W)$ be a subsystem of $(M,V)$.
$W^\bot$ is the orthogonal complement of $W$.

For all $x\in W^\bot$, $w\in W$ and $m\in M$, 
$<m x, w>=<x,m^\dagger w>$. 

Since $W$ is invariant and there is $n\in M$, such
that $n=m^\dagger$, then $w'\equiv (m^\dagger w)\in W$.
 
Since $x\in W^\bot$ and $w'\in W$, then $<x, w'>=0$.

This implies that if $x\in W^\bot$), also $(m x)\in W^\bot$, for all $m\in M$.
\end{proof}

\begin{lem}
\label{lem:commuting}
Any Schur normal system on a complex Hilbert space is irreducible.
\end{lem}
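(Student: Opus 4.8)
The plan is to argue by contradiction, extracting from a hypothetical reduction a normal operator of $(M,V)$ that fails to be a complex multiple of the identity, which is impossible since the normal operators of a Schur system are isomorphic to $\mathbb{C}$ (i.e.\ each equals $cI$ for some $c\in\mathbb{C}$). So suppose $(M,V)$ is a Schur normal system that is \emph{not} irreducible: there is a closed, $M$-invariant subspace $W$ with $\{0\}\subsetneq W\subsetneq V$.

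First I would invoke Lemma~\ref{lem:orthogonal}: because $(M,V)$ is a normal system ($M$ is closed under Hermitian conjugation), the orthogonal complement $W^\bot$ is again a subsystem, so $V=W\oplus W^\bot$ with both summands invariant under every $m\in M$. Let $P$ be the orthogonal projection onto $W$. Invariance of $W$ and of $W^\bot$ gives $Pm=mP$ for all $m\in M$, and $P$ is self-adjoint so it trivially commutes with $P^\dagger=P$; hence $P$ is a normal operator of $(M,V)$. The Schur hypothesis then forces $P=cI$ for some $c\in\mathbb{C}$. Since $P^2=P$ we get $c^2=c$, so $c\in\{0,1\}$, i.e.\ $W=\{0\}$ or $W=V$, contradicting the choice of $W$. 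Therefore no such $W$ exists and $(M,V)$ is irreducible.

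The only delicate point is the very first step: in infinite dimensions a closed invariant subspace need not a priori admit an invariant orthogonal complement, so the passage from ``reducible'' to ``genuine orthogonal direct sum of subsystems'' rests entirely on the normality of the system, which is precisely the content of Lemma~\ref{lem:orthogonal}. Everything afterwards—that the associated projection commutes with $M$ and with its adjoint, and that an idempotent scalar is $0$ or $1$—is routine.
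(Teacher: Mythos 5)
Your proof is correct and follows essentially the same route as the paper's: both use Lemma~\ref{lem:orthogonal} to get that $W^\bot$ is a subsystem, take the orthogonal projection $P$ onto $W$, check that $P$ is self-adjoint, idempotent and commutes with every $m\in M$, and then apply the Schur hypothesis to force $P\in\{0,1\}$. Your explicit remark that the normality of the system is exactly what guarantees the invariant orthogonal complement is a fair observation, but it is the same mechanism the paper relies on, so there is no substantive difference.
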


\begin{proof}
Let $(M,W)$ and $(M,W^\bot)$ be sub-systems of the complex Schur system $(M,V)$,
where $W^\bot$ is the orthogonal complement of $W$.

There is a bounded endomorphism $P: V\to V$, such that, 
for $w,w'\in W$, $x,x'\in W^\bot$, $P(w+x)=w$. $P^2=P$ and $P$ is
hermitian:
\begin{align}
&<w'+x',P (w+x)>=<w',w>=
<P(w'+x'),w+x>
\end{align}
Let $w'\equiv m w\in W$ and $x'\equiv m x\in
W^\bot$:
\begin{align}
m P(w+x)&=m w=w'\\
P m(w+x)&=P(w'+x')=w'
\end{align}
Which implies that $P$ commutes with all $m\in M$, so  $P\in\{0,1\}$. 
If $P=1$, then $W=V$, if $P=0$, then $W$ is the null space.
\end{proof}

So a complex Schur normal system is irreducible, and hence, from Defns.\ref{defn:Rsystem},\ref{defn:schur} and Prop.\ref{prop:Rirreducible}, 
a real Schur normal system is also irreducible.

\begin{lem}[Schur's lemma for unitary representations\cite{schur}]
\label{lem:commuting}
Consider an irreducible unitary representation $(M,V)$ of a Lie
group $G$ on a complex Hilbert space $V$. If the representation
$(M,V)$ is irreducible then any normal operator $N$ of $(M,V)$ is a
scalar.
\end{lem}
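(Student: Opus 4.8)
The plan is to invoke the spectral theorem for bounded normal operators on a \emph{complex} Hilbert space --- precisely the tool that has no real analogue, which is why the hypothesis insists that $V$ be complex. First I would record the adjoint bookkeeping. Since $(M,V)$ is a unitary representation it is a normal system, so $M$ is closed under Hermitian conjugation. Hence if $N$ is a normal operator of $(M,V)$ --- bounded, $NN^\dagger=N^\dagger N$, and $Nm=mN$ for all $m\in M$ --- then also $N^\dagger m=mN^\dagger$ for all $m\in M$: indeed $m^\dagger\in M$, so $Nm^\dagger=m^\dagger N$, and taking adjoints gives $mN^\dagger=N^\dagger m$. Thus every $m\in M$ commutes with both $N$ and $N^\dagger$.

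Second I would reduce irreducibility to a statement about projections: every orthogonal projection commuting with $M$ is $0$ or $1$. If $P=P^\dagger=P^2$ commutes with every $m\in M$, then $W\equiv PV$ is a closed $M$-invariant subspace, so $(M,W)$ is a subsystem; by Lemma~\ref{lem:orthogonal}, $(M,W^\bot)$ is a subsystem too, and irreducibility forces $W\in\{0,V\}$, i.e.\ $P\in\{0,1\}$.

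Third --- the core step --- I would apply the spectral theorem to $N$: there is a regular projection-valued measure $E$ on the Borel sets of $\mathbb{C}$, supported on the nonempty compact spectrum $\sigma(N)$, with $N=\int\lambda\,dE(\lambda)$, and every spectral projection $E(\Delta)$ lies in the von Neumann algebra generated by $N$ and $N^\dagger$. Consequently each $E(\Delta)$ commutes with every operator commuting with both $N$ and $N^\dagger$, hence with every $m\in M$ by the first step; so by the second step $E(\Delta)\in\{0,1\}$ for every Borel set $\Delta$. Then $\sigma(N)$ must be a single point: if $\lambda_1\neq\lambda_2$ both lay in $\sigma(N)$, pick disjoint open sets $\Delta_1\ni\lambda_1$, $\Delta_2\ni\lambda_2$; each $E(\Delta_i)\neq0$ since a point of the support charges every neighbourhood, so $E(\Delta_1)=E(\Delta_2)=1$, contradicting $E(\Delta_1)E(\Delta_2)=E(\Delta_1\cap\Delta_2)=E(\emptyset)=0$. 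Hence $\sigma(N)=\{\lambda_0\}$, and since $N-\lambda_0$ is normal with spectral radius $0$ while normal operators satisfy $\|T\|=r(T)$, we conclude $N=\lambda_0\,1$.

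The main obstacle is entirely contained in the third step: one needs the full spectral theorem for bounded normal operators on a complex Hilbert space together with the double-commutant-type fact that its spectral projections commute with everything commuting with $N$ and $N^\dagger$. An alternative route that sidesteps the complex-normal version is to write $N=A+iB$ with $A=\tfrac12(N+N^\dagger)$ and $B=\tfrac1{2i}(N-N^\dagger)$ self-adjoint and mutually commuting (by normality of $N$), note that $A$ and $B$ each commute with all of $M$, and apply the self-adjoint spectral theorem twice to get $A=a\,1$ and $B=b\,1$, hence $N=(a+ib)\,1$. Either way the appeal to a spectral decomposition that exists only over $\mathbb{C}$ is unavoidable, and this is exactly the point at which the real case diverges.
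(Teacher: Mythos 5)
Your proof is correct, but there is nothing in the paper to compare it against: the paper states this lemma as a quoted standard result, attributed to the cited reference on Schur's lemma, and gives no proof of it (what it does prove nearby is the complementary statement that a Schur normal system on a complex Hilbert space is irreducible, via the orthogonal projection onto an invariant subspace). Your argument therefore supplies the missing proof, and it is sound at each step: using the closure of $M$ under Hermitian conjugation to pass from $Nm=mN$ to $N^\dagger m=mN^\dagger$ neatly avoids any appeal to Fuglede's theorem; the reduction of irreducibility to the dichotomy $P\in\{0,1\}$ for orthogonal projections commuting with $M$ is exactly the mechanism the paper itself employs in the converse lemma (and is consistent with the paper's definition of subsystem, since the range of such a projection is a closed invariant subspace); and the spectral-measure argument forcing $\sigma(N)$ to be a singleton, followed by $\|N-\lambda_0\|=r(N-\lambda_0)=0$ for normal operators, is the standard route. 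The alternative decomposition $N=A+iB$ into commuting self-adjoint parts, with two applications of the self-adjoint spectral theorem, is equally valid and perhaps closer in spirit to how the result is usually quoted. Your closing remark is also apt: the essential use of the spectral theorem over $\mathbb{C}$ is precisely why the real case requires the separate R-real/R-pseudoreal/R-complex analysis that the paper develops, where the commutant can be $\mathbb{R}$, $\mathbb{C}$ or $\mathbb{H}$ rather than just scalars.
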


\begin{defn}
A unitary system is completely reducible iff it can be expressed as a
direct integral of irreducible systems.
\end{defn}

\begin{rmk}
All unitary representations of a separable locally compact group 
(such as the Poincare group) are completely reducible.
\end{rmk}

\begin{defn}
Consider a measurable space $(X, M)$, where $M$ is a $\sigma$-algebra
of subsets of $X$. A projection-valued-measure, $\pi$, is a map from
$M$ to the set of self-adjoint projections on a Hilbert space $H$ such
that $\pi(X)$ is the identity operator on $H$ and the function
$<\psi,\pi(A)\psi>$, with $A\in M$ is a measure on $M$, for all
$\psi\in H$.
\end{defn}

\begin{defn}
Suppose now that $X$ is a representation of $G$. 
Then, a system of imprimitivity is a pair $(U,\pi)$, where $\pi$ is a
projection valued measure and $U$ an unitary representation of $G$ on
the Hilbert space $H$, such that $U(g)\pi(A) U^{-1}(g)=\pi(gA)$.
\end{defn}

\begin{rmk}[Imprimitivity Theorem (thrm 6.12 \cite{commutingring,mackey,vara,squareimprimitivity})]
\label{lem:commuting} Let G be a Lie group, H its closed subgroup.
Let a pair (V, E) be a system of imprimitivity for G based on G/H on a separable complex Hilbert space.
Then there exists a representation L of H such that (V, E) is equivalent to
the canonical system of imprimitivity (V L , E L ). For any two representations L, L' of the subgroup 
H the corresponding canonical systems of imprimitivity are equivalent if and only
if L, L' are equivalent. The sets of normal operators commuting of C(VL, EL ) and of C(L) are isomorphic.
\end{rmk}

So we can define a map from the real to the complex systems of imprimitivity---analogous to the 
one for unitary representations.

\section{Finite-dimensional representations of the Lorentz group}
\label{section:Lorentz}
\subsection{Majorana, Dirac and Pauli Matrices and Spinors}
\begin{defn}
$\mathbb{F}^{m\times n}$ is the vector space of $m\times n$ matrices whose
entries are elements of the field $\mathbb{F}$.
\end{defn}

In the next remark we state the Pauli's fundamental theorem of gamma
matrices. The proof can be found in the reference\cite{diracmatrices}.
\begin{rmk}[Pauli's fundamental theorem]
\label{rem:fundamental}
Let $A^\mu$, $B^\mu$, $\mu\in\{0,1,2,3\}$, be two sets of
$4\times 4$ complex matrices verifying:
\begin{align}
A^\mu A^\nu+A^\nu A^\mu&=-2\eta^{\mu\nu}\\
B^\mu B^\nu+B^\nu B^\mu&=-2\eta^{\mu\nu}
\end{align}
Where $\eta^{\mu\nu}\equiv diag(+1,-1,-1-1)$ is the Minkowski metric.

1) There is an invertible complex matrix $S$ such that
$B^\mu=S A^\mu S^{-1}$, for all $\mu\in\{0,1,2,3\}$. 
$S$ is unique up to a non-null scalar.

2) If $A^\mu$ and $B^\mu$ are all unitary, then $S$ is unitary.
\end{rmk}

\begin{prop}
\label{prop:realsimilar}
Let $\alpha^\mu$, $\beta^\mu$, $\mu\in\{0,1,2,3\}$, be two sets of
$4\times 4$ real matrices verifying:
\begin{align}
\alpha^\mu\alpha^\nu+\alpha^\nu\alpha^\mu&=-2\eta^{\mu\nu}\\
\beta^\mu\beta^\nu+\beta^\nu\beta^\mu&=-2\eta^{\mu\nu}
\end{align}
Then there is a real matrix $S$, with $|det S|=1$, such that
$\beta^\mu=S\alpha^\mu S^{-1}$, for all  $\mu\in\{0,1,2,3\}$. $S$ is unique up to a signal. 
\end{prop}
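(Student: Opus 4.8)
The plan is to deduce this from Pauli's fundamental theorem (Remark~\ref{rem:fundamental}) by a real-linearization argument. First I would regard the real matrices $\alpha^\mu,\beta^\mu$ as elements of $\mathbb{C}^{4\times 4}$: they satisfy the very same Clifford relations, so part~1 of Remark~\ref{rem:fundamental} supplies an invertible complex matrix $\tilde S$ with $\beta^\mu=\tilde S\alpha^\mu\tilde S^{-1}$ for all $\mu$, unique up to a non-null complex scalar. Writing $\tilde S=S_1+iS_2$ with $S_1,S_2\in\mathbb{R}^{4\times 4}$ and separating real and imaginary parts of the identity $\beta^\mu\tilde S=\tilde S\alpha^\mu$ --- which is legitimate precisely because $\alpha^\mu$ and $\beta^\mu$ are real --- gives $\beta^\mu S_1=S_1\alpha^\mu$ and $\beta^\mu S_2=S_2\alpha^\mu$ separately.

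The next step is to manufacture from $S_1,S_2$ a single real invertible intertwiner. I would consider the polynomial $p(z)\equiv\det(S_1+zS_2)$, of degree at most $4$. Since $p(i)=\det\tilde S\neq 0$, $p$ is not the zero polynomial, hence it has finitely many roots and there is some $t\in\mathbb{R}$ with $p(t)\neq 0$. Then $S_0\equiv S_1+tS_2$ is real and invertible and still satisfies $\beta^\mu S_0=S_0\alpha^\mu$, i.e.\ $\beta^\mu=S_0\alpha^\mu S_0^{-1}$. To obtain the normalization $|\det S|=1$ I would simply rescale: with $\lambda\equiv|\det S_0|^{1/4}>0$, the matrix $S\equiv S_0/\lambda$ is real, still obeys $\beta^\mu=S\alpha^\mu S^{-1}$, and has $|\det S|=1$.

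For the uniqueness clause, suppose $S$ and $S'$ are two real matrices with $|\det S|=|\det S'|=1$ and $\beta^\mu=S\alpha^\mu S^{-1}=S'\alpha^\mu (S')^{-1}$. Then $R\equiv (S')^{-1}S$ is invertible and commutes with every $\alpha^\mu$; applying part~1 of Remark~\ref{rem:fundamental} with both sets of matrices equal to $\{\alpha^\mu\}$ forces $R=cI$ for a non-null scalar $c$, which is real because $R$ is real. Taking determinants, $|c|^4=|\det R|=1$, so $c=\pm 1$ and $S'=\pm S$.

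The only non-routine point is the middle step: passing from the complex intertwiner $\tilde S$ to a real one. The observation that makes it work is that the determinant of the pencil $S_1+zS_2$ is a single-variable polynomial already known to be non-zero at the complex point $z=i$, and a polynomial that is non-zero somewhere is non-zero at all but finitely many points --- in particular at some real value of $z$. Everything else (separating real and imaginary parts, rescaling to fix $|\det|$, and the Schur-type uniqueness) is immediate once Remark~\ref{rem:fundamental} is available.
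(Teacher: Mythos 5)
Your proof is correct, but the key middle step is handled by a genuinely different mechanism than the paper's. The paper also starts from Pauli's fundamental theorem, normalizes the complex intertwiner $T$ to $|\det T|=1$ so that it becomes unique up to a phase, and then exploits the reality of $\alpha^\mu,\beta^\mu$ by complex-conjugating the intertwining relation: $T^*$ is again an intertwiner of unit-modulus determinant, hence $T^*=e^{2i\theta}T$, and $S\equiv e^{i\theta}T$ is automatically real; the uniqueness up to sign then falls out of the same phase argument. You instead split $\tilde S=S_1+iS_2$, observe that each real part intertwines separately, and extract a real invertible intertwiner from the pencil via the polynomial $p(z)=\det(S_1+zS_2)$, which is non-zero at $z=i$ and hence at some real $t$; you then prove uniqueness in a separate Schur-type step. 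Your pencil argument is the more general one --- it shows that real representations which are equivalent over $\mathbb{C}$ are equivalent over $\mathbb{R}$ without ever invoking uniqueness of the intertwiner, so it would survive in reducible situations --- while the paper's conjugation-and-phase argument is shorter here and delivers existence and the sign ambiguity in one stroke, precisely because Pauli's theorem guarantees the intertwiner is unique up to a scalar. Incidentally, your normalization by $|\det S_0|^{1/4}$ is the right one; the paper's $T\equiv T'/|\det(T')|$ is evidently a slip for the fourth root.
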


\begin{proof}
From remark \ref{rem:fundamental}, we know that there is an
invertible matrix $T'$, unique up to a non-null scalar, such that $\beta^\mu=T'\alpha^\mu
T^{'-1}$.
Then $T\equiv T'/|det(T')|$ has $|det T|=1$ and it is unique up to a
complex phase.

Conjugating the previous equation, we get $\beta^\mu=T^*\alpha^\mu
T^{*-1}$.
Then $T^*=e^{i 2 \theta} T$ for some real number $\theta$.
Therefore $S\equiv e^{i \theta}T$ is a real matrix,
with $|det S|=1$, unique up to a signal.
\end{proof}

\begin{defn}
The Majorana matrices, $i\gamma^\mu$, $\mu\in\{0,1,2,3\}$, are $4\times
4$ complex unitary matrices verifying:
\begin{align}
(i\gamma^\mu)(i\gamma^\nu)+(i\gamma^\nu)(i\gamma^\mu)&=-2\eta^{\mu\nu}
\end{align}
The Dirac matrices are $\gamma^\mu\equiv
-i(i\gamma^\mu)$.
\end{defn}

In the Majorana bases, the Majorana matrices are $4\times 4$ real
orthogonal matrices. An example of the Majorana matrices in a
particular Majorana basis is:
\begin{align}
\begin{array}{llllll}
\label{basis}
i\gamma^1=&\left[ \begin{smallmatrix}
+1 & 0 & 0 & 0 \\
0 & -1 & 0 & 0 \\
0 & 0 & -1 & 0 \\
0 & 0 & 0 & +1 \end{smallmatrix} \right]&
i\gamma^2=&\left[ \begin{smallmatrix}
0 & 0 & +1 & 0 \\
0 & 0 & 0 & +1 \\
+1 & 0 & 0 & 0 \\
0 & +1 & 0 & 0 \end{smallmatrix} \right]&
i\gamma^3=\left[ \begin{smallmatrix}
0 & +1 & 0 & 0 \\
+1 & 0 & 0 & 0 \\
0 & 0 & 0 & -1 \\
0 & 0 & -1 & 0 \end{smallmatrix} \right]\\
\\
i\gamma^0=&\left[ \begin{smallmatrix}
0 & 0 & +1 & 0 \\
0 & 0 & 0 & +1 \\
-1 & 0 & 0 & 0 \\
0 & -1 & 0 & 0 \end{smallmatrix} \right]&
i\gamma^5=&\left[ \begin{smallmatrix}
0 & -1 & 0 & 0 \\
+1 & 0 & 0 & 0 \\
0 & 0 & 0 & +1 \\
0 & 0 & -1 & 0 \end{smallmatrix} \right]&
=-\gamma^0\gamma^1\gamma^2\gamma^3
\end{array}
\end{align}

In reference \cite{realgamma} it is proved that the set of five
anti-commuting $4\times 4$ real matrices is unique up to
isomorphisms. So, for instance, with $4\times 4$ real matrices it is not possible
to obtain the euclidean signature for the metric.

\begin{defn}
The Dirac spinor is a $4\times 1$ complex column matrix, $\mathbb{C}^{4\times 1}$.
\end{defn}

The space of Dirac spinors is a 4 dimensional complex vector space.

\begin{lem}
The charge conjugation operator $\Theta$, is an anti-linear involution
commuting with the Majorana matrices $i\gamma^\mu$. 
It is unique up to a complex phase.
\end{lem}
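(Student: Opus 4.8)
The plan is to fix a Majorana basis, in which the matrices $i\gamma^\mu$ have real entries; the general case then follows because any other choice of Majorana matrices is related to this one by a similarity (Pauli's fundamental theorem, Remark \ref{rem:fundamental}), which transports $\Theta$ accordingly. There are two things to establish: that at least one anti-linear involution commuting with all $i\gamma^\mu$ exists, and that any two such differ by a complex phase.

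For existence I would take $\Theta$ to be componentwise complex conjugation in the Majorana basis, $\Theta(v)\equiv v^*$ for $v\in\mathbb{C}^{4\times 1}$. This map is anti-linear and squares to the identity, and since every $i\gamma^\mu$ has real entries in this basis, $\Theta\big((i\gamma^\mu)v\big)=\big((i\gamma^\mu)v\big)^*=(i\gamma^\mu)v^*=(i\gamma^\mu)\Theta(v)$ for all $v$; hence $\Theta$ commutes with the Majorana matrices and is a charge conjugation operator.

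For uniqueness I would start from a second anti-linear involution $\Theta'$ commuting with every $i\gamma^\mu$, and consider $C\equiv\Theta'\Theta$. It is complex-linear (a composition of two anti-linear maps) and invertible (a composition of two bijections), and $C(i\gamma^\mu)=\Theta'\Theta(i\gamma^\mu)=\Theta'(i\gamma^\mu)\Theta=(i\gamma^\mu)\Theta'\Theta=(i\gamma^\mu)C$ for each $\mu$, i.e. $C(i\gamma^\mu)C^{-1}=i\gamma^\mu$. Since the intertwiner between a set of Clifford matrices and itself is unique up to a non-null scalar (Remark \ref{rem:fundamental}, with the identity as the obvious intertwiner), $C=\lambda\,1$ for some $\lambda\in\mathbb{C}\setminus\{0\}$, so $\Theta'=\lambda\,\Theta^{-1}=\lambda\,\Theta$. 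Imposing that $\Theta'$ is an involution and using anti-linearity, $1=(\Theta')^2=\lambda\Theta\lambda\Theta=\lambda\bar\lambda\,\Theta^2=|\lambda|^2$, so $|\lambda|=1$ and $\Theta'$, $\Theta$ differ by a complex phase, as claimed.

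The only step that is not mere bookkeeping is the appeal to uniqueness of the Clifford intertwiner, i.e. the irreducibility of the Dirac-spinor representation of the complexified Clifford algebra. If one prefers an argument independent of Pauli's theorem here, the same conclusion follows by checking that the sixteen products $1$, $i\gamma^\mu$, $i\gamma^\mu i\gamma^\nu$ ($\mu<\nu$), $i\gamma^\mu i\gamma^\nu i\gamma^\rho$ ($\mu<\nu<\rho$) and $i\gamma^0 i\gamma^1 i\gamma^2 i\gamma^3$ are linearly independent (a short trace computation using the anticommutation relations) and therefore span $\mathbb{C}^{4\times4}$; anything commuting with all $i\gamma^\mu$ then lies in the centre of $\mathbb{C}^{4\times4}$ and is scalar. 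I expect this linear-independence check to be the one place demanding a little care; the rest is routine.
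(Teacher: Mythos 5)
Your proposal is correct and follows essentially the same route as the paper: existence via componentwise complex conjugation in a Majorana basis (where the $i\gamma^\mu$ are real), and uniqueness by observing that the product of two charge conjugation operators is a complex-linear invertible matrix commuting with all $i\gamma^\mu$, hence a scalar by Pauli's fundamental theorem, with the involution condition forcing that scalar to be a phase. The extra remarks (transporting $\Theta$ between bases, and the alternative irreducibility argument via the sixteen linearly independent products) are sound but not needed beyond what the paper does.
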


\begin{proof}
In the Majorana bases, the complex conjugation is a charge conjugation
operator. Let $\Theta$ and $\Theta'$ be two charge conjugation operators
operators. Then, $\Theta\Theta'$ is a complex invertible matrix commuting with
$i\gamma^\mu$, therefore, from Pauli's fundamental theorem,  
$\Theta\Theta'=c$, where $c$ is a non-null complex scalar.
Therefore $\Theta'=c^*\Theta$ and from $\Theta'\Theta'=1$, we get that
$c^* c=1$.
\end{proof}

\begin{defn}
Let $\Theta$ be a charge conjugation operator.

The set of Majorana spinors, $Pinor$, is the set of Dirac spinors
verifying the Majorana condition (defined up to a complex phase):
\begin{align}
Pinor\equiv \{u\in \mathbb{C}^{4\times 1}: \Theta u= u\}
\end{align}
\end{defn}

The set of Majorana spinors is a 4 dimensional real vector space. 
Note that the linear combinations of
Majorana spinors with complex scalars do not verify the Majorana
condition.

There are 16 linear independent products of Majorana matrices. These
form a basis of the real vector space of endomorphisms of Majorana spinors,
$End(Pinor)$. In the Majorana bases, $End(Pinor)$ is the vector space of
$4\times 4$ real matrices.

\begin{defn}
The Pauli matrices $\sigma^k,\ k\in\{1,2,3\}$ are $2\times 2$
hermitian, unitary, anti-commuting, complex matrices.
The Pauli spinor is a  $2\times 1$ complex column matrix. The space of
Pauli spinors is denoted by $Pauli$.
\end{defn}

The space of Pauli spinors, $Pauli$, is a 2 dimensional complex vector
space and a 4 dimensional real vector space. The realification of
the space of Pauli spinors is isomorphic to the space of Majorana
spinors.

\subsection{On the Lorentz, SL(2,C) and Pin(3,1) groups}

\begin{rmk} 
The Lorentz group, $O(1,3)\equiv\{\lambda \in \mathbb{R}^{4\times 4}: \lambda^T \eta \lambda=\eta \}$, is the set of
real matrices that leave the metric, $\eta=diag(1,-1,-1,-1)$,
invariant.

The proper orthochronous Lorentz subgroup is defined by
$SO^+(1,3)\equiv\{\lambda \in
O(1,3): det(\lambda)=1, \lambda^0_{\ 0}>0 \}$. 
It is a normal subgroup. 
The discrete Lorentz subgroup of parity and time-reversal is 
$\Delta \equiv \{1,\eta,-\eta,-1\}$.

The Lorentz group is the semi-direct product of the previous
subgroups, $O(1,3)=\Delta \ltimes SO^+(1,3)$.  
\end{rmk}

\begin{defn}
The set $Maj$ is the 4 dimensional real space of the linear
combinations of the Majorana matrices, $i\gamma^\mu$:
\begin{align}
Maj\equiv\{a_\mu i\gamma^\mu: a_\mu\in \mathbb{R},\ \mu\in\{0,1,2,3\}\}
\end{align}
\end{defn}

\begin{defn}
$Pin(3,1)$ \cite{pin} is the group of endomorphisms of Majorana
spinors that leave the space $Maj$ invariant, that is:
\begin{align}
Pin(3,1)\equiv 
\Big\{S\in End(Pinor):\ |det S|=1,\ S^{-1}(i\gamma^\mu)S\in Maj,\ \mu\in\{0,1,2,3\} \Big\}
\end{align}
\end{defn}

\begin{prop}
\label{prop:map}
The map $\Lambda:Pin(3,1)\to O(1,3)$ defined by:
\begin{align}
(\Lambda(S))^\mu_{\ \nu}i\gamma^\nu\equiv S^{-1}(i\gamma^\mu)S
\end{align}
is two-to-one and surjective. It defines a group homomorphism.
\end{prop}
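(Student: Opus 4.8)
The plan is to construct the homomorphism $\Lambda$ explicitly and verify its properties in three stages: well-definedness, the homomorphism property, and the two-to-one-and-onto claim. First I would check that $\Lambda$ is well defined, i.e.\ that for every $S \in Pin(3,1)$ the matrix $\Lambda(S)$ determined by $(\Lambda(S))^\mu_{\ \nu}\, i\gamma^\nu = S^{-1}(i\gamma^\mu)S$ really lies in $O(1,3)$. By the defining property of $Pin(3,1)$, the right-hand side lies in $Maj$, so the real coefficients $(\Lambda(S))^\mu_{\ \nu}$ exist and are unique because the $i\gamma^\nu$ are linearly independent over $\mathbb{R}$ (they form part of a basis of $End(Pinor)$). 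To see $\Lambda(S)^T\eta\,\Lambda(S)=\eta$, I would use the Clifford relation $(i\gamma^\mu)(i\gamma^\nu)+(i\gamma^\nu)(i\gamma^\mu)=-2\eta^{\mu\nu}$: conjugating both sides by $S^{-1}(\cdot)S$ preserves anticommutators, so $S^{-1}(i\gamma^\mu)S$ and $S^{-1}(i\gamma^\nu)S$ satisfy the same relation with $\eta^{\mu\nu}$, and expanding in the basis $\{i\gamma^\rho\}$ yields $(\Lambda(S))^\mu_{\ \rho}(\Lambda(S))^\nu_{\ \sigma}\{i\gamma^\rho,i\gamma^\sigma\} = -2\eta^{\mu\nu}$, hence $\Lambda(S)^\mu_{\ \rho}\,\eta^{\rho\sigma}\,\Lambda(S)^\nu_{\ \sigma}=\eta^{\mu\nu}$, which is exactly the $O(1,3)$ condition.

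Next I would verify the homomorphism property. For $S,T \in Pin(3,1)$, one computes $(ST)^{-1}(i\gamma^\mu)(ST) = T^{-1}\big(S^{-1}(i\gamma^\mu)S\big)T = T^{-1}\big((\Lambda(S))^\mu_{\ \nu}\, i\gamma^\nu\big)T = (\Lambda(S))^\mu_{\ \nu}\,(\Lambda(T))^\nu_{\ \rho}\, i\gamma^\rho$, using that the coefficients are real scalars and commute through $T^{-1}(\cdot)T$. Comparing with $(\Lambda(ST))^\mu_{\ \rho}\, i\gamma^\rho$ and invoking linear independence of the $i\gamma^\rho$ gives $\Lambda(ST)=\Lambda(S)\Lambda(T)$. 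Closure of $Pin(3,1)$ under products and inverses is immediate from the determinant and $Maj$-invariance conditions, so this is routine.

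For surjectivity and the fibre count, I would argue as follows. Given $\lambda \in O(1,3)$, the matrices $B^\mu \equiv \lambda^\mu_{\ \nu}\, i\gamma^\nu$ satisfy $\{B^\mu,B^\nu\}=-2\eta^{\mu\nu}$ (same expansion as above, run in reverse), and they are real matrices in a Majorana basis. By Proposition~\ref{prop:realsimilar} there is a real matrix $S$ with $|\det S|=1$, unique up to sign, such that $B^\mu = S^{-1}(i\gamma^\mu)S$ (reading the proposition with $\alpha^\mu=i\gamma^\mu$, $\beta^\mu=B^\mu$ and renaming the intertwiner). This $S$ satisfies $S^{-1}(i\gamma^\mu)S \in Maj$ and $|\det S|=1$, hence $S\in Pin(3,1)$ and $\Lambda(S)=\lambda$, giving surjectivity; the uniqueness-up-to-sign clause of Proposition~\ref{prop:realsimilar} shows the fibre $\Lambda^{-1}(\lambda)$ is exactly $\{S,-S\}$, and $S\neq -S$ since $S$ is invertible, so $\Lambda$ is two-to-one.

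The main obstacle I anticipate is the bookkeeping in the surjectivity step: one must make sure the $B^\mu$ are genuinely \emph{real} matrices so that Proposition~\ref{prop:realsimilar} (which is stated for real gamma-matrix systems) applies, rather than the weaker complex Pauli fundamental theorem which would only give $S$ up to a phase and hence a circle's worth of preimages instead of a two-element fibre. This is where working in a fixed Majorana basis—where the $i\gamma^\mu$ are real orthogonal—is essential, and it is the one place where a little care is needed rather than a mechanical computation.
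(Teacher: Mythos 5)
Your proposal is correct and follows essentially the same route as the paper: uniqueness of the coefficients from the linear independence of the Majorana matrices, the anticommutator computation to land in $O(1,3)$, the direct conjugation computation for the homomorphism property, and Proposition \ref{prop:realsimilar} in a Majorana basis to get surjectivity with a sign-unique real intertwiner, hence the two-to-one fibre. Your closing remark about needing the real (not complex) version of Pauli's theorem is precisely the point the paper's proof also hinges on.
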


\begin{proof}
1) Let $S\in Pin(3,1)$. Since the Majorana matrices are a basis of the
real vector space $Maj$, there is an unique real matrix $\Lambda(S)$ such that:
\begin{align}
(\Lambda(S))^\mu_{\ \nu}i\gamma^\nu=S^{-1}(i\gamma^\mu)S
\end{align}
Therefore, $\Lambda$ is a map with domain $Pin(3,1)$. Now we can check
that $\Lambda(S)\in O(1,3)$:
\begin{align}
&(\Lambda(S))^\mu_{\ \alpha}\eta^{\alpha\beta}(\Lambda(S))^\nu_{\
  \beta}=-\frac{1}{2}(\Lambda(S))^\mu_{\
  \alpha}\{i\gamma^\alpha,i\gamma^\beta\}(\Lambda(S))^\nu_{\
  \beta}=\\
&=-\frac{1}{2}S\{i\gamma^\mu,i\gamma^\nu\}S^{-1}=S\eta^{\mu\nu}S^{-1}=\eta^{\mu\nu}
\end{align}
We have proved that $\Lambda$ is a map from $Pin(3,1)$ to $O(1,3)$.

2) Since any $\lambda\in O(1,3)$ conserve the metric $\eta$, the matrices
$\alpha^\mu\equiv \lambda^\mu_{\ \nu} i\gamma^\nu$ verify:
\begin{align}
\{\alpha^\mu,\alpha^\nu\}=-2\lambda^\mu_{\ \alpha}\eta^{\alpha\beta}\lambda^\nu_{\ \beta}=-2\eta^{\mu\nu}
\end{align}
In a basis where the Majorana matrices are real, from Proposition
\ref{prop:realsimilar} there is a real invertible matrix $S_\lambda$,
with $|det S_\Lambda|=1$, such that $\lambda^\mu_{\ \nu} i\gamma^\nu=S^{-1}_\lambda
(i\gamma^\mu)S_\lambda$. 
The matrix $S_\Lambda$ is unique up to a sign. So, $\pm S_\lambda\in
Pin(3,1)$ and we proved that the map
 $\Lambda:Pin(3,1)\to O(1,3)$ is two-to-one and surjective.

3) The map defines a group homomorphism because:
\begin{align}
&\Lambda^\mu_{\ \nu}(S_1)\Lambda^\nu_{\
  \rho}(S_2)i\gamma^\rho=\Lambda^\mu_{\ \nu}S_2^{-1}i\gamma^\nu S_2\\
&=S_2^{-1}S_1^{-1}i\gamma^\mu S_1 S_2=\Lambda^\mu_{\ \rho}(S_1 S_2)i\gamma^\rho
\end{align}
\end{proof}

\begin{rmk}
\label{rem:SL(2,C)}
The group $SL(2,\mathbb{C})=\{e^{\theta^j
  i\sigma^j+b^j\sigma^j}: \theta^j,b^j\in
\mathbb{R},\ j\in\{1,2,3\}\}$ is simply connected. 
Its projective representations are equivalent to its ordinary representations\cite{weinberg}.

There is a two-to-one, surjective map $\Upsilon:SL(2,\mathbb{C})\to
SO^+(1,3)$, defined by:
\begin{align}
\Upsilon^{\mu}_{\ \nu}(T)\sigma^\nu\equiv T^\dagger \sigma^\mu T
\end{align}
Where $T\in SL(2,\mathbb{C})$, $\sigma^0=1$ and $\sigma^j$, $j\in\{1,2,3\}$ are the Pauli matrices.
\end{rmk}

\begin{lem}
Consider that $\{M_+,M_-,i\gamma^5 M_+,i\gamma^5 M_-\}$ and $\{P_+,P_-,iP_+,iP_-\}$ are orthonormal basis of
the 4 dimensional real vector spaces $Pinor$ and $Pauli$, respectively, verifying:
\begin{align}
\gamma^0\gamma^3 M_\pm=\pm M_\pm&,\ \sigma^3 P_\pm=\pm P_\pm
\end{align}
The isomorphism
$\Sigma:Pauli \to Pinor$ is defined by:
\begin{align}
\Sigma(P_+)=M_+,&\ \Sigma(iP_+)=i\gamma^5 M_+\\
\Sigma(P_-)=M_-,&\ \Sigma(iP_-)=i\gamma^5 M_-
\end{align}

The group $Spin^+(3,1)\equiv \{\Sigma\circ A \circ \Sigma^{-1}:A\in
SL(2,\mathbb{C})\}$ is a subgroup of $Pin(1,3)$. 
For all $S\in Spin^+(1,3)$, $\Lambda(S)=\Upsilon(\Sigma^{-1}\circ S \circ \Sigma)$.
\end{lem}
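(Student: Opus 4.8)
The plan is to exhibit $Spin^+(3,1)$ as the image of $SL(2,\mathbb{C})$ under the map $\phi(A)\equiv\Sigma\circ A\circ\Sigma^{-1}$ and to reduce both assertions to the single identity
\begin{align*}
\phi(A)^{-1}\,(i\gamma^\mu)\,\phi(A)=\Upsilon(A)^\mu_{\ \nu}\,(i\gamma^\nu),\qquad A\in SL(2,\mathbb{C}),\ \mu\in\{0,1,2,3\}.
\end{align*}
Granting this identity, the right-hand side lies in $Maj$, so $\phi(A)\in Pin(3,1)$ once we also know $|\det\phi(A)|=1$; then $Spin^+(3,1)=\phi(SL(2,\mathbb{C}))$ is a subgroup of $Pin(3,1)$ because $\phi$ is a group homomorphism; and $\Lambda(\phi(A))=\Upsilon(A)$ by the very definition of $\Lambda$ in Proposition \ref{prop:map}, which, writing $S=\phi(A)$ so that $\Sigma^{-1}\circ S\circ\Sigma=A$, is exactly the relation $\Lambda(S)=\Upsilon(\Sigma^{-1}\circ S\circ\Sigma)$.

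First I would record the elementary properties of $\Sigma$ and $\phi$. Since $\Sigma$ carries an orthonormal basis of $Pauli$ to an orthonormal basis of $Pinor$, it is an $\mathbb{R}$-linear isometry, hence $|\det\Sigma|=1$. From the defining equations and $(i\gamma^5)^2=-1$ one checks $\Sigma(iw)=(i\gamma^5)\,\Sigma(w)$ for all $w\in Pauli$, i.e.\ $\Sigma$ intertwines multiplication by $i$ on $Pauli$ with the endomorphism $i\gamma^5$ of $Pinor$. In particular $\phi(A)$ is a well-defined $\mathbb{R}$-linear endomorphism of $Pinor$ for every $A\in SL(2,\mathbb{C})$, the map $\phi$ is an injective group homomorphism, and $|\det\phi(A)|=|\det_{\mathbb{R}}A|=|\det_{\mathbb{C}}A|^2=1$. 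So the whole content is the displayed identity.

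To prove it I would conjugate by $\Sigma$ and work on $Pauli$, setting $\hat\gamma^\mu\equiv\Sigma^{-1}(i\gamma^\mu)\Sigma$. Because $i\gamma^\mu$ anticommutes with $i\gamma^5$ and $\Sigma$ intertwines $i\gamma^5$ with multiplication by $i$, each $\hat\gamma^\mu$ is conjugate-linear, hence of the form $\hat\gamma^\mu:v\mapsto B^\mu\overline v$ for a unique $B^\mu\in\mathbb{C}^{2\times2}$, where the bar is the complex conjugation fixing $P_\pm$. The relations $\{i\gamma^\mu,i\gamma^\nu\}=-2\eta^{\mu\nu}$ translate into $B^\mu\overline{B^\nu}+B^\nu\overline{B^\mu}=-2\eta^{\mu\nu}$, orthogonality of $i\gamma^\mu$ forces $B^\mu$ unitary, and the hypotheses $\gamma^0\gamma^3 M_\pm=\pm M_\pm$, $\sigma^3 P_\pm=\pm P_\pm$ pin down the $B^\mu$ up to the residual diagonal phase freedom left by the chosen bases, giving (in a convenient choice) $\hat\gamma^\mu:v\mapsto\epsilon\,\overline{\sigma^\mu v}$ with $\epsilon=i\sigma^2$. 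Feeding this into $\phi(A)^{-1}\hat\gamma^\mu\phi(A)$ and using the $SL(2,\mathbb{C})$ identities $\overline{A}=\epsilon^{-1}(A^\dagger)^{-1}\epsilon$ and $\epsilon\,(\sigma^\mu)^{*}\epsilon^{-1}=\tilde\sigma^\mu$ (with $\tilde\sigma^\mu=(I,-\vec\sigma)$), together with the companion of the defining relation of $\Upsilon$, namely $A^{-1}\tilde\sigma^\mu(A^\dagger)^{-1}=\Upsilon(A)^\mu_{\ \nu}\tilde\sigma^\nu$, one obtains $A^{-1}\hat\gamma^\mu A=\Upsilon(A)^\mu_{\ \nu}\hat\gamma^\nu$, which is the displayed identity conjugated by $\Sigma^{-1}$.

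The main obstacle is the bookkeeping of conventions in that last computation: one must check that the chosen Majorana matrices $i\gamma^\mu$, the Pauli matrices $\sigma^\mu$, the map $\Upsilon$ and the bases $\{M_\pm\}$, $\{P_\pm\}$ are mutually compatible so that the dual Pauli identity applies verbatim. The cleanest way to dispatch this is to verify $A^{-1}\hat\gamma^\mu A=\Upsilon(A)^\mu_{\ \nu}\hat\gamma^\nu$ on the generators $e^{\theta^j i\sigma^j+b^j\sigma^j}$ of $SL(2,\mathbb{C})$ (equivalently, infinitesimally on its Lie algebra), which reduces the matrix algebra to a handful of $2\times2$ commutator computations. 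Alternatively one can shortcut via Remark \ref{rem:fundamental} and Proposition \ref{prop:realsimilar}: for fixed $A$ the set $\{\Upsilon(A)^\mu_{\ \nu}i\gamma^\nu\}$ obeys the same Clifford relations as $\{i\gamma^\mu\}$, so there is a real $S$ with $|\det S|=1$, unique up to sign, conjugating one set to the other, i.e.\ $S\in Pin(3,1)$ with $\Lambda(S)=\Upsilon(A)$; it then only remains to identify $\phi(A)$ with $\pm S$, which is the same computation but now needed only up to sign.
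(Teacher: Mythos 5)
Your proposal is correct and its skeleton coincides with the paper's: both reduce the lemma to the single conjugation identity $S^{-1}(i\gamma^\mu)S=\Upsilon(\Sigma^{-1}\circ S\circ\Sigma)^{\mu}_{\ \nu}\,i\gamma^\nu$, from which invariance of $Maj$, the determinant condition, and $\Lambda(S)=\Upsilon(\Sigma^{-1}\circ S\circ\Sigma)$ all follow via the definition of $\Lambda$ in Proposition \ref{prop:map}. The execution differs in two places. For the key identity the paper stays on the $Pinor$ side: it uses $-i\gamma^0\,\Sigma\circ T^\dagger\circ\Sigma^{-1}\,i\gamma^0=\Sigma\circ T^{-1}\circ\Sigma^{-1}$ to trade $T^\dagger$ for $T^{-1}$ and then rewrites the defining relation of $\Upsilon$ directly in terms of the Majorana matrices; you pull $i\gamma^\mu$ back to the $Pauli$ side as the antilinear operator $v\mapsto\epsilon\,\overline{\sigma^\mu v}$ and use the standard $\epsilon$-identities of $SL(2,\mathbb{C})$ --- the same computation in the mirror picture, your $\epsilon$-conjugation playing exactly the role of the paper's $i\gamma^0$-conjugation. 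For the determinant, the paper argues $\det S=1$ from the exponential form of $Spin^^+(3,1)$ together with tracelessness of the nontrivial products of Majorana matrices, whereas your $|\det_{\mathbb{R}}\phi(A)|=|\det_{\mathbb{C}}A|^{2}=1$ is cleaner and does not lean on the exponential parametrization. Finally, the ``bookkeeping of conventions'' you flag (pinning down $\hat\gamma^\mu$, i.e.\ correlating the phase freedom in the choice of $M_\pm$ and $P_\pm$ beyond what the stated hypotheses fix) is not a defect peculiar to your route: the paper asserts the transferred form of $\Upsilon$ at exactly the same point without verification, so your plan to check the identity on the Lie-algebra generators of $SL(2,\mathbb{C})$, or to invoke Proposition \ref{prop:realsimilar} and then settle the residual sign, closes a step the paper also leaves implicit.
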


\begin{proof}
From remark \ref{rem:SL(2,C)},  $Spin^+(3,1)=\{e^{\theta^j
  i\gamma^5\gamma^0\gamma^j+b^j\gamma^0\gamma^j}: \theta^j,b^j\in
\mathbb{R},\ j\in\{1,2,3\}\}$.
Then, for all $T\in SL(2,C)$:
\begin{align}
-i\gamma^0 \Sigma \circ T^\dagger
\circ \Sigma^{-1} i\gamma^0&=\Sigma \circ T^{-1}
\circ \Sigma^{-1}
\end{align}
Now, the map $\Upsilon:SL(2,\mathbb{C})\to
SO^+(1,3)$ is given by:
\begin{align}
\Upsilon^{\mu}_{\ \nu}(T)i\gamma^\nu = (\Sigma \circ T^{-1}
\circ \Sigma^{-1}) i\gamma^\mu (\Sigma\circ T \circ \Sigma^{-1})
\end{align}
Then, all $S\in Spin^+(3,1)$ leaves the space $Maj$ invariant:
\begin{align}
S^{-1} i\gamma^\mu S=
\Upsilon^{\mu}_{\ \nu}(\Sigma^{-1}\circ S \circ \Sigma)i\gamma^\nu
\in Maj
\end{align}
Since all the products of Majorana matrices, except the identity, are
traceless, then $det(S)=1$. So,
$Spin^+(3,1)$ is a subgroup of $Pin(1,3)$ and $\Lambda(S)=\Upsilon(\Sigma^{-1}\circ S \circ \Sigma)$.
\end{proof}

\begin{defn}
The discrete Pin subgroup $\Omega\subset Pin(3,1)$ is:
\begin{align}
\Omega \equiv \{\pm 1,\pm i\gamma^0,\pm \gamma^0\gamma^5,\pm
i\gamma^5\}
\end{align}
\end{defn}

The previous lemma and the fact that $\Lambda$ is continuous, 
implies that $Spin^+(1,3)$ is a double cover of $SO^+(3,1)$.
We can check that for all $\omega\in \Omega$, $\Lambda(\pm \omega)\in
\Delta$. 
That is, the discrete Pin subgroup is the double cover of the
discrete Lorentz subgroup. Therefore, $Pin(3,1)=\Omega \ltimes Spin^+(1,3)$

Since there is a two-to-one continuous surjective group homomorphism,
$Pin(3,1)$ is a double cover of $O(1,3)$, $Spin^+(3,1)$ 
is a double cover of $SO^+(1,3)$ and $Spin^+(1,3)\cap SU(4)$ is a
double cover of $SO(3)$. We can check that $Spin^+(1,3)\cap SU(4)$ is
equivalent to $SU(2)$.

\subsection{Finite-dimensional representations of SL(2,C)}

\begin{rmk}
Since SL(2,C) is a semisimple Lie group, all its finite-dimensional 
(real or complex) representations are direct sums of irreducible
representations.
\end{rmk}

\begin{rmk} 
The finite-dimensional complex irreducible representations of SL(2,C) 
are labeled by $(m,n)$, where $2m,2n$ are natural numbers. 
Up to equivalence, the representation space $V_{(m,n)}$ is the tensor
product of the complex vector spaces $V_m^+$ and $V_n^-$, where $V_m^\pm$ is a
symmetric tensor with $2m$ Dirac spinor indexes, such that 
$\gamma^5_{\  k}v=\pm v$, where $v\in V_m^\pm$ and $\gamma^5_{\  k}$
is the Dirac matrix $\gamma^5$ acting on the $k$-th index of $v$.

The group homomorphism consists in applying the same matrix of
$Spin^+(1,3)$, correspondent to the $SL(2,C)$ group element we are
representing, to each index of $v$. 
$V_{(0,0)}$ is equivalent to $\mathbb{C}$ and the image of the group
homomorphism is the identity.

These are also projective representations of the time reversal transformation,
but, for $m\neq n$, not of the parity transformation, that is, 
under the parity transformation, $(V^+_m\otimes V^-_n)\to (V^-_m\otimes V^+_n)$ and under the time
reversal transformation $(V^+_m\otimes V^-_n)\to (V^+_m\otimes V^-_n)$.
\end{rmk}

\begin{lem}
The finite-dimensional real irreducible representations of SL(2,C) 
are labeled by $(m,n)$, where $2m,2n$ are natural numbers and $m\geq
n$.
Up to equivalence, the representation space $W_{(m,n)}$ is defined
for $m\neq n$ as:
\begin{align*}
W_{(m,n)}&\equiv \{\frac{1+(i\gamma^5)_1\otimes
   (i\gamma^5)_1}{2}w: w\in W_m\otimes W_n\}\\
W_{(m,m)}&\equiv 
\{\frac{1+(i\gamma^5)_1\otimes(i\gamma^5)_1}{2}w: w\in (W_m)^2\}
\end{align*} 
where $W_m$ is a
symmetric tensor with $m$ Majorana spinor indexes, such that 
$(i\gamma^5)_{1}(i\gamma^5)_{k}w=-w$, where $w\in W_m$; $(i\gamma^5)_{k}$
is the Majorana matrix $i\gamma^5$ acting on the $k$-th index of $w$;
$(W_m)^2$ is the space of the linear combinations of the
symmetrized tensor products $(u\otimes v+v\otimes u)$, for $u,v\in W_m$.  

The group homomorphism consists in applying the same matrix of
$Spin^+(1,3)$, correspondent to the $SL(2,C)$ group element we are
representing, to each index of the tensor. In the $(0,0)$ case,
$W_{(0,0)}$ is equivalent to $\mathbb{R}$ and the
image of the group homomorphism is the identity.

These are also projective representations of the full Lorentz group, 
that is, under the parity or time reversal transformations,
$(W_{m,n}\to W_{m,n})$.
\end{lem}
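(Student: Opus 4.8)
The plan is to deduce this from the known classification of the complex finite-dimensional irreducible representations $V_{(m,n)}$ together with the map from complex to real systems of Section~\ref{section:Systems}. The first step is to record the conjugation type of each $V_{(m,n)}$. Since the complexification of the real Lie algebra of $SL(2,\mathbb{C})$ splits as a direct sum of two copies of itself, with the conjugation exchanging the summands, one has $\overline{V_{(m,n)}}\cong V_{(n,m)}$ as representations; hence for $m\neq n$ the representation $V_{(m,n)}$ admits no anti-isomorphism and is C-complex, while for $m=n$ it is self-conjugate. In the latter case I would exhibit an explicit C-conjugation operator on $V_{(m,m)}=V_m^+\otimes V_m^-$ by applying the charge-conjugation $\Theta$ to every Dirac-spinor index and composing with the exchange of the two tensor factors: using that $\Theta$ is an anti-linear involution commuting with the $Spin^+(1,3)$ action and anticommuting with $\gamma^5$ (because $\gamma^5=-i(i\gamma^5)$ and $\Theta$ commutes with $i\gamma^5$), the operator $\Theta^{\otimes 2m}$ interchanges $V^+_m$ and $V^-_m$, so the composite with the exchange of factors is a well-defined anti-linear involution of $V_{(m,m)}$ intertwining the group action; one checks it is anti-unitary for a suitable inner product. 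This shows $V_{(m,m)}$ is C-real and, in particular, that no finite-dimensional irreducible representation of $SL(2,\mathbb{C})$ is C-pseudoreal.

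By the proposition preceding Prop.~\ref{prop:Rirreducible}, every real finite-dimensional irreducible representation of $SL(2,\mathbb{C})$ is R-real, R-pseudoreal or R-complex; the previous step rules out R-pseudoreal, gives the R-real ones as $W_{(m,m)}=(V_{(m,m)})_\theta$, the real subspace fixed by the C-conjugation (whose complexification is $V_{(m,m)}$), and the R-complex ones as $W_{(m,n)}=(V_{(m,n)})^r$ with $m>n$, whose complexification is $V_{(m,n)}\oplus V_{(n,m)}$; the constraint $m\geq n$ records $(V_{(m,n)})^r\cong(V_{(n,m)})^r$, and conversely each of these spaces is irreducible by Prop.~\ref{prop:Rirreducible} and Defn.~\ref{defn:Rsystem}. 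The step I expect to be the main obstacle is matching these abstractly defined real forms with the concrete Majorana-spinor tensor spaces written in the statement. Here I would use the stated isomorphism between the realification of the Pauli-spinor space and the Majorana-spinor space, translate the complex chirality grading $\gamma^5=\pm 1$ into the real complex structure $i\gamma^5$ (so that ``all Dirac indices of equal chirality'' becomes the constraint $(i\gamma^5)_1(i\gamma^5)_k w=-w$ defining $W_m$), and identify the fixed-point space of the factor-exchange conjugation with the projected symmetrized square $\{\tfrac{1+(i\gamma^5)_1\otimes(i\gamma^5)_1}{2}w:\ w\in(W_m)^2\}$ for $m=n$, and with the projected $W_m\otimes W_n$ for $m\neq n$. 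The bookkeeping amounts to checking well-definedness and matching real dimensions: $(2m+1)^2$ in the R-real case and $2(2m+1)(2n+1)$ in the R-complex case.

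The description of the group homomorphism is then immediate: elements of $Spin^+(1,3)$ act by real matrices on Majorana spinors, applying the same matrix to every tensor index preserves symmetry, and since $Spin^+(1,3)=\{e^{\theta^j i\gamma^5\gamma^0\gamma^j+b^j\gamma^0\gamma^j}:\ \theta^j,b^j\in\mathbb{R}\}$ and $\gamma^5$ commutes with any product of two gamma matrices, this action commutes with $i\gamma^5$ on each index and therefore preserves the subspace $W_{(m,n)}$; the $(0,0)$ case is the trivial representation on $\mathbb{R}$.

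Finally, for the extension to the full Lorentz group I would check that the discrete subgroup $\Omega=\{\pm 1,\pm i\gamma^0,\pm\gamma^0\gamma^5,\pm i\gamma^5\}$, acting index by index, also preserves $W_{(m,n)}$. The key point is that each of $i\gamma^0$, $\gamma^0\gamma^5$, $i\gamma^5$ either commutes or anticommutes with $i\gamma^5$ (for instance $\{i\gamma^0,i\gamma^5\}=0$ and $\{\gamma^0\gamma^5,i\gamma^5\}=0$); in either case conjugating $(i\gamma^5)_1\otimes(i\gamma^5)_1$ and $(i\gamma^5)_1(i\gamma^5)_k$ by the index-wise product of the generator leaves them invariant (any two signs $-1$ cancel), so both the projector and the constraint defining $W_m$ are preserved, while symmetry is preserved because the same operator acts on all indices. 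Since $\Lambda(\Omega)$ is the discrete Lorentz subgroup $\Delta$ of parity and time reversal, and since a representation of these transformations on a real Hilbert space is automatically $\mathbb{R}$-linear up to sign, this gives each $W_{(m,n)}$ the structure of a projective representation of the full Lorentz group---in contrast with the complex $V_{(m,n)}$, which for $m\neq n$ is carried by parity to the inequivalent $V_{(n,m)}$.
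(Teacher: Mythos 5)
Your proposal is correct and follows essentially the same route as the paper: classify the complex irreducibles $V_{(m,n)}$ by conjugation type (C-complex for $m\neq n$; C-real for $m=n$, with the conjugation given by index-wise charge conjugation composed with factor exchange, which in a Majorana basis is exactly the paper's $\theta(u\otimes v)=v^*\otimes u^*$), then apply the Section~\ref{section:Systems} complex-to-real map, ruling out the pseudoreal case and identifying $W_{(m,n)}^c=(V_m^+\otimes V_n^-)\oplus(V_m^-\otimes V_n^+)$ for $m\neq n$ and $W_{(m,m)}$ as the fixed space of $\theta$. The only notable difference is the final step, where the paper infers invariance under parity and time reversal from $W_{(m,n)}^c\cong W_{(n,m)}^c$ while you check directly that the index-wise action of the discrete subgroup $\Omega$ preserves the defining projector and constraints; this is a slightly more explicit rendering of the same observation.
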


\begin{proof}
For $m\neq n$ the complex irreducible representations of SL(2,C) are
C-complex. The complexification of $W_{(m,n)}$ verifies
$W_{(m,n)}^c=(V^+_m\otimes V^-_n)\oplus (V^-_m\otimes V^+_n)$.

For $m=n$ the complex irreducible representations of SL(2,C) are
C-real. In a Majorana basis, the C-conjugation operator of
$V_{(m,m)}$, $\theta$, is defined as 
$\theta(u\otimes v)\equiv v^*\otimes u^*$, where $u\in V^+_m$ and $v\in V^-_m$. 
We can check that there is a bijection $\alpha:W_{(m,m)}\to
(V_{(m,m)})_\theta$, 
defined by $\alpha(w)\equiv \frac{1-i(i\gamma^5)_1\otimes 1}{2}w$; 
$\alpha^{-1}(v)\equiv v+v^*$, for $w\in W_{(m,m)}$, $v\in (V_{(m,m)})_\theta$.

Using the map from Section 2,
we can check that the representations 
$W_{(m,n)}$, with $m\geq n$, are the unique 
finite-dimensional real irreducible representations of SL(2,C), up to
isomorphisms.

We can check that $W_{(m,n)}^c$ is equivalent to $W_{(n,m)}^c$,
therefore, invariant under the parity or time reversal transformations.
\end{proof}

As examples of real irreducible representations of $SL(2,C)$ we have
for $(1/2,0)$ the Majorana spinor, for $(1/2,1/2)$ the linear
combinations of the matrices $\{1,\gamma^0\vec{\gamma}\}$, for $(1,0)$ the linear 
combinations of the matrices $\{i\vec{\gamma},\vec{\gamma}\gamma^5\}$. The group 
homomorphism is defined as $M(S)(u)\equiv Su$ and $M(S)(A)\equiv S A S^\dagger$, 
for $S\in Spin^+(1,3)$,
$u\in Pinor$, $A\in \{1,\vec{\gamma}\gamma^0\}$ or $A\in
\{i\vec{\gamma},\vec{\gamma}\gamma^5\}$. 

We can check that the domain
of $M$ can be extended to $Pin(1,3)$, leaving the considered vector
spaces invariant. For $m=n$, we can define the ``pseudo-representation'' 
$W_{(m,m)}'\equiv \{((i\gamma^5)_1\otimes 1) w: w\in W_{(m,m)}\}$
which is equivalent to $W_{(m,m)}$ as an $SL(2,C)$ representation, but
under parity transforms with the opposite sign.
As an example, the ``pseudo-representation'' $(1/2,1/2)$ is defined as
the linear combinations of the matrices $\{i\gamma^5,i\gamma^5\vec{\gamma}\gamma^0\}$.

\section{Unitary representations of the Poincare group}
\label{section:Poincare}

\subsection{Bargmann-Wigner fields}
\begin{defn}
\label{defn:Theta}
Consider that $\{M_+,M_-,i\gamma^0M_+,i\gamma^0M_-\}$ and $\{P_+,P_-,iP_+,iP_-\}$ are orthonormal basis 
of the 4 dimensional real vector spaces $Pinor$ and $Pauli$, respectively, verifying:
\begin{align*}
\gamma^3\gamma^5 M_\pm=\pm M_\pm&,\ \sigma^3 P_\pm=\pm P_\pm
\end{align*}
Let $H$ be a real Hilbert space. 
For all $h\in H$, the bijective linear map
$\Theta_H:Pauli\otimes_{\mathbb{R}} H\to Pinor\otimes_{\mathbb{R}}H$ is defined by:
\begin{align*}
\Theta_H(h\otimes_{\mathbb{R}} P_+)=h\otimes_{\mathbb{R}} M_+,&\ \Theta_H(h \otimes_{\mathbb{R}}
iP_+)=h\otimes_{\mathbb{R}} i\gamma^0 M_+\\
\Theta_H(h\otimes_{\mathbb{R}} P_-)=h\otimes_{\mathbb{R}} M_-,&\ \Theta_H(h\otimes_{\mathbb{R}} iP_-)=h\otimes_{\mathbb{R}}i\gamma^0 M_-
\end{align*}
\end{defn}

\begin{defn}
Let $H_n$, with $n\in\{1,2\}$, be two real Hilbert spaces 
and $U:Pauli\otimes_{\mathbb{R}} H_1\to
Pauli\otimes_{\mathbb{R}} H_2$ be an operator.
The operator $U^\Theta:Pinor\otimes_{\mathbb{R}} H_1\to
Pinor\otimes_{\mathbb{R}} H_2$ is defined as
$U^\Theta\equiv
\Theta_{H_2}\circ U\circ \Theta^{-1}_{H_1}$.
\end{defn}

The space of Majorana spinors is isomorphic to
the realification of the space of Pauli spinors.

\begin{defn}
The real Hilbert space 
$Pinor(\mathbb{X})\equiv Pinor\otimes L^2(\mathbb{X})$ 
is the space of square integrable
functions with domain $\mathbb{X}$ and image in $Pinor$.
\end{defn}

\begin{defn}
The complex Hilbert space 
$Pauli(\mathbb{X})\equiv Pauli\otimes L^2(\mathbb{X})$ 
is the space of square integrable functions with domain $\mathbb{X}$
and image in $Pauli$.
\end{defn}

\begin{rmk}
The Fourier Transform 
$\mathcal{F}_P: Pauli(\mathbb{R}^3)\to Pauli(\mathbb{R}^3)$ is an unitary operator defined by:
\begin{align*}
\mathcal{F}_P\{\psi\}(\vec{p})\equiv\int d^n\vec{x} \frac{e^{-i\vec{p}\cdot
  \vec{x}}}{\sqrt{(2\pi)^n}}\psi(\vec{x}),\ \psi\in Pauli(\mathbb{R}^3)
\end{align*}
Where the domain of the integral is $\mathbb{R}^3$.
\end{rmk}

\begin{rmk}
The inverse Fourier transform verifies:
\begin{align*}
-\vec{\partial}^2\
\mathcal{F}_P^{-1}\{\psi\}(\vec{x})&=
(\mathcal{F}_P^{-1}\circ R)\{\psi\}(\vec{x})\\
i\vec{\partial}_k
\ \mathcal{F}_P^{-1}\{\psi\}(\vec{x})&=
(\mathcal{F}_P^{-1}\circ R_k')\{\psi\}(\vec{x})
\end{align*}
Where $\psi\in Pauli(\mathbb{R}^3)$ and
$R,R_k':Pauli(\mathbb{R}^3)\to Pauli(\mathbb{R}^3)$, with
$k\in\{1,2,3\}$, are linear maps defined by:
\begin{align*}
R\{\psi\}(\vec{p})&\equiv (\vec{p})^2 \psi(\vec{p})\\
R_k'\{\psi\}(\vec{p})&\equiv \vec{p}_k\ \psi(\vec{p})
\end{align*}
\end{rmk}

\begin{defn}
Let $\vec{x}\in \mathbb{R}^3$. The spherical coordinates
parametrization is:
\begin{align*}
\vec{x}=r(\sin(\theta)\sin(\varphi)\vec{e_1}+\sin(\theta)\sin(\varphi)\vec{e_2}+\cos(\theta)\vec{e}_3)
\end{align*}
where $\{\vec{e}_1,\vec{e}_2,\vec{e}_3\}$ is a fixed orthonormal basis of
$\mathbb{R}^3$ and $r\in [0,+\infty[$, $\theta \in [0,\pi]$, $\varphi
\in [-\pi,\pi]$.
\end{defn}

\begin{defn}
Let
\begin{align*}
\mathbb{S}^3\equiv \{(p,l,\mu):p\in \mathbb{R}_{\geq 0}; 
l,\mu \in \mathbb{Z}; l\geq 0; -l \leq \mu\leq l\}
\end{align*}
The Hilbert space $L^2(\mathbb{S}^3)$ is the real Hilbert space of real
Lebesgue square integrable functions of $\mathbb{S}^3$. The internal product is:
\begin{align*}
<f,g>=\sum_{l=0}^{+\infty}\sum_{\mu=-l}^{l-1}\int_0^{+\infty} dp f(p,l,\mu)
g(p,l,\mu),\ f,g\in L^2(\mathbb{S}^3)
\end{align*}
\end{defn}

\begin{defn}
The Spherical transform $\mathcal{H}_{P}: Pauli(\mathbb{R}^3)\to Pauli(\mathbb{S}^3)$ 
is an operator defined by:
\begin{align*}
\mathcal{H}_{P}\{\psi\}(p,l,\mu)\equiv\int r^2 dr d(\cos\theta)d\varphi
\frac{2 p}{\sqrt{2\pi}}j_l(pr)Y_{l\mu}(\theta,\varphi)\psi(r,\theta,\varphi),\ \psi\in Pauli(\mathbb{R}^3)
\end{align*}
The domain of the integral is $\mathbb{R}^3$. The spherical
Bessel function of the first kind $j_l$ \cite{bessel},
the spherical harmonics $Y_{l\mu}$\cite{harmonics} and the associated Legendre
functions of the first kind $P_{l\mu}$ are:
\begin{align*}
j_l(r)\equiv& r^l\Big(-\frac{1}{r}\frac{d}{dr}\Big)^l \frac{\sin
  r}{r}\\
Y_{l\mu}(\theta,\varphi)\equiv&\sqrt{\frac{2l+1}{4\pi}\frac{(l-m)!}{(l+m)!}}
P_{l}^\mu(\cos\theta)e^{i\mu \varphi}\\
P_{l}^\mu(\xi)\equiv&\frac{(-1)^{\mu}}{2^{l}l!}(1-\xi^{2})^{\mu/2}
\frac{\mathrm{d}^{l+\mu}}{\mathrm{d}\xi^{l+\mu}}(\xi^{2}-1)^{l}
\end{align*}
\end{defn}

\begin{rmk}
Due to the properties of spherical harmonics and Bessel functions, the
Spherical transform  is an unitary operator. The inverse Spherical
transform verifies:
\begin{align*}
-\vec{\partial}^2\
\mathcal{H}_P^{-1}\{\psi\}(\vec{x})&=
(\mathcal{H}_P^{-1}\circ R)\{\psi\}(\vec{x})\\
(-x^1i\partial_2+x^2i\partial_1)
\ \mathcal{H}_P^{-1}\{\psi\}(\vec{x})&=
(\mathcal{H}_P^{-1}\circ R')\{\psi\}(\vec{x})
\end{align*}
Where $\psi\in Pauli(\mathbb{S}^3)$ and
$R,R':Pauli(\mathbb{S}^3)\to Pauli(\mathbb{S}^3)$ 
are linear maps defined by:
\begin{align*}
R\{\psi\}(p,l,\mu)&\equiv p^2 \psi(p,l,\mu)\\
R'\{\psi\}(p,l,\mu)&\equiv \mu\ \psi(p,l,\mu)
\end{align*}
\end{rmk}

\begin{defn}
The real vector space $Pinor_j$, with $2j$ a positive integer, is the
space of linear combinations of the tensor products of $2j$
Majorana spinors, symmetric on the spinor indexes. The real vector
space $Pinor_0$ is the space of linear combinations of the tensor
products of $2$ Majorana spinors, anti-symmetric on the spinor
indexes.
\end{defn}

\begin{defn}
The real Hilbert space 
$Pinor_j(\mathbb{X})\equiv Pinor_j\otimes L^2(\mathbb{X})$ is the
space of square integrable functions with domain $\mathbb{X}$ and
image in $Pinor_j$.
\end{defn}

\begin{defn}
The Hilbert space $Pinor_{j,n}$, with $(j-\nu)$ an
integer and $-j\leq n \leq j$ is defined as:
\begin{align*}
Pinor_{j,n}\equiv \{\Psi\in Pinor_{j}:
\sum_{k=1}^{k=2j}(\gamma^0)_1\Big(\gamma^0\gamma^3\gamma^5\Big)_k\Psi
=2n \Psi\}
\end{align*}
Where $\Big(\gamma^3\gamma^5\Big)_k$ is the matrix 
$\gamma^3\gamma^5$ acting on the Majorana index $k$. 
\end{defn}

\begin{defn}
The Spherical transform 
$\mathcal{H}_{P}': Pinor_j(\mathbb{R}^3)\to Pinor_j(\mathbb{S}^3)$ 
is an operator defined by:
\begin{align*}
\mathcal{H}_{P}'\{\psi\}(p,l,J,\nu)\equiv
\sum_{\mu=-l}^l\sum_{n=-j}^j
<l\mu jn|J\nu>\Big(\mathcal{H}_{P}^{\Theta}\Big)_1\{\psi\}(p,l,\mu,n),
\ \psi\in Pinor_j(\mathbb{R}^3)
\end{align*}
$<l\mu jn|J\nu>$ are the Clebsh-Gordon coefficients and
$\psi(p,l,\mu,n)\in Pinor_{j,n}$ such that
$\psi(p,l,\mu)=\sum_{n=-j}^j \psi(p,l,\mu,n)$. $(j-n)$, $(J-\nu)$ and
$(J-j)$ are integers, with $-J\leq\nu\leq J$ and $|j-l|\leq J \leq
j+l$.
$\Big(\mathcal{H}_{P}^{\Theta}\Big)_1$ is the realification of the
transform $\mathcal{H}_{P}$, with the imaginary number replaced by the
matrix $i\gamma^0$ acting on the first Majorana index of $\psi$. 
\end{defn}

\begin{prop}
Consider a unitary operator $U:Pinor_j(\mathbb{R}^3)\to Pinor_j(\mathbb{X})$ such that
$U \circ H^2=E^2 \circ U$, where 
\begin{align*}
iH\{\Psi\}(\vec{x})\equiv
\Big(\gamma^0\vec{\slashed \partial}+i\gamma^0m\Big)_k\Psi(\vec{x})
\end{align*}
the Majorana matrices act on some Majorana index $k$; 
 $E^2\{\Phi\}(X)\equiv E^2(X)\Phi(X)$ with  $E(X)\geq m\geq 0$ a real
number. 

Then the operator $U':Pinor(\mathbb{R}^3)\to Pinor(\mathbb{X})$ is unitary,
where $U'$ is defined by:
\begin{align*}
U'\equiv \frac{E+U H\gamma^0 U^\dagger}{\sqrt{E+m}\sqrt{2E}}
\end{align*}
\end{prop}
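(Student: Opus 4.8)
The plan is to recognize $U'$ as a Foldy--Wouthuysen-type unitary and to prove unitarity by a direct algebraic computation, after transporting the relevant operators to $Pinor_j(\mathbb{X})$ by conjugation with $U$. The facts about $H$ and $\gamma^0$ I would use are: $H=H^\dagger$ and $\gamma^0=(\gamma^0)^\dagger$ (in a Majorana basis these are real symmetric matrices times self-adjoint operators on $L^2(\mathbb{R}^3)$); $(\gamma^0)^2=1$, from $(i\gamma^0)^2=-\eta^{00}$; and the anticommutator $\{H,\gamma^0\}=H\gamma^0+\gamma^0H=2m$, immediate from $iH=(\gamma^0\vec{\slashed \partial}+i\gamma^0m)_k$ together with $\{\gamma^0\gamma^j,\gamma^0\}=0$ and $(\gamma^0)^2=1$ (concretely, on the index $k$ one has $H\gamma^0=i\gamma^j\partial_j+m$ and $\gamma^0H=-i\gamma^j\partial_j+m$; incidentally $H^2=-\vec\partial^2+m^2$, consistent with $E\ge m$). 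From the anticommutator alone one also gets $[\gamma^0,H^2]=0$, since $\gamma^0H=2m-H\gamma^0$ substituted twice yields $\gamma^0H^2=H^2\gamma^0$.

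I would then set $K\equiv UHU^\dagger$ and $B\equiv U\gamma^0U^\dagger$ on $Pinor_j(\mathbb{X})$. These satisfy $K=K^\dagger$, $B=B^\dagger$, $B^2=UU^\dagger=1$, and, by the hypothesis $U\circ H^2=E^2\circ U$, $K^2=UH^2U^\dagger=E^2$, the operator of multiplication by $E(X)^2$. Also $\{K,B\}=U\{H,\gamma^0\}U^\dagger=2m$ and $[B,E^2]=U[\gamma^0,H^2]U^\dagger=0$; and $K$ always commutes with $K^2=E^2$, hence so do $KB$ and $BK$. Since $E(X)\ge m\ge 0$, the multiplication operators $E$, $\sqrt{E+m}$, $\sqrt{2E}$ and $\big((E+m)2E\big)^{-1/2}$ are Borel functions of $E^2$ (in the massless case $E$ vanishes at most on a null set, which is harmless), so they commute with $B$, $K$, $KB$ and $BK$. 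In particular the quotient defining $U'$ is unambiguous: writing $V\equiv\big((E+m)2E\big)^{-1/2}(E+KB)=(E+KB)\big((E+m)2E\big)^{-1/2}$, the operator in the statement is $V$ (or $V\circ U$, depending on how its domain is read; in either case the unitarity of $V$ suffices), so it is enough to prove $V$ unitary.

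The last step is the computation of $V^\dagger V$ and $VV^\dagger$. Using $(KB)^\dagger=BK$ and the commutativity of the scalar factors,
\begin{align*}
V^\dagger V&=\frac{(E+BK)(E+KB)}{(E+m)2E}=\frac{E^2+E(KB+BK)+BK^2B}{(E+m)2E},\\
VV^\dagger&=\frac{(E+KB)(E+BK)}{(E+m)2E}=\frac{E^2+E(BK+KB)+KB^2K}{(E+m)2E}.
\end{align*}
Now $BK^2B=BE^2B=E^2B^2=E^2$ (from $[B,E^2]=0$ and $B^2=1$), $KB^2K=K^2=E^2$, and $KB+BK=2m$, so both numerators reduce to $2E^2+2mE=2E(E+m)$ and $V^\dagger V=VV^\dagger=1$. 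Boundedness is clear: $E\big((E+m)2E\big)^{-1/2}=\big(E/2(E+m)\big)^{1/2}\le 2^{-1/2}$, and since $\big((E+m)2E\big)^{-1/2}$ commutes with $KB$ while $(KB)^\dagger KB=BK^2B=E^2$, the $KB$-term satisfies the same bound. Hence $V$, and therefore $U'$, is unitary.

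The main obstacle is not this final algebra but justifying the commutation relations that precede it: that the scalar operators $E$, $\sqrt{E+m}$, etc., commute with $KB$ and $BK$. These are exactly what makes the Foldy--Wouthuysen cancellation close for an arbitrary $U$ diagonalizing $H^2$, rather than only for the Fourier transform, and after conjugation by $U$ they rest entirely on $[\gamma^0,H^2]=0$, a one-line consequence of $\{H,\gamma^0\}=2m$.
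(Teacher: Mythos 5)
Your proof is correct and follows essentially the same route as the paper's: both rest on the observation that $E=\sqrt{E^2}$ commutes with the conjugated operators (via $U\circ H^2=E^2\circ U$ and $[\gamma^0,H^2]=0$) and then verify $(U')^\dagger U'=U'(U')^\dagger=1$ by the same algebra, using $\{H,\gamma^0\}=2m$ and $\gamma^0H^2\gamma^0=H^2$ so that the numerator reduces to $2E(E+m)$. You simply spell out the anticommutator, the functional-calculus commutations and the domain reading of $U'$ that the paper's terse proof leaves implicit.
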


\begin{proof}
Note that since $E^2=U^\dagger H^2 U$, $E=\sqrt{E^2}$ commutes with $U H\gamma^0
U^\dagger$. We have that
\begin{align*}
&(U')^\dagger (U')=\frac{E+U\gamma^0 H U^\dagger}{\sqrt{E+m}\sqrt{2E}}
\frac{E+U H\gamma^0 U^\dagger}{\sqrt{E+m}\sqrt{2E}}=1
\end{align*}
We also have that $(U')(U')^\dagger=1$. Therefore, $U'$ is unitary.
\end{proof}

\begin{defn}
The Fourier-Majorana transform 
$\mathcal{F}_M: Pinor_j(\mathbb{R}^3)\to Pinor_j(\mathbb{R}^3)$
is an unitary operator defined by:

\begin{align*}
\mathcal{F}_M\{\Psi\}(\vec{p})\equiv\int d^3\vec{x}
\Big(\frac{e^{-i\gamma^0\vec{p}\cdot\vec{x}}}{\sqrt{(2\pi)^3}}\Big)_1
\prod_{k=1}^{2j}\Big(\frac{E_p+H(\vec{x})\gamma^0}{\sqrt{E_p+m}\sqrt{2E_p}}\Big)_k
\Psi(\vec{x}),\ \Psi\in Pinor_j(\mathbb{R}^3)
\end{align*}
The matrices with the index $k$ apply on
the corresponding spinor index of $\Psi$.
\end{defn}

\begin{defn}
The Hankel-Majorana transform 
$\mathcal{H}_M: Pinor_j(\mathbb{R}^3)\to Pinor_j(\mathbb{S}^3)$
is an unitary operator defined by:
\begin{align*}
&\mathcal{H}_M\{\Psi\}(p,l,J,\nu)\equiv
\sum_{\mu=-l}^l\sum_{n=-j}^j <l\mu jn|J\nu>\int d^3\vec{x}\\
&\Big(\frac{2 p}{\sqrt{2\pi}}j_l(pr)Y_{l\mu}(\theta,\varphi)\Big)_1
\prod_{k=1}^{2j}\Big(\frac{E_p+H(\vec{x})\gamma^0}{\sqrt{E_p+m}\sqrt{2E_p}}\Big)_k
\Psi(\vec{x},n)
\end{align*}

The matrices with the index $k$ apply on
the corresponding spinor index of 
$\Psi\in Pinor_j(\mathbb{R}^3)$. 
$<l\mu jn|J\nu>$ are the Clebsh-Gordon coefficients and
$\Psi(\vec{x},n)\in Pinor_{j,n}$ such that
$\Psi(\vec{x})=\sum_{n=-j}^j \Psi(\vec{x},n)$.
\end{defn}

The inverse Fourier-Majorana transform verifies:
\begin{align*}
(iH(\vec{x}))_k\
\mathcal{F}_M^{-1}\{\psi\}(\vec{x})&=
(\mathcal{F}_M^{-1}\circ R)\{\psi\}(\vec{x})\\
\vec{\partial}_l\ \mathcal{F}_M^{-1}\{\psi\}(\vec{x})&=
(\mathcal{F}_M^{-1}\circ R')\{\psi\}(\vec{x})
\end{align*}
Where $\psi\in Pinor_j(\mathbb{R}^3)$ and
$R,R':Pinor_j(\mathbb{R}^3)\to Pinor_j(\mathbb{R}^3)$ are linear maps
defined by:
\begin{align*}
R\{\psi\}(\vec{p})&\equiv (i\gamma^0)_k E_p \psi(\vec{p})\\
R'\{\psi\}(\vec{p})&\equiv (i\gamma^0)_1\vec{p}_l\ \psi(\vec{p})
\end{align*}

The inverse Hankel-Majorana transform verifies:
\begin{align*}
(iH(\vec{x}))_k\
\mathcal{H}_M^{-1}\{\psi\}(\vec{x})&=
(\mathcal{H}_M^{-1}\circ R)\{\psi\}(\vec{x})\\
(-x^1\partial_2+x^2\partial_1+\sum_{k=1}^{2j}(i\gamma^0\gamma^3\gamma^5)_k)
\ \mathcal{H}_M^{-1}\{\psi\}(\vec{x})&=
(\mathcal{H}_M^{-1}\circ R')\{\psi\}(\vec{x})
\end{align*}
Where $\psi\in Pinor_j(\mathbb{S}^3)$ and $R,R':Pinor_j(\mathbb{S}^3)\to Pinor_j(\mathbb{S}^3)$ are
linear maps defined by:
\begin{align*}
R\{\psi\}(p,l,J,\nu)&\equiv (i\gamma^0)_k E_p \psi(p,l,J,\nu)\\
R'\{\psi\}(p,l,J,\nu)&\equiv (i\gamma^0)_1\nu\ \psi(p,l,J,\nu)
\end{align*}

\begin{defn}
The space of (real) Bargmann-Wigner fields $BW_{j}(\mathbb{R}^3)$ is defined as:
\begin{align*}
BW_j\equiv 
\{\Psi\in Pinor_{j}(\mathbb{R}^3):
\Big(e^{iH(\vec{x}) t}\Big)_k\Psi=\Big(e^{iH(\vec{x})t}\Big)_1 \Psi; 1\leq k\leq
2j; t\in \mathbb{R}\}
\end{align*}
\end{defn}

Note that if the equality $e^{-iH_1 t}\Psi=e^{-iH_2 t}\Psi$ holds for all differentiable
$\Psi\in H$ then for the continuous linear extension
the equality holds for all $\Psi\in H$, by the bounded linear transform theorem.

\begin{defn}
The complex Hilbert space 
$Dirac_j(\mathbb{X})\equiv Pinor_j(\mathbb{X})\otimes \mathbb{C}$
is the complexification of $Pinor_j(\mathbb{X})$.
The space of complex Bargmann-Wigner fields is the complexification of
the space of real Bargmann-Wigner fields.
\end{defn}

\subsection{Real unitary representations of the Poincare group}
\begin{defn}
The $IPin(3,1)$ group is defined as the semi-direct product
$Pin(3,1)\ltimes \mathbb{R}^4$, with the group's product defined as
$(A,a)(B,b)=(AB,a+\Lambda(A)b)$, for $A,B\in Pin(3,1)$ and 
$a,b\in \mathbb{R}^4$ and $\Lambda(A)$ is the Lorentz transformation
corresponding to $A$.

The $ISL(2,C)$ group is isomorphic to the subgroup of $IPin(3,1)$, 
obtained when $Pin(3,1)$ is restricted to $Spin^+(1,3)$. The full/restricted
Poincare group is the representation of the $IPin(3,1)/ISL(2,C)$ group on
Lorentz vectors, defined as 
$\{(\Lambda(A),a): A\in Pin(3,1), a\in \mathbb{R}^4\}$.
\end{defn}

\begin{defn}
Given a Lorentz vector $l$, the little group $G_l$ is the subgroup
of $SL(2,C)$ such that for all $g\in G_l$, $g\slashed l=\slashed l g$. 
\end{defn}

\begin{prop}
Given a Lorentz vector $l$, consider a set of matrices $\alpha_k\in
SL(2,C)$ verifying 
$\alpha_k \slashed l=\slashed k \alpha_k$. Let $H_k \equiv \{\alpha_{\Lambda_S(k)}^{-1} S\alpha_k: S\in SL(2,C)\}$. Then $H_k=G_l$. 
\end{prop}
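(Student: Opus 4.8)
The plan is to prove the two inclusions $H_k\subseteq G_l$ and $G_l\subseteq H_k$ by short conjugation computations; nothing about the structure of Lorentz orbits or of little groups is needed beyond what the definitions force. I will use the defining relation of the fixed family in the form $\alpha_m\,\slashed l\,\alpha_m^{-1}=\slashed m$ (equivalently $\alpha_m^{-1}\,\slashed m\,\alpha_m=\slashed l$) for every vector $m$ in the orbit of $l$, and I will read $\Lambda_S(k)$ as the vector characterised by $\slashed{\Lambda_S(k)}=S\,\slashed k\,S^{-1}$, $S\in SL(2,\mathbb C)$ --- the reading consistent with the ordering of factors in $\alpha_k\,\slashed l=\slashed k\,\alpha_k$. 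Note first that $\slashed{\Lambda_S(k)}=S\alpha_k\,\slashed l\,(S\alpha_k)^{-1}$, so $\Lambda_S(k)$ lies in the orbit of $l$, $\alpha_{\Lambda_S(k)}$ is a bona fide member of the family, and $H_k$ is well defined.

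For $H_k\subseteq G_l$ I would fix $S$, set $g\equiv\alpha_{\Lambda_S(k)}^{-1}S\alpha_k\in SL(2,\mathbb C)$, and compute
\begin{align*}
g\,\slashed l\,g^{-1}
&=\alpha_{\Lambda_S(k)}^{-1}\,S\,\bigl(\alpha_k\,\slashed l\,\alpha_k^{-1}\bigr)\,S^{-1}\,\alpha_{\Lambda_S(k)}
=\alpha_{\Lambda_S(k)}^{-1}\,\bigl(S\,\slashed k\,S^{-1}\bigr)\,\alpha_{\Lambda_S(k)}\\
&=\alpha_{\Lambda_S(k)}^{-1}\,\slashed{\Lambda_S(k)}\,\alpha_{\Lambda_S(k)}=\slashed l,
\end{align*}
the last step being the defining relation of $\alpha_{\Lambda_S(k)}$, so $g$ commutes with $\slashed l$, i.e. $g\in G_l$. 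For the reverse inclusion I would start from $g\in G_l$, put $S\equiv\alpha_k\,g\,\alpha_k^{-1}\in SL(2,\mathbb C)$, and compute, using $\alpha_k^{-1}\slashed k\,\alpha_k=\slashed l$ and then $g\,\slashed l\,g^{-1}=\slashed l$,
\begin{align*}
S\,\slashed k\,S^{-1}
=\alpha_k\,g\,\bigl(\alpha_k^{-1}\,\slashed k\,\alpha_k\bigr)\,g^{-1}\,\alpha_k^{-1}
=\alpha_k\,\bigl(g\,\slashed l\,g^{-1}\bigr)\,\alpha_k^{-1}
=\alpha_k\,\slashed l\,\alpha_k^{-1}=\slashed k.
\end{align*}
Hence $\slashed{\Lambda_S(k)}=\slashed k$; since the Majorana matrices are linearly independent the map $m\mapsto\slashed m$ is injective, so $\Lambda_S(k)=k$ and $\alpha_{\Lambda_S(k)}=\alpha_k$, whence $\alpha_{\Lambda_S(k)}^{-1}S\alpha_k=\alpha_k^{-1}\bigl(\alpha_k g\alpha_k^{-1}\bigr)\alpha_k=g$ and $g\in H_k$. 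Together the inclusions give $H_k=G_l$.

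The algebra is routine --- cancelling adjacent copies of the $\alpha$'s and of $S$ --- and the only thing that must really be settled before writing it out is the convention bookkeeping: that $\Lambda_S$ acts on slashed vectors as the conjugation $\slashed k\mapsto S\,\slashed k\,S^{-1}$ compatibly with the normalisation $\alpha_k\,\slashed l=\slashed k\,\alpha_k$, and that the family $\{\alpha_m\}$ is indexed over the whole orbit so that $\alpha_{\Lambda_S(k)}$ is always available. I expect that convention/well-definedness check to be the (mild) main obstacle; once it is pinned down, the two displayed identities are essentially the whole proof.
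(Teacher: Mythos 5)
Your proof is correct and follows essentially the same route as the paper's (very terse) argument: the paper also checks $H_k\subset G_l$ by the conjugation computation and, for the reverse inclusion, takes $S=\alpha_{\Lambda_S(k)}\,s\,\alpha_k^{-1}$, which with $\Lambda_S(k)=k$ is exactly your choice $S=\alpha_k g\alpha_k^{-1}$. Your spelled-out convention check that $\slashed{\Lambda_S(k)}=S\slashed k S^{-1}$ and that the family $\{\alpha_m\}$ ranges over the orbit is consistent with the paper's definition of $\Lambda$ and merely makes explicit what the paper leaves implicit.
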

\begin{proof}
We can check that $H_k\subset G_l$. For any $s\in G_l$, there is
$S=\alpha_{\Lambda_S(k)} s \alpha_k^{-1}$ such that $s\in H_k$. 
\end{proof}

For $i\slashed l=i\gamma^0$, we can set 
$\alpha_p=\frac{\slashed p\gamma^0+m}{\sqrt{E_p+m}\sqrt{2m}}$ and $G_l=SU(2)$.
For $i\slashed l=(i\gamma^0+i\gamma^3)$, we can set $\alpha_p=B_vR_{p}$,
where the boost velocity is $v=\frac{E_p^2-1}{E_p^2+1}$ along $\vec{p}$ and 
$R_p=e^{-\gamma^2\gamma^1 \theta/2}e^{-\gamma^1\gamma^3 \phi/2}$ is a rotation from the $z$
axis to the axis 
$\frac{\vec{\slashed p}}{E_p}=(\sin\phi \cos\theta \gamma_1+\sin\phi \sin\theta
\gamma_2+\cos\phi \gamma_3)$; $G_l=SE(2)$
\begin{align}
SE(2)=\{(1+i\gamma^5(\gamma^1a+\gamma^2b)(\gamma^0+\gamma^3))e^{i\gamma^0\gamma^3\gamma^5\theta}:
a,b,\theta\in \mathbb{R}\}.
\end{align}

\begin{rmk}
The complex irreducible projective representations of the Poincare
group with finite mass split into positive and negative energy
representations, 
which are complex conjugate of each other. They are labeled by one number $j$, with $2j$
being a natural number.
The positive energy representation spaces $V_j$ are, up to isomorphisms, written
as a symmetric tensor product of Dirac spinor
fields defined on the 3-momentum space, verifying
$(\gamma^0)_k\Psi_j(\vec{p})=\Psi_j(\vec{p})$. The matrices with the index $k$ apply in
the corresponding spinor index of $\Psi_j$.

The representation space $V_0$ is, up to isomorphisms, written in a
Majorana basis as a complex scalar defined on the 3-momentum space.

The representation map is given by:
\begin{align*}
L_S\{\Psi\}(\vec{p})&=\sqrt{\frac{(\Lambda^{-1})^0(p)}{E_p}}\prod_{k=1}^{2j}(\alpha^{-1}_{\Lambda(p)}S\alpha_p)_k\Psi(\vec{\Lambda}^{-1}(p))\\
T_a\{\Psi\}(\vec{p})&=e^{-i p\cdot a}\Psi(\vec{p})
\end{align*}
Where $\alpha_p=\frac{\slashed p\gamma^0+m}{\sqrt{E_p+m}\sqrt{2m}}$.
\end{rmk}

\begin{prop}
The real irreducible projective representations of the Poincare
group with finite mass are labeled by one number $j$, with $2j$
being a natural number.
The representation spaces $W_j$ are, up to isomorphisms, written
as a symmetric tensor product of Majorana spinor fields defined
on the 3-momentum space, verifying
$(i\gamma^0)_k\Psi_j(\vec{p})=(i\gamma^0)_1\Psi_j(\vec{p})$. 
The matrices with the index $k$ apply in the corresponding spinor index of
$\Psi_j$.

The representation space $V_0$ is, up to isomorphisms, written in a
Majorana basis as a real scalar defined on the 3-momentum space, times
the identity matrix of a Majorana spinor space.

The representation map is given by:
\begin{align*}
L_S\{\Psi\}(\vec{p})&=\sqrt{\frac{(\Lambda^{-1})^0(p)}{E_p}}\prod_{k=1}^{2j}(\alpha^{-1}_{\Lambda(p)}S\alpha_p)_k\Psi(\vec{\Lambda}^{-1}(p))\\
T_a\{\Psi\}(\vec{p})&=e^{-i\gamma^0 p\cdot a}\Psi(\vec{p})
\end{align*}
\end{prop}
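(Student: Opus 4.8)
\emph{Proof plan.} The strategy is to transport the Wigner classification of the complex finite-mass irreducibles --- recorded in the Note preceding this Proposition --- through the complex-to-real map of Section \ref{section:Systems}, and then to make the resulting real representation explicit as Majorana spinor fields. First I would note that, for fixed $2j\in\mathbb{N}$, the positive-energy representation $V_j$ and the negative-energy representation $\bar V_j$ are inequivalent: an isometry intertwining them would have to carry the joint spectrum $\{p:\ p^0>0\}$ of the translation generators onto $\{p:\ p^0<0\}$, which is impossible. By Schur's lemma for unitary representations each $V_j$ is a Schur system, and since $\bar V_j\not\cong V_j$ it is moreover C-complex, hence Schur C-complex; so by the map of Section \ref{section:Systems} (Definition \ref{defn:Rsystem}, Proposition \ref{prop:Rirreducible} and Definition \ref{defn:schur}) there is a real irreducible, Schur R-complex, representation $W_j$ with $W_j^c=V_j\oplus\bar V_j$, unique up to isometry by the Propositions on Schur systems. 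Conversely, every irreducible real system is R-real, R-pseudoreal or R-complex; since no finite-mass complex irreducible is self-conjugate (again $V_j\not\cong\bar V_j$), the first two possibilities do not occur, so every finite-mass real irreducible projective representation is one of the $W_j$. This establishes the labeling.

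Next I would realize $W_j$ concretely. Starting from $Dirac_j(\mathbb{R}^3)$ cut down by $(\gamma^0)_k\Psi=\Psi$ on each index --- which carries the positive-energy $V_j$ --- one forms the canonical C-conjugation $\theta$ out of the charge conjugation $\Theta$ of Section \ref{section:Lorentz} acting on each spinor index together with the interchange of the two energy summands; unwinding the identifications, its fixed-point set is exactly the space of symmetric Majorana-spinor fields on $\mathbb{R}^3$ obeying $(i\gamma^0)_k\Psi(\vec p)=(i\gamma^0)_1\Psi(\vec p)$. The content of that condition is that a Majorana spinor field carries no ambient scalar $i$, but the positive- and negative-energy parts of each index are separated by the eigenvalue of $i\gamma^0$ on it, and requiring all indices to share this eigenvalue is precisely what selects $W_j$ inside the full tensor power $Pinor_j(\mathbb{R}^3)$. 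For $j=0$ the antisymmetric square of two Majorana spinors is one real dimension on which $i\gamma^0$ degenerates to the identity after the Majorana identification, giving the ``real scalar times the identity matrix'' stated for $W_0$.

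Finally I would verify the representation map. Well-definedness on $W_j$: $T_a=e^{-i\gamma^0 p\cdot a}$ commutes with every $(i\gamma^0)_k$ and so preserves the defining condition; for $L_S$ one uses that $\alpha_p=\frac{\slashed p\gamma^0+m}{\sqrt{E_p+m}\sqrt{2m}}$ intertwines $i\gamma^0$ in the rest frame with $\slashed p\gamma^0/m$, and that the Wigner factor $\alpha^{-1}_{\Lambda(p)}S\alpha_p\in SU(2)=G_{(m,\vec 0)}$ commutes with $i\gamma^0$, so the same group element acts coherently on all indices. Unitarity of $T_a$ is immediate since $e^{-i\gamma^0 p\cdot a}$ is orthogonal, and unitarity of $L_S$ follows from the Jacobian factor $\sqrt{(\Lambda^{-1})^0(p)/E_p}$ exactly as in the complex case, with the realification passage handled by Proposition \ref{prop:unitary}. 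The (projective) group laws reduce to the corresponding identities for the complex representation on $W_j^c$, because all the operators involved are restrictions to $W_j$ of complex-linear operators that preserve $W_j$. Irreducibility has already been secured abstractly, but it can be rechecked here: a proper invariant real subspace would complexify to a proper invariant complex subspace of $V_j\oplus\bar V_j$, i.e. $V_j$ or $\bar V_j$, neither of which is $\theta$-invariant.

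\emph{Expected main obstacle.} The delicate step is the concrete identification in the middle paragraph: showing that the abstract $\theta$-fixed-point space is exactly ``symmetric Majorana-spinor fields with $(i\gamma^0)_k\Psi=(i\gamma^0)_1\Psi$'' while simultaneously matching the abstract Poincare action with the explicit $\alpha_p$-dressed formula forces one to track how $\Theta$ interacts with the momentum-dependent boost $\alpha_p$ and with the tensor symmetrization, and to treat the $j=0$ degeneration by hand. The labeling and unitarity parts are comparatively routine given Section \ref{section:Systems}.
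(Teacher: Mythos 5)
The paper actually states this Proposition without any proof; the intended argument is evidently the one you reconstruct, namely to apply the complex-to-real map of Section \ref{section:Systems} to the complex classification recorded in the Note immediately preceding it, and your proposal fills in that omitted argument correctly: the observation that $V_j\not\cong\bar V_j$ (by the spectrum of the translation generators) forces the C-complex case, Schur's lemma for unitary representations makes each $V_j$ a Schur system so the uniqueness propositions apply, and the trichotomy R-real/R-pseudoreal/R-complex gives the converse, so the labeling by $j$ with $W_j^c=V_j\oplus\bar V_j$ is established exactly as the paper's framework intends. The only slip is in your $j=0$ aside: the subspace of the antisymmetric square of $Pinor$ cut out by $(i\gamma^0)_1\Psi=(i\gamma^0)_2\Psi$ is two real dimensions per momentum point (as it must be, being the realification of the one-complex-dimensional fibre of $V_0\oplus\bar V_0$ fixed by the conjugation), not one; this does not affect the labeling or the verification of the representation map, but the degenerate case deserves the careful treatment you yourself flag as the delicate step.
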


\begin{rmk}
The complex irreducible projective representations of the Poincare
group with null mass and discrete helicity split into positive and negative energy
representations, 
which are complex conjugate of each other. They are labeled by one number $j$, with $2j$
being an integer number.
The positive energy representation spaces $V_j$ are, up to isomorphisms, written
as a symmetric tensor product of Dirac spinor
fields defined on the 3-momentum space, verifying
$(\gamma^0)_k\Psi_j(\vec{p})=\Psi_j(\vec{p})$ and $(\gamma^3\gamma^5)_k\Psi_j(\vec{p})=\pm \Psi_j(\vec{p})$,
with the plus sign if $j$ is positive and the minus sign if $j$ is
negative.

The representation space $V_0$ is, up to isomorphisms, written in a
Majorana basis as a scalar defined on the 3-momentum space.

The representation map is given by:
\begin{align*}
L_S\{\Psi\}(\vec{p})&=\sqrt{\frac{(\Lambda^{-1})^0(p)}{E_p}}\prod_{k=1}^{2j}(e^{i\gamma^0\gamma^3\gamma^5\theta})_k\Psi(\vec{\Lambda}^{-1}(p))\\
T_a\{\Psi\}(\vec{p})&=e^{-i p\cdot a}\Psi(\vec{p})
\end{align*}
Where $\theta$ is the angle of the rotation of the little group $SE(2)$.
\end{rmk}

\begin{rmk}
The real irreducible projective representations of the Poincare
group with null mass and discrete helicity are labeled by one number $j$, with $2j$
being an integer number.
The positive energy representation spaces $V_j$ are, up to isomorphisms, written
as a symmetric tensor product of Majorana spinor
fields defined on the 3-momentum space, verifying
$(i\gamma^0)_k\Psi_j(\vec{p})=(i\gamma^0)_1\Psi_j(\vec{p})$ and $(\gamma^3\gamma^5)_k\Psi_j(\vec{p})=\pm \Psi_j(\vec{p})$,
with the plus sign if $j$ is positive and the minus sign if $j$ is
negative.

The representation space $V_0$ is, up to isomorphisms, written in a
Majorana basis as the realification of the complex functions defined
on the 3-momentum space, with the operator correspondent to the
imaginary unit given by the matrix $i\gamma^0$ of a Majorana spinor space.

The representation map is given by:
\begin{align*}
L_S\{\Psi\}(\vec{p})&=\sqrt{\frac{(\Lambda^{-1})^0(p)}{E_p}}\prod_{k=1}^{2j}(e^{i\gamma^0\gamma^3\gamma^5\theta})_k\Psi(\vec{\Lambda}^{-1}(p))\\
T_a\{\Psi\}(\vec{p})&=e^{-i\gamma^0 p\cdot a}\Psi(\vec{p})
\end{align*}
Where $\theta$ is the angle of the rotation of the little group $SE(2)$.
\end{rmk}

\subsection{Localization}

The concept of a measure is essential in physics.
\begin{defn}[Measure]
A measure on a set $X$, is a function which assigns a non-negative real number
---the \emph{size}---to some subsets of $X$, such that:

1) the subsets which are assigned a size are called \emph{measurable} sets, 
the complement of a measurable set and the countable union of measurable sets are
measurable sets;

2) the size of the countable union of disjoint measurable sets is the sum of their
sizes. 
\end{defn}

\begin{defn}
Consider a measurable space $(X, M)$, where $M$ is a $\sigma$-algebra
of subsets of $X$. A projection-valued-measure, $\pi$, is a map from
$M$ to the set of self-adjoint projections on a Hilbert space $H$ such
that $\pi(X)$ is the identity operator on $H$ and the function
$<\psi,\pi(A)\psi>$, with $A\in M$ is a measure on $M$, for all
$\psi\in H$.
\end{defn}

\begin{defn}
Suppose now that $X$ is a representation of $G$. 
Then, a system of imprimitivity is a pair $(U,\pi)$, where $\pi$ is a
projection valued measure and $U$ an unitary representation of $G$ on
the Hilbert space $H$, such that $U(g)\pi(A) U^{-1}(g)=\pi(gA)$.
\end{defn}

\begin{rmk}[Theorem 6.12 of \cite{vara}] 
There is a one-to-one correspondence
between the complex system of imprimitivity (U,P), based on $\mathbb{R}^3$,
and the representations of $SU(2)$.
The system (U,P) is equivalent to the system
induced by the representation of $SU(2)$.
\end{rmk}

\begin{defn}
A covariant system of imprimitivity is a system of imprimitivity (U,P),
where $U$ is a representation of the Poincare group and $P$ is a projection-valued
measure based on $\mathbb{R}^3$, such that for the Euclidean group  
$U(g)\pi(A) U^{-1}(g)=\pi(gA)$ and for the Lorentz group, for a state at time null at 
point  $\vec{x}=0$, $L\{\Psi\}(0)=S\Psi(0)$.
\end{defn}

\begin{defn}
A localizable real unitary representation of the Poincare group, 
compatible with Poincare covariance, consists of a system of imprimitivity on $R^3$ for which
at time null and $\vec{x}=0$, the Lorentz transformations do not act on the space coordinates.
\end{defn}

So, the localization of a state in $x=0$ is a property 
invariant under relativistic transformations.

\begin{prop}
Any localizable unitary representation of the Poincare group,
compatible with Poincare covariance, is a direct sum of irreducible representations which are massive or massless with discrete helicity.
\end{prop}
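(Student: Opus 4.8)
The plan is to use the Imprimitivity Theorem to put the representation into a concrete ``momentum space'' model and then to read off the admissible irreducible components from the geometry of the momentum orbits and the structure of the little groups. Since the Poincar\'e group is separable locally compact, $U$ is automatically a direct integral of irreducibles, so the real content is to identify which Wigner types can occur and to upgrade the integral to a sum. First I would restrict the representation $U$, together with its projection-valued measure $\pi$ on $\mathbb{R}^3$, to the Euclidean subgroup $E(3)$ of spatial rotations and translations (equivalently, to $SU(2)\ltimes\mathbb{R}^3$ inside $IPin(3,1)$). Because $\pi$ is based on $\mathbb{R}^3=E(3)/SU(2)$ and is Euclidean covariant, the pair $(U|_{E(3)},\pi)$ is a system of imprimitivity on $\mathbb{R}^3$, so by the Imprimitivity Theorem (and the $SU(2)$ version quoted above) it is unitarily equivalent to the canonical system induced from a unitary representation $D$ of $SU(2)$ on a Hilbert space $K$. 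After this equivalence and a spatial Fourier transform, the representation space is $L^2(\mathbb{R}^3_{\vec p},K)$, the spatial momentum $\vec P$ is multiplication by $\vec p$ (so its spectral measure is Lebesgue on $\mathbb{R}^3$ with uniform multiplicity $K$), and rotations act by rotating $\vec p$ combined with $D$. The covariance condition $L\{\Psi\}(0)=S\Psi(0)$ for the boosts then says that over the point $\vec x=0$ the boosts act on the fibre $K$ by operators $S$; together with $D$ these generate a finite-dimensional (non-unitary) representation of the Lorentz group on $K$, and closure of the Poincar\'e algebra forces the time translation generator $P^0$ to be a decomposable, rotation-covariant operator $P^0=\int^{\oplus}\mathfrak e(\vec p)\,d^3\vec p$ obeying a first-order (Dirac/Bargmann--Wigner type) relation with that Lorentz action.

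Next I would decompose $U$ into irreducibles by diagonalising the two Casimirs $C_1=P^\mu P_\mu$ and $C_2=W^\mu W_\mu$, where $W^\mu$ is the Pauli--Lubanski vector. Both commute with $\vec P$, hence both are decomposable over $\vec p$-space, with $C_1(\vec p)=\mathfrak e(\vec p)^2-|\vec p|^2$. The task then reduces to excluding every Wigner type except the massive ones and the massless ones of discrete helicity. The zero four-momentum representations are impossible because $\vec P$ has no atom at $\vec p=\vec 0$. The space-like (tachyonic) representations have $C_1=-\mu^2<0$ and non-compact little group $SO(2,1)$, and their momentum orbit projects onto $\{|\vec p|\ge\mu\}$ rather than onto $\mathbb{R}^3$; I would rule them out by observing that on such a component one would need $\mathfrak e(\vec p)^2=C_1(\vec p)+|\vec p|^2=|\vec p|^2-\mu^2$, which is not nonnegative on the ball $|\vec p|<\mu$, contradicting self-adjointness of $\mathfrak e(\vec p)$ there and the fact that $\pi$ is a genuine Euclidean-covariant projection-valued measure over all of $\mathbb{R}^3$. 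For the massless continuous-spin representations the little group is the double cover of $SE(2)$: the two transverse Pauli--Lubanski operators $T_1,T_2$ commute with $\vec P$ and therefore act fibrewise on $K$, and the helicity $\widehat p\cdot\vec J=\widehat p\cdot\vec S$ also acts only on $K$ because its orbital part $\widehat p\cdot(\vec p\times\nabla_{\vec p})$ vanishes; since $T_1\pm iT_2$ shift this bounded quantity by $\pm1$ on the finite-dimensional fibre $K$, a top-helicity vector is annihilated by $T_1+iT_2$, giving $T_1^2+T_2^2=0$ and hence $C_2=0$ on the massless part, i.e.\ discrete helicity.

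Once these types are excluded, the finite-dimensionality of the Lorentz action on $K$ forces the first-order relation obeyed by $P^0$ to have a discrete (finite) mass spectrum, so $L^2(\mathbb{R}^3_{\vec p},K)$ is a genuine direct sum of the massive and massless-discrete-helicity irreducible representations constructed in the previous two sections, rather than merely a direct integral; the massive/massless dichotomy is just whether the corresponding eigenvalue of $C_1$ is positive or zero. I expect the main obstacle to be precisely this last step --- converting the fibrewise, Casimir-diagonalised data into an honest direct sum of the previously classified irreducibles, which amounts to re-deriving the Majorana--Fourier and Majorana--Hankel decompositions --- together with making rigorous the claim that the localization axiom $L\{\Psi\}(0)=S\Psi(0)$ really forces the Lorentz action on the fibre $K$ to be finite-dimensional, which is what underlies both the exclusion of continuous spin and the discreteness of the mass spectrum.
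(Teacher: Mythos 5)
Your overall skeleton---Mackey's imprimitivity theorem to replace the localizable system by the one induced from an $SU(2)$ representation on a fibre $K$, a Fourier transform to a momentum-space model in which $\vec P$ is multiplication by $\vec p$ with uniform multiplicity, and the exclusion of the zero-momentum and tachyonic Wigner classes by comparing their momentum support with all of $\mathbb{R}^3$---is the same as the paper's, and those two exclusions are essentially the paper's own argument. The genuine gap is in the step you yourself flag at the end: the exclusion of the massless continuous-spin representations, and the conversion of the direct integral into a direct sum, rest entirely on the claim that the covariance condition forces the Lorentz action on the fibre $K$ to be finite-dimensional, and you give no argument for this. It cannot be supplied by localizability alone: massless infinite-spin representations \emph{do} admit Euclidean-covariant systems of imprimitivity on $\mathbb{R}^3$ (this is a classical Wightman-type fact), so nothing in the induced-representation model bounds the helicity content of the fibre, and without such a bound your ladder argument with $T_1\pm iT_2$ has no top-helicity vector to start from---it assumes precisely what has to be proven.

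The paper closes this hole by using the covariance axiom quantitatively rather than only as ``the boosts act on the fibre at $\vec x=0$'': it extracts that the boost action at the origin, $S\equiv L\Lambda^{-1}$, must be independent of $\vec p$, and then observes that for an infinite-spin component a boost along $\vec p$ rescales the translation parameters of the little group $SE(2)$ by $E_p$, contradicting that $\vec p$-independence. Some argument of this kind is the missing ingredient in your proposal; your final paragraph (discreteness of the mass spectrum, hence a direct sum rather than a direct integral of the previously classified irreducibles) leans on the same unestablished finite-dimensionality and therefore inherits the same gap.
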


\begin{proof}
Since the system is a unitary Poincare representation, it is a direct sum of irreducible unitary Poincare representations 
and so there must be an unitary transformation $U$, such that:
\begin{align}
&\Psi(x+a)=(U e^{-J P\cdot a}U^{-1})\{\Psi\}(x)\\
&S\Psi(\Lambda(x))=(U L U^{-1})\{\Psi\}(x)
\end{align}
Where $J$ is the operator corresponding to the imaginary unit after the 
realification of the Poincare representation, so $L$ commutes with $J$. 

The system of imprimitivity is a representation of $SU(2)$, hence the operator $i\gamma^0$ is well defined. 
If we make a Fourier transformation, then we get that:
\begin{align}
(U e^{J \vec{P}\cdot \vec{a}}U^{-1})\{\Psi\}(\vec{p})=e^{i\gamma^0 \vec{p}\cdot \vec{a}}\Psi(\vec{p})
\end{align}
Note that this equation is valid for all $\vec{p}$.
The system is a direct sum of irreducible unitary Poincare representations.
Then, for  $m^2<0$ only the subspace $\vec{p}^2\geq |m^2|$ is valid.
For $p=0$ only the subspace  $\vec{p}=0$ is valid. 
Since the other types of irreducible representations verify $p\neq 0$ and $m^2\geq 0$, 
the complementary subspaces $\vec{p}^2 < |m^2|$ or $\vec{p}\neq 0$ cannot be 
representation spaces and hence the representations with 
$m^2<0$ and $p=0$ cannot be subspaces of a localizable representation.

%But then, for $\vec{p}=0$, we get $(U e^{J \vec{P}\cdot \vec{a}}U^{-1})\{\Psi\}(0)=\Psi(0)$, which 
%cannot be for $m^2<0$. For $\vec{p}\neq 0$ we get that $e^{i\gamma^0 \vec{P}\cdot \vec{a}}$ is not the 
%identity operator, which cannot be for $p=0$. 
So we are left with $p\neq 0$ and $m^2\geq 0$. Now we can define a subspace for each $m^2$, such that the square of 
the generator of translations in time is given by $\vec{\partial}^2+m^2$. In each subspace there is a localizable representation.

Given a subspace with $p\neq 0$ and $m^2\geq 0$, 
$M$, we consider the subspace $N$ of the representation $M\oplus M_0$ verifying $e^{iH t}\Psi=e^{iH_0 t}\Psi$,
where $M_0$ is a spin-0 representation and $e^{iH_0 t}$ is the translation in time acting on $M_0$.
Then, $e^{iH_0(\vec{\partial}) t}U=Ue^{iH_0(J\vec{P})}$. Multiplying $U$ by 
$\alpha_p \sqrt{m/E_p}$ we can check that $J\Psi=i\gamma^0\Psi$ and so $N$ is equivalent to $M$.

%A system of imprimitivity on $R^3$ is also a representation of the rotations, which implies that
%they admit a scalar representation and a Fourier-Majorana transformation.

Now we define the unitary transformation 
$\Lambda \{\Psi\}(p)=\sqrt{\frac{E_p}{\Lambda^0(p)}}\Psi(\Lambda^{-1}(p))$.
Then, we can check that $S\equiv L\Lambda^{-1}$ and it does not depend on $\vec{p}$.
If we redefine $U\{\Psi\}(\vec{p})= \alpha_p\sqrt{\frac{1}{\Lambda^0(p)}} U'\{ \Psi\}(\vec{p})$,
then we get that $\Lambda S \alpha_pU'\{\Psi\}(\vec{p})=\alpha_p\Lambda Q_pU'\{\Psi\}(\vec{p})$ 
and so $U'$ commutes with the Poincare representation.

If we look for subspaces where $m^2=0$ and the representation of $Q_p$ has infinite spin,
then the boost in the $z$ direction for a momenta in the $z$ direction 
multiplies the modulus of the translations of $SE(2)$ by $E_p$, 
which is in contradiction with the fact that
$S\equiv L\Lambda^{-1}$ does not depend on $\vec{p}$.

So, we are left with a direct sum of massive representations and massless with discrete helicity.
\end{proof}

\begin{prop}
For any complex localizable unitary representation of the Poincare group,
compatible with Poincare covariance, it if contains as a subspace a positive energy representation 
then it also contains the corresponding negative energy representation.
\end{prop}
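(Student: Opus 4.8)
The plan is to combine the previous proposition with the rigidity of the covariance requirement, which together force $\mathcal{H}$ into the Bargmann-Wigner field picture; there a positive-energy subrepresentation is automatically accompanied by its conjugate.

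By the previous proposition, and more precisely by the construction in its proof, a complex localizable representation $\mathcal{H}$ compatible with Poincare covariance is, after a unitary transformation, a direct sum over masses $m\geq 0$ and spins or helicities of complex Bargmann-Wigner field spaces on $\mathbb{R}^3$: the projection-valued measure is the canonical multiplication one $\pi(A)\Psi=\chi_A\Psi$, the Lorentz group acts by $L_S\{\Psi\}(x)=S\Psi(\Lambda^{-1}x)$ with $S$ the finite-dimensional matrix on the spinor fibre (so the covariance condition $L\{\Psi\}(0)=S\Psi(0)$ holds), and the fields obey the free Dirac equation in each spinor index. The point is that a covariant multiplication measure $\pi(A)=\chi_A$ leaves $\Psi(\vec x)$ unconstrained at each point, so the fibre of $\mathcal{H}$ is a full tensor power of the complexified Majorana-spinor space---a proper subspace, such as a positive-energy one, is not a fibrewise condition and cannot be imposed while keeping $\pi$ covariant---and imposing the Dirac equation in each index presents $\mathcal{H}$ as a direct sum of copies of the complex Bargmann-Wigner spaces $Dirac_j(\mathbb{R}^3)=(Pinor_j(\mathbb{R}^3))^c$, the complexifications of the real Bargmann-Wigner spaces $W_j$.

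As a Poincare representation, $Dirac_j(\mathbb{R}^3)=W_j^c=V_j\oplus\overline{V_j}$ is the direct sum of the positive-energy irreducible $V_j$ and the negative-energy irreducible $\overline{V_j}$, which are complex conjugates of each other; this is the correspondence, used in the classification of the unitary irreducibles, between one real irreducible and one conjugate pair of complex irreducibles. (After the Fourier-Majorana transform $\mathcal{F}_M$ the time-translation generator satisfies $H^2=E_p^2$ and takes the values $\pm E_p$ on the two eigenfibres of $\gamma^0$ on the reference spinor index, both of which occur since $\gamma^0$ has eigenvalues $\pm1$.) The same holds for the massless discrete-helicity series. Writing $\mathcal{H}\cong\bigoplus_i Dirac_{j_i}(\mathbb{R}^3)=\bigoplus_i(V_{j_i}\oplus\overline{V_{j_i}})$ and using that a unitary equivalence preserves the spectrum of the energy (so a positive-energy irreducible is never equivalent to a negative-energy one), if $V_j\subseteq\mathcal{H}$ then $V_j$ is equivalent to one of the $V_{j_i}$ with $j_i=j$, whence the matching summand $\overline{V_{j_i}}=\overline{V_j}$, the corresponding negative-energy representation, also occurs in $\mathcal{H}$.

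I expect the first step to be the genuine obstacle: showing that ``localizable and compatible with Poincare covariance'' really does pin the complex structure down to the one of the complexified real Bargmann-Wigner field, and excludes the Schur complex structure on a single copy of $W_j$, which would realize $V_j$ alone and make the statement false. Concretely one must rule out a covariant system of imprimitivity carried by a lone positive-energy representation; the obstruction is the familiar one that the covariant position observable $\Psi\mapsto\chi_A\Psi$ does not leave the positive-energy subspace of the Dirac solution space invariant---multiplication by $\chi_A$ is convolution in momentum space, which mixes the $\gamma^0=+1$ and $\gamma^0=-1$ fibres---so that no covariant projection-valued measure can be supported on $V_j$ by itself.
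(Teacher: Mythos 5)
Your proposal is correct and its decisive step is the same as the paper's own argument: the positive-energy condition is a momentum-dependent projector built from $\gamma^0$, and it fails to commute with the covariant localization projections $\pi(A)=\chi_A$ (equivalently, with the momentum-dependent spinor matrices relating momentum and coordinate space), so no covariant system of imprimitivity can be carried by $V_j$ alone. The decomposition into complexified Bargmann--Wigner spaces $V_j\oplus\overline{V_j}$ that occupies most of your write-up is scaffolding the paper leaves implicit; the essential obstruction you state in your final paragraph is exactly the one the paper's proof uses.
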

\begin{proof}
The subspaces defined by the projectors involving the $i\gamma^0$s in the $Q_p$ representation are not conserved 
by the system of imprimitivity because $\gamma^0$ does not commute with the matrices $\vec{\gamma}\gamma^0$
present in the transformation from momenta to coordinate space. 
When we go back to coordinate space, the projector on the $i\gamma^0$s can be written as an equality of the time translations which is 
not part of the commuting ring of the SU(2) representation and hence it does not commute with the system of imprimitivity on $R^3$. 
\end{proof}

\begin{cor}
A localizable Poincare representation is an irreducible representation of the Poincare group (including parity) if and only if it is:
a)real and b)massive with spin 1/2 or massless with helicity 1/2.
\end{cor}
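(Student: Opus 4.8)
The plan is to combine the two preceding propositions of this subsection with the classification of the real irreducible projective representations of the Poincar\'e group obtained in Section~\ref{section:Poincare}, treating the two implications of the corollary separately.

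For the direction ``$\Leftarrow$'', assume the representation is real and is the massive spin-$1/2$ or the massless helicity-$1/2$ one. First I would note that the associated Bargmann--Wigner field is the Majorana (resp.\ Weyl--Majorana) field, for which $\int d^3\vec x\,\bar\Psi\gamma^0\Psi=\int d^3\vec x\,\Psi^\dagger\Psi$ is the time component of a conserved current on the mass shell and hence is at once Lorentz invariant and positive definite; together with the covariant transformation law $L\{\Psi\}(0)=S\Psi(0)$ this produces a covariant system of imprimitivity on $\mathbb{R}^3$, so the representation is localizable. Parity is then implemented by $i\gamma^0\in\Omega\subset Pin(3,1)$ together with $\vec x\mapsto-\vec x$, an action that preserves the Bargmann--Wigner constraint, so the representation is indeed a representation of the full Poincar\'e group. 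Finally the complexification of the spin-$1/2$ (resp.\ helicity-$1/2$) space is $V\oplus\bar V$ with $V$ a C-complex irreducible --- its conjugate being the inequivalent opposite-energy representation --- so by Definition~\ref{defn:Rsystem} the real representation is R-complex and by Proposition~\ref{prop:Rirreducible} it is already irreducible as a representation of the restricted group; adjoining the parity operator to an irreducible system cannot create invariant subspaces, so it remains irreducible.

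For the direction ``$\Rightarrow$'', assume a localizable Poincar\'e representation is irreducible including parity. By the first proposition of this subsection it is a direct sum of irreducible representations, each massive or massless with discrete helicity. It must be real: if it were not, it would contain a nonzero irreducible subrepresentation of definite energy sign, say positive, and by the second proposition it would then also contain the corresponding negative-energy subrepresentation; but parity commutes with the time translations and hence preserves the sign of the energy, so the positive- and negative-energy subspaces would each be invariant under the full Poincar\'e group, contradicting irreducibility. Hence the representation is real, and by the classification of Section~\ref{section:Poincare} it is a real massive spin-$j$ or real massless helicity-$j$ representation (possibly a pair of such exchanged by parity) for some $j$; it remains to force $j=1/2$.

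This last step is the main obstacle. The point is that covariant localizability requires, at $t=0$ and $\vec x=0$, the equality $L\{\Psi\}(0)=S\Psi(0)$ with $S\in Spin^+(1,3)$ the finite-dimensional spinor matrix, which is to say that the unitary Hilbert space must be realizable as a space of covariant $2j$-index Bargmann--Wigner fields carrying a Lorentz-invariant positive-definite sesquilinear form. Every such Poincar\'e-invariant form is, up to normalization, $\int d^3\vec x\,\bar\Psi\,\Gamma\,\Psi$ with $\Gamma$ a product of Majorana matrices, and the conserved-current argument shows that for $2j=1$ the choice $\Gamma=\gamma^0$ gives the positive form $\int\Psi^\dagger\Psi$, whereas for $2j\ge 2$ any Lorentz-invariant form of this kind is a component of a conserved tensor of rank at least one and so fails to be boost invariant exactly when it is positive, and for $2j=0$ no nonzero invariant form of this type exists at all; the same dichotomy in the massless case reproduces the non-localizability of helicity $\ge 1$ noted by Newton and Wigner. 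Hence only $j=1/2$ survives, which completes both implications. I expect the bookkeeping of which bilinear covariants are simultaneously invariant, positive, and compatible with the mass-shell constraint to be the delicate ingredient, with everything else following formally from the propositions already established.
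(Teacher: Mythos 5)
Your reduction, in the ``only if'' direction, to a single real massive spin-$j$ or massless helicity-$j$ representation (via the two preceding propositions plus the parity/energy-sign argument for excluding complex representations) is reasonable, but the step you yourself flag as the main obstacle --- excluding $j\neq 1/2$, including spin $0$ --- is precisely where the content of the corollary lies, and your proposed mechanism for it does not work as stated. You argue that for $2j\neq 1$ no Lorentz-invariant positive-definite form of the ultralocal shape $\int d^3\vec x\,\bar\Psi\,\Gamma\,\Psi$ exists. But the higher-spin representations are unitary by construction: on the Bargmann--Wigner constraint subspace the momentum-space inner product (equivalently $\int d^3\vec x\,\Psi^\dagger\Psi$ transported by the Majorana--Fourier transform) is both positive and invariant, so non-existence of a positive invariant form is not the obstruction; you also never justify that covariant localizability forces the form to be of that single-$\Gamma$ type, and the real spin-$0$ representation of this paper carries two spinor indices, so the single-matrix framework does not even apply to it. The actual obstruction, and the idea missing from your proof, is the one the paper uses: the projectors that carve the irreducible representation out of $Pinor_j(\mathbb{R}^3)$ --- the constraints $(i\gamma^0)_k\Psi=(i\gamma^0)_1\Psi$ tying the spinor indices together, and the energy-sign/complex-structure projectors built from $i\gamma^0$ in the little-group representation $Q_p$ --- do not commute with the position projection-valued measure, because in coordinate space they amount to conditions on the time translations, which lie outside the commuting ring of the $SU(2)$ system of imprimitivity (this was established in the proof of the preceding proposition). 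Hence a localizable system cannot have its irreducibility enforced by such projectors, and only the real representations with a single spinor index require none of them; your text replaces this with an expectation about ``bookkeeping of bilinear covariants'', so the hard half of the equivalence is not established.

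The ``if'' direction of your proposal (R-complexity of the complexification, irreducibility from the proposition on R-complex systems, parity implemented inside the discrete Pin subgroup) is essentially sound, though it partly re-proves localizability, which the corollary's phrasing already presupposes; the genuine gap is the unproved exclusion of $j\neq 1/2$ described above, which is also where your route departs most sharply from the paper's one-line projector argument.
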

\begin{proof}
Since the subspaces defined by the projectors involving the $i\gamma^0$s in the $Q_p$ representation are not conserved 
by the system of imprimitivity, then the condition for irreducibility cannot involve such projectors, which only happens for real representations with
one spinor index. 
\end{proof}

Notice that the condition of irreducibility of the representations admits localized solutions---
the derivative of a bump function is a bump function, so we can find bump functions in the representation space---but it does not admit a position operator---the subspace of bump functions 
is not closed. Hence, we can say that a particular spin 1 state is in an arbitrarily small region of space, but the measurement of the position of an arbitrary spin 1 state might make it no longer a spin 1 state.

Going to complex systems, we can check that in the massive case, 
the condition of irreducibility does not  admit localized solutions---given a localized solution $\Psi$ in a region of space, then the result of the 
application of the projection operator to $\Psi$ is not localized in a region of space.
As for the massless representation, the condition of positive energy
does not admit localized solutions either---for the same region as above---, but the condition for a chiral irreducible representation 
does admit localized solutions. The parity operator for such a chiral irreducible representation is anti-linear.

The localizable Poincare representation is Poincare
covariant because for time $x^0=0$ at point $\vec{x}=0$, we have for the Lorentz
group $L\{\Psi\}(0)=S\Psi(0)$. The localizable Poincare representation is
compatible with causality because the propagator $\Delta(x)=0$ for $x^2<0$
(space-like $x$), where the propagator is defined for spin or helicity $1/2$ as:

\begin{align}
\Delta(x)\equiv \int \frac{d^3\vec{p}}{(2\pi)^32E_p}
\frac{\slashed p\gamma^0+m}{\sqrt{E_p+m}}e^{-i\gamma^0 p\cdot x}
\frac{\slashed p\gamma^0+m}{\sqrt{E_p+m}}
\end{align}
 And verifies:
\begin{align}
\Psi(x)=\int d^3\vec{y} \Delta(x-y)\Psi(y)
\end{align}
To show it we just need to do a Lorentz transformation such
that $x^0=0$ and then show that  $\Delta((0,\vec{x}))=0$ for
$\vec{x}\neq 0$.

%Therefore the localizable unitary representations of the Poincare group 
%(compatible with Poincare covariance and causality) are direct sums of 
%irreducible representations with discrete spin and helicity. 

\section{Energy Positivity}
\label{section:Energy}

\subsection{Vectors}
The role played by the unitary representations of the Poincare group in Quantum Theory corresponds to 
the role played by the unitary representations of the rotation group in Newton's mechanics.
Newton's mechanics is built using the 3-D vectors as its basic unit.
Some features of the non-commutative operators are already present in Newton's physics:
is a vector along the direction $z$? The answer can be yes, no, or part of it. 
And the result matters, for instance, the inner product of the 
vector Force and the vector of displacement is Work. 
The vector Force can come from someone pushing a block, from the gravitational or electric field or 
from friction. The Newton's vector is an abstraction to represent different  physical quantities,
all it matters are their properties as a unitary representation of the rotation group.
In the same way, a vector in Quantum Theory is a unitary 
representation of the Poincare group, it may be used to represent different physical objects.

\subsection{Density matrix and real Hilbert space}

As a consequence of Schur's lemma---related with the Frobenious theorem---, the set of normal operators commuting with an irreducible real unitary 
representation of a Lie group is isomorphic to the reals, to the complex numbers or to the quaternions---the 
irreduciblity of a group representation on a Hilbert space is intuitively the minimization of the degrees of freedom of the Hilbert space.
This fact turns the 
study of the Hilbert spaces over the reals, the complex or the quaternions interesting for Quantum Theory.
However, once we consider the density matrix in Quantum Mechanics, it is a simple exercise to show that the 
complex and quaternion Hilbert spaces are special cases of the real Hilbert space. 

In short, the complex Hilbert space case is achieved once we postulate that there is a 
unitary operator $J$, with $J^2=-1$, which commutes with the density matrix and all the observables.
The quaternionic Hilbert space corresponds to the case where both the unitary operators $J$ and $K$ commute with
the density matrix and all the observables, with $J^2=K^2=-1$ and $JK=-KJ$. Note that a complex 
Hilbert space is an Hilbert space over a division algebra over the real numbers, hence it has an extra layer of 
mathematical structure, which is dispensable because of the already existing density matrix in Quantum Mechanics. 

Of course, if the postulate corresponding to the complex Hilbert space is correct, 
there are practical advantages in using the complex notation. However, we should be aware that using the complex 
notation is a practical choice, not one of fundamental nature in the formalism of Quantum Mechanics. 
We cannot claim that the fact that the operator 
$J$ exists is a deductible consequence of the formalism of Quantum Mechanics with a complex Hilbert space.
It would be the same as claiming that we can derive from Newton's formalism that the
space is 3 dimensional, instead of assuming that we use 3 dimensional vectors in Newton 
mechanics because we postulate that the space has 3 dimensions.

\subsection{Localization}
The Hilbert space of non-relativistic Quantum Mechanics may be necessarily isomorphic to a
complex one\cite{yang,quaternionic,complexstructures,gibbons},
but there is no proper coordinate space---compatible with covariance under relativistic 
transformations---in the complex Hilbert space of an irreducible unitary representation of the Poincare group  
\cite{newton,localization,causality,stringfields,covariance}---the group of symmetries of the space-time. 
The complete definition of a Newton's 3-D vector includes the vector and 
the point in space where the vector is applied, but in a Quantum Mechanics' vector the application points are to be deduced from the vector 
itself; so it is of fundamental importance to be able to recover the coordinate space from the Hilbert space of a representation of 
the symmetries of space-time, otherwise the vector is of little meaning.

Choosing real representations is, in practice, choosing real Majorana spinors instead of
complex scalars as the basic elements of relativistic Quantum Theory.
For instance, we will see that the state of a spin-0 elementary system is a tensor 
field of real Majorana spinors, which only in momenta space (not in coordinate space) can
be considered a complex scalar field.
 
\subsection{Many particles}
In classical mechanics, the energy of a free body of mass $m$ is $E_p=\frac{\vec{p}^2}{2m}$. 
Since it is proportional to the square of the momentum, it does not make sense to talk about a negative energy. 
However, if we consider a box in which we can insert and remove free bodies such that in both the initial and final 
states the box is empty, the insertion of a body with momentum $\vec{p}$ and negative energy 
$E_p=-\frac{\vec{p}^2}{2m}$ to the system is equivalent to the removal of a body with momentum $-\vec{p}$ positive energy 
$E_p=\frac{\vec{p}^2}{2m}$, because the equations of motion are invariant under time reversal.
But time reversal transforms the act of adding a body on the act of removing a body. 
  
So, how can we say that a body was added to the system and not that the movie of the removal of a body is playing backwards?
The solution is to identify a feature on the system that is also affected by time reversal and we use it as a reference. 
For instance, if there is one body that---we know, or we define it as if---it was added to the system, 
then the addition of that body will appear a removal if we are watching the movie backwards. 
The product of the energies of two bodies is invariant under the Galilean transformations.
Note that we can only remove a body which was previously added to the box, 
as well as only add a body which will later be removed, to keep the box empty in both the initial and final states. 

Hence, the value of any quantity  which is non-invariant under the space-time symmetries---including the sign of the Energy---
by itself does not mean much without something to compare to, such that we can compute an invariant quantity.

In non-relativistic Quantum Mechanics, 
the translations in time
are given by the operator $e^{i\frac{\vec{\partial}^2}{2m}t}$---where $t$ is time---acting on a Hilbert space of \emph{positive} energy solutions 
because there is the imaginary unit---which is invariant under Lorentz transformations and anti-commutes with the time reversal transformations---
that we use as our reference.

In relativistic Quantum Mechanics, the translations in time are given by the operator
 $e^{(\gamma^0\vec{\gamma}\cdot \vec{\partial}+i\gamma^0 m)t}$, 
which is real---in the Majorana basis---and the position operator does \emph{not} leave invariant a Hilbert space of \emph{positive} 
Energy solutions. 
In other words, if we want a coordinate space which is relativistic covariant, the imaginary unit cannot be used as our reference for 
the sign of the energy. We cannot say that by considering real Hilbert spaces we are creating a new problem about Energy positivity.
as if we insist on a covariant coordinate space, the problem about the Energy positivity does not vanish in complex Hilbert spaces.
Remember that ever since the Dirac sea (which led to the prediction of the positron) the problem about Energy positivity 
was always solved in a many particle description.

In a system of particles, 
we can compare the energy of one particle with the energy of another particle we know it is positive,
like we would do in classical mechanics. 
If our reference particle is massive and has momentum $q$, then the Poincare
invariant condition $p\cdot q>0$ will be respected by a massive or massless particle 
with momentum $p$ if and only if $p^0$ has the same sign as $q^0$. 
Instead of the momenta we can use the translations generators to define the condition for energy positivity. 
%We can consider the subspace of the tensor product of the representations of both particles such that the 
%translation 
%and go to the coordinate space to get a local condition. And we do not need the imaginary unit for that.

\section{Conclusion}

The complex irreducible representations are not a generalization of the real
irreducible representations, in the same way that the complex numbers are a
generalization of the real numbers. There is a map, one-to-one or
two-to-one and surjective up to equivalence, from the complex to
the real irreducible representations of a Lie group on a Hilbert
space. 

We obtained all the real
unitary irreducible projective representations of the Poincare
group, with discrete spin, as real Bargmann-Wigner fields.
For each pair of complex representations with positive/negative
energy, there is one real representation.
The Majorana-Fourier and Majorana-Hankel unitary
transforms of the real Bargmann-Wigner fields relate the
coordinate space with the linear and angular momenta spaces.
The localizable unitary representations of the Poincare group 
(compatible with Poincare covariance and causality) are direct sums of 
irreducible representations with discrete spin and helicity, this result establishes a fundamental
difference between the representations associated to existing elementary systems and the other 
representations for which no existing elementary systems are known to be associated.

We might be interested in the position as an observable. Now the
question is, given an irreducible representation of the Poincare
group, should the position be invariant under a $U(1)$ symmetry? 
Unfortunately, everyone known to
the author that studied this problem assumed that it should. 
But the answer a priori is no it should not, because the $U(1)$ is
related with the gauge symmetry which is a local symmetry and it would
be useful to have a well defined notion of localization before we
start considering local symmetries. For the spin one-half in a real Hilbert space, 
the localization problems\cite{localization,causality,newton,wightman,vara,stringfields}
only appear  if we require that all the
observables are invariant under a U(1) symmetry---usually associated
with the charge---, this is related with the result in quantum field theory
that causality requires the existence of anti-particles\cite{weinberg}.

\bibliographystyle{utphys}
\bibliography{Poincare}

\providecommand{\href}[2]{#2}\begingroup\raggedright\begin{thebibliography}{10}

\bibitem{newton}
T.~D. Newton and E.~P. Wigner, ``Localized states for elementary systems,''
  \href{http://dx.doi.org/10.1103/RevModPhys.21.400}{{\em Rev. Mod. Phys.}
  {\bfseries 21} (Jul, 1949) 400--406}.

\bibitem{wignerquote}
E.~P. Wigner, ``The unreasonable effectiveness of mathematics in the natural
  sciences. richard courant lecture in mathematical sciences delivered at new
  york university, may 11, 1959,''
  \href{http://dx.doi.org/10.1002/cpa.3160130102}{{\em Communications on Pure
  and Applied Mathematics} {\bfseries 13} no.~1, (1960) 1--14}.
  \url{http://dx.doi.org/10.1002/cpa.3160130102}.

\bibitem{wigner}
E.~P. Wigner, ``{On Unitary Representations of the Inhomogeneous Lorentz
  Group},''
{\em Annals Math.} {\bfseries 40} (1939) 149--204.
%%CITATION = ANMAA,40,149;%%.

\bibitem{mackey}
G.~Mackey, {\em Unitary group representations in physics, probability, and
  number theory}.
\newblock Advanced book classics. Addison-Wesley Pub. Co., 1989.

\bibitem{ohnuki}
Y.~Ohnuki, {\em Unitary Representations of the Poincar{\'e} Group and
  Relativistic Wave Equations}.
\newblock World Scientific Publishing Company Incorporated, 1988.

\bibitem{poincare}
N.~{Straumann}, ``{Unitary Representations of the inhomogeneous Lorentz Group
  and their Significance in Quantum Physics},'' {\em ArXiv e-prints} (Sept.,
  2008) , \href{http://arxiv.org/abs/0809.4942}{{\ttfamily arXiv:0809.4942
  [math-ph]}}.

\bibitem{weinberg}
S.~Weinberg, {\em The Quantum Theory of Fields: Modern Applications}.
\newblock The Quantum Theory of Fields. Cambridge University Press, 1995.

\bibitem{knapp}
A.~Knapp, {\em Representation Theory of Semisimple Groups: An Overview Based on
  Examples}.
\newblock Princeton Landmarks in mathematics and physics. Princenton University
  Press, 2001.

\bibitem{feynmanrules}
S.~{Weinberg}, ``{Feynman Rules for Any Spin},''
  \href{http://dx.doi.org/10.1103/PhysRev.133.B1318}{{\em Physical Review}
  {\bfseries 133} (Mar., 1964) 1318--1332}.

\bibitem{symmetry}
D.~Gross, ``{Symmetry in physics: Wigner's legacy},''
{\em Phys.Today} {\bfseries 48N12} (1995) 46--50.
%%CITATION = PHTOA,48N12,46;%%.

\bibitem{realQM}
J.~Myrheim, ``{Quantum mechanics on a real hilbert space},'' {\em ArXiv
  e-prints} (1999) ,
\href{http://arxiv.org/abs/quant-ph/9905037}{{\ttfamily arXiv:quant-ph/9905037
  [quant-ph]}}.
%%CITATION = QUANT-PH/9905037;%%.

\bibitem{realqft}
E.~Stueckelberg, ``Quantum theory in real hilbert-space,''
  \href{http://dx.doi.org/10.5169/seals-113093}{{\em Helvetica Physica Acta}
  {\bfseries 33} (1960) 727--752}.

\bibitem{realqftII}
E.~Stueckelberg and M.~Guenin, ``Quantum theory in real hilbert-space. ii,
  addenda and errats,'' \href{http://dx.doi.org/10.5169/seals-113188}{{\em
  Helvetica Physica Acta} {\bfseries 34} (1961) 621--628}.

\bibitem{realqftIII}
E.~Stueckelberg, M.~Guenin, and C.~Piron, ``Quantum theory in real
  hilbert-space. iii, fields of the first kind (linear field operators),''
  \href{http://dx.doi.org/10.5169/seals-113192}{{\em Helvetica Physica Acta}
  {\bfseries 34} (1961) 621--628}.

\bibitem{quantumstatistics}
L.~Accardi and A.~Fedullo, ``On the statistical meaning of complex numbers in
  quantum mechanics,'' \href{http://dx.doi.org/10.1007/BF02817051}{{\em Lettere
  al Nuovo Cimento} {\bfseries 34} no.~7, (1982) 161--172}.

\bibitem{hestenes_old}
D.~Hestenes, ``{Real Spinor Fields},''
  \href{http://dx.doi.org/10.1063/1.1705279}{{\em J. Math. Phys.} {\bfseries 8}
  no.~4, (1967) 798--808}.

\bibitem{classicalfields}
N.~J. {Poplawski}, ``{Spacetime and fields},'' {\em ArXiv e-prints} (Nov.,
  2009) , \href{http://arxiv.org/abs/0911.0334}{{\ttfamily arXiv:0911.0334
  [gr-qc]}}.

\bibitem{gravitypoincare}
G.~Grignani and G.~Nardelli, ``Gravity and the poincaré group,''
  \href{http://dx.doi.org/10.1103/PhysRevD.45.2719}{{\em Phys. Rev. D}
  {\bfseries 45} (Apr, 1992) 2719--2731}.

\bibitem{gravitypoincare2}
F.~W. Hehl, Y.~N. Obukhov, and D.~Puetzfeld, ``{On Poincar\'e gauge theory of
  gravity, its equations of motion, and Gravity Probe B},''
  \href{http://dx.doi.org/10.1016/j.physleta.2013.04.055}{{\em Phys.Lett.}
  {\bfseries A377} (2013) 1775--1781},
\href{http://arxiv.org/abs/1304.2769}{{\ttfamily arXiv:1304.2769 [gr-qc]}}.
%%CITATION = ARXIV:1304.2769;%%.

\bibitem{brancocp}
G.~Branco, L.~Lavoura, and J.~Silva, {\em CP Violation}.
\newblock International series of monographs on physics. Clarendon Press, 1999.

\bibitem{strongcp}
R.~D. {Peccei}, ``{The Strong CP Problem and Axions},'' in {\em Axions},
  M.~{Kuster}, G.~{Raffelt}, and B.~{Beltr{\'a}n}, eds., vol.~741 of {\em
  Lecture Notes in Physics, Berlin Springer Verlag}, pp.~3--540.
\newblock 2008.
\newblock \href{http://arxiv.org/abs/arXiv:hep-ph/0607268}{{\ttfamily
  arXiv:hep-ph/0607268}}.

\bibitem{imbalance}
L.~{Canetti}, M.~{Drewes}, and M.~{Shaposhnikov}, ``{Matter and antimatter in
  the universe},'' \href{http://dx.doi.org/10.1088/1367-2630/14/9/095012}{{\em
  New Journal of Physics} {\bfseries 14} no.~9, (Sept., 2012) 095012},
  \href{http://arxiv.org/abs/1204.4186}{{\ttfamily arXiv:1204.4186 [hep-ph]}}.

\bibitem{localization}
J.~F. Perez and I.~F. Wilde, ``Localization and causality in relativistic
  quantum mechanics,'' \href{http://dx.doi.org/10.1103/PhysRevD.16.315}{{\em
  Phys. Rev. D} {\bfseries 16} (Jul, 1977) 315--317}.

\bibitem{causality}
G.~C. {Hegerfeldt}, ``{Instantaneous spreading and Einstein causality in
  quantum theory},''
  \href{http://dx.doi.org/10.1002/(SICI)1521-3889(199812)7:7/8<716::AID-ANDP716>3.0.CO;2-T}{{\em
  Annalen der Physik} {\bfseries 510} (Dec., 1998) 716--725},
  \href{http://arxiv.org/abs/quant-ph/9809030}{{\ttfamily quant-ph/9809030}}.

\bibitem{stringfields}
M.~{Plaschke} and J.~{Yngvason}, ``{Massless, string localized quantum fields
  for any helicity},'' \href{http://dx.doi.org/10.1063/1.3700765}{{\em Journal
  of Mathematical Physics} {\bfseries 53} no.~4, (Apr., 2012) 042301},
  \href{http://arxiv.org/abs/1111.5164}{{\ttfamily arXiv:1111.5164 [math-ph]}}.
  section 2.4 on the self-adjoint fields.

\bibitem{wightman}
A.~Wightman, ``{On the localizability of quantum mechanical systems},''
\href{http://dx.doi.org/10.1103/RevModPhys.34.845}{{\em Rev.Mod.Phys.}
  {\bfseries 34} (1962) 845--872}.
%%CITATION = RMPHA,34,845;%%.

\bibitem{vara}
V.~Varadarajan, {\em Geometry of Quantum Theory}.
\newblock Springer Science+Business Media LLC, 2007.

\bibitem{diracsea}
{Alvarez-Gaume, Luis and Vazquez-Mozo, Miguel A.}, ``{Introductory lectures on
  quantum field theory},''
\href{http://arxiv.org/abs/hep-th/0510040}{{\ttfamily arXiv:hep-th/0510040
  [hep-th]}}.
%%CITATION = HEP-TH/0510040;%%.

\bibitem{mathQM}
G.~Teschl, {\em Mathematical Methods in Quantum Mechanics: With Applications to
  Schr{\"o}dinger Operators}.
\newblock Graduate studies in mathematics. American Mathematical Society, 2009.

\bibitem{Hall}
B.~Hall, {\em Lie Groups, Lie Algebras, and Representations: An Elementary
  Introduction}.
\newblock Graduate Texts in Mathematics. Springer, 2003.

\bibitem{locallycompact}
R.~W. Henrichs, ``On decomposition theory for unitary representations of
  locally compact groups,''
  \href{http://dx.doi.org/10.1016/0022-1236(79)90099-5}{{\em Journal of
  Functional Analysis} {\bfseries 31} no.~1, (1979) 101 -- 114}.

\bibitem{realalgebras}
A.~Oni{\v{s}}{\v{c}}ik and E.~S. I.~I. for Mathematical~Physics, {\em Lectures
  On Real Semisimple Lie Algebras And Their Representations}.
\newblock ESI Lectures in Mathematics and Physics. European Mathematical
  Society Publishing, 2004.

\bibitem{realirrep}
N.~Iwahori, ``{On real irreducible representations of Lie algebras.},'' {\em
  Nagoya Math. J.} {\bfseries 14} (1959) 59--83.

\bibitem{schur}
D.~Ramakrishnan and R.~Valenza, {\em Fourier Analysis on Number Fields}.
\newblock Graduate Texts in Mathematics. Springer, 1999.

\bibitem{compactlie}
T.~Br{\"o}cker and T.~Dieck, {\em Representations of Compact Lie Groups}.
\newblock Graduate Texts in Mathematics. Springer, 1985.

\bibitem{spinorsrealhilbert}
R.~Plymen and P.~Robinson, {\em Spinors in Hilbert Space}.
\newblock Cambridge Tracts in Mathematics. Cambridge University Press, 1994.

\bibitem{pin}
M.~{Berg}, C.~{De Witt-Morette}, S.~{Gwo}, and E.~{Kramer}, ``{The Pin Groups
  in Physics},'' \href{http://dx.doi.org/10.1142/S0129055X01000922}{{\em
  Reviews in Mathematical Physics} {\bfseries 13} (2001) 953--1034},
  \href{http://arxiv.org/abs/math-ph/0012006}{{\ttfamily math-ph/0012006}}.

\bibitem{todorov}
I.~{Todorov}, ``{Clifford Algebras and Spinors},'' {\em ArXiv e-prints} (June,
  2011) , \href{http://arxiv.org/abs/1106.3197}{{\ttfamily arXiv:1106.3197
  [math-ph]}}.

\bibitem{irreducible}
A.~Aste, ``{A direct road to Majorana fields},''
  \href{http://dx.doi.org/10.3390/sym2041776}{{\em Symmetry} {\bfseries 2}
  (2010) 1776--1809}, \href{http://arxiv.org/abs/0806.1690}{{\ttfamily
  arXiv:0806.1690 [hep-th]}}.
See section 5 on the Majorana spinor irrep of SL(2,C).
%%CITATION = ARXIV:0806.1690;%%.

\bibitem{pal}
P.~B. {Pal}, ``{Dirac, Majorana, and Weyl fermions},''
  \href{http://dx.doi.org/10.1119/1.3549729}{{\em American Journal of Physics}
  {\bfseries 79} (May, 2011) 485--498},
  \href{http://arxiv.org/abs/1006.1718}{{\ttfamily arXiv:1006.1718 [hep-ph]}}.

\bibitem{dreiner}
H.~K. Dreiner, H.~E. Haber, and S.~P. Martin, ``{Two-component spinor
  techniques and Feynman rules for quantum field theory and supersymmetry},''
  \href{http://dx.doi.org/10.1016/j.physrep.2010.05.002}{{\em Phys.Rept.}
  {\bfseries 494} (2010) 1--196},
\href{http://arxiv.org/abs/0812.1594}{{\ttfamily arXiv:0812.1594 [hep-ph]}}.
%%CITATION = ARXIV:0812.1594;%%.

\bibitem{wignertheorem}
L.~Moln\'{a}r, ``An algebraic approach to wigner's unitary-antiunitary
  theorem,'' \href{http://dx.doi.org/10.1017/S144678870003593X}{{\em Journal of
  the Australian Mathematical Society (Series A)} {\bfseries 65} (12, 1998)
  354--369}.

\bibitem{Dirac}
P.~A.~M. Dirac, ``The quantum theory of the electron,'' {\em Proc. R. Soc.
  Lond. A} {\bfseries 117} no.~778, (1928) 610--624.

\bibitem{BW}
V.~Bargmann and E.~P. Wigner, ``Group theoretical discussion of relativistic
  wave equations,'' {\em Proceedings of the National Academy of Sciences}
  {\bfseries 34} no.~5, (1948) 211--223,
  \href{http://arxiv.org/abs/http://www.pnas.org/content/34/5/211.full.pdf+html}{{\ttfamily
  http://www.pnas.org/content/34/5/211.full.pdf+html}}.

\bibitem{allspins}
S.-Z. Huang, T.-N. Ruan, N.~Wu, and Z.-P. Zheng, ``{Wavefunctions for particles
  with arbitrary spin},''
{\em Commun.Theor.Phys.} {\bfseries 37} (2002) 63--74.
%%CITATION = CTPMD,37,63;%%.

\bibitem{revfoldy}
J.~P. Costella and B.~H. McKellar, ``{The Foldy-Wouthuysen transformation},''
  \href{http://dx.doi.org/10.1119/1.18017}{{\em Am.J.Phys.} {\bfseries 63}
  (1995) 1119},
\href{http://arxiv.org/abs/hep-ph/9503416}{{\ttfamily arXiv:hep-ph/9503416
  [hep-ph]}}.
%%CITATION = HEP-PH/9503416;%%.

\bibitem{foldy}
L.~L. Foldy and S.~A. Wouthuysen, ``On the dirac theory of spin 1/2 particles
  and its non-relativistic limit,''
  \href{http://dx.doi.org/10.1103/PhysRev.78.29}{{\em Phys. Rev.} {\bfseries
  78} (Apr, 1950) 29--36}.

\bibitem{squareroot}
E.~Hitzer, J.~Helmstetter, and R.~Ab{\l}amowicz,
  \href{http://dx.doi.org/10.1007/978-3-0348-0603-9_7}{``Square roots of -1 in
  real clifford algebras,''} in {\em Quaternion and Clifford Fourier Transforms
  and Wavelets}, E.~Hitzer and S.~J. Sangwine, eds., Trends in Mathematics,
  pp.~123--153.
\newblock Springer Basel, 2013.

\bibitem{clifford}
H.~De~Bie, ``Clifford algebras, fourier transforms, and quantum mechanics,''
  \href{http://dx.doi.org/10.1002/mma.2679}{{\em Mathematical Methods in the
  Applied Sciences} {\bfseries 35} no.~18, (2012) 2198--2228}.

\bibitem{image}
T.~Batard, M.~Berthier, and C.~Saint-Jean,
  \href{http://dx.doi.org/10.1007/978-1-84996-108-0_8}{``Clifford-fourier
  transform for color image processing,''} in {\em Geometric Algebra
  Computing}, E.~Bayro-Corrochano and G.~Scheuermann, eds., pp.~135--162.
\newblock Springer London, 2010.

\bibitem{negativeprobs}
P.~A.~M. {Dirac}, ``{Bakerian Lecture. The Physical Interpretation of Quantum
  Mechanics},'' \href{http://dx.doi.org/10.1098/rspa.1942.0023}{{\em Royal
  Society of London Proceedings Series A} {\bfseries 180} (Mar., 1942) 1--40}.

\bibitem{desitter}
S.~De~Bi\`evre and J.~Renaud, ``Massless gupta-bleuler vacuum on the
  (1+1)-dimensional de sitter space-time,''
  \href{http://dx.doi.org/10.1103/PhysRevD.57.6230}{{\em Phys. Rev. D}
  {\bfseries 57} (May, 1998) 6230--6241}.

\bibitem{krein}
B.~Forghan, M.~Takook, and A.~Zarei, ``{Krein Regularization of QED},''
  \href{http://dx.doi.org/10.1016/j.aop.2012.06.003}{{\em Annals Phys.}
  {\bfseries 327} (2012) 2388--2401},
\href{http://arxiv.org/abs/1206.2796}{{\ttfamily arXiv:1206.2796 [hep-ph]}}.
%%CITATION = ARXIV:1206.2796;%%.

\bibitem{representations}
A.~Kirillov and E.~Hewitt, {\em Elements of the Theory of Representations}.
\newblock Grundlehren Der Mathematischen Wissenschaften. Springer London,
  Limited, 2011.

\bibitem{commutingring}
H.-D. {Doebner}, P.~{{\v S}{\v t}ov{\'{\i}}{\v c}ek}, and J.~{Tolar},
  ``{Quantization of Kinematics on Configuration Manifolds},''
  \href{http://dx.doi.org/10.1142/S0129055X0100079X}{{\em Reviews in
  Mathematical Physics} {\bfseries 13} (2001) 799--845},
  \href{http://arxiv.org/abs/math-ph/0104013}{{\ttfamily math-ph/0104013}}.

\bibitem{squareimprimitivity}
G.~Cassinelli, E.~De~Vito, and A.~Levrero, ``Square-integrable imprimitivity
  systems,'' \href{http://dx.doi.org/10.1063/1.533382}{{\em Journal of
  Mathematical Physics} {\bfseries 41} no.~7, (2000) 4833--4859}.

\bibitem{diracmatrices}
R.~H. {Good}, ``{Properties of the Dirac Matrices},''
  \href{http://dx.doi.org/10.1103/RevModPhys.27.187}{{\em Reviews of Modern
  Physics} {\bfseries 27} (Apr., 1955) 187--211}.

\bibitem{realgamma}
L.~O'Raifeartaigh, ``The dirac matrices and the signature of the metric
  tensor,'' \href{http://dx.doi.org/10.5169/seals-113184}{{\em Helvetica
  Physica Acta} {\bfseries 34} (1961) 675--698}.

\bibitem{bessel}
G.~S. {Adkins}, ``{Three-dimensional Fourier transforms, integrals of spherical
  Bessel functions, and novel delta function identities},'' {\em ArXiv
  e-prints} (Feb., 2013) , \href{http://arxiv.org/abs/1302.1830}{{\ttfamily
  arXiv:1302.1830 [math-ph]}}.

\bibitem{harmonics}
R.~Szmytkowski, ``Recurrence and differential relations for spherical
  spinors,'' \href{http://dx.doi.org/10.1007/s10910-006-9110-0}{{\em Journal of
  Mathematical Chemistry} {\bfseries 42} no.~3, (2007) 397--413}.

\bibitem{yang}
C.~N. Yang, \href{http://dx.doi.org/10.1017/CBO9780511564253}{``{Square root of
  minus one, complex phases and Erwin Schr\"{o}dinger},''} in {\em
  {Schr\"{o}dinger}}, C.~W. Kilmister, ed.
\newblock Cambridge University Press, 1987.

\bibitem{quaternionic}
S.~Adler, {\em Quaternionic Quantum Mechanics and Quantum Fields}.
\newblock Oxford University Press, USA, 1995.

\bibitem{complexstructures}
A.~{Trautman}, ``{On Complex Structures in Physics},'' in {\em On Einstein's
  Path: essays in honor of Engelbert Schucking}, A.~{Harvey}, ed., p.~487.
\newblock 1999.
\newblock \href{http://arxiv.org/abs/math-ph/9809022}{{\ttfamily
  math-ph/9809022}}.

\bibitem{gibbons}
G.~W. {Gibbons}, ``{The emergent nature of time and the complex numbers in
  quantum cosmology},'' {\em ArXiv e-prints} (Nov., 2011) ,
  \href{http://arxiv.org/abs/1111.0457}{{\ttfamily arXiv:1111.0457 [gr-qc]}}.

\bibitem{covariance}
S.~Farkas, Z.~Kurucz, and M.~Weiner, ``Poincaré covariance of relativistic
  quantum position,'' \href{http://dx.doi.org/10.1023/A:1013221616586}{{\em
  International Journal of Theoretical Physics} {\bfseries 41} no.~1, (2002)
  79--88}.

\end{thebibliography}\endgroup
\end{document}